\documentclass[a4paper,UKenglish,cleveref,autoref,thm-restate]{lipics-v2021}

\usepackage{amsmath,amssymb,hyperref,mathrsfs}

\newcommand{\ccsp}{\textup{\textsf{\#CSP}}}
\newcommand{\hol}{\textup{\textsf{Holant}}}
\newcommand{\plhol}{\textup{\textsf{Pl-Holant}}}
\newcommand{\symhol}{\textup{\textsf{SymB-Holant}}}
\newcommand{\symplhol}{\textup{\textsf{SymB-Pl-Holant}}}
\newcommand{\gds}{\textup{\textsf{\#GDS}}}
\newcommand{\plgds}{\textup{\textsf{Pl-\#GDS}}}
\newcommand{\symgds}{\textup{\textsf{SymB-\#GDS}}}
\newcommand{\symplgds}{\textup{\textsf{SymB-Pl-\#GDS}}}
\newcommand{\prob}[1]{\textup{\textsc{#1}}}

\newcommand{\var}[1]{\text{Var}(#1)}
\newcommand{\ari}[1]{\text{arity}(#1)}

\newcommand{\T}{\mathrm{T}}
\newcommand{\sg}{\mathrm{\Omega}}
\newcommand{\pin}{\mathrm{\Delta}}

\title{The Counting General Dominating Set Framework}

\author{Jiayi Zheng}{Key Laboratory of System Software (Chinese Academy of Sciences), Institute of Software, Chinese Academy of Sciences; University of Chinese Academy of Sciences, Beijing, China}{zhengjy@ios.ac.cn}{https://orcid.org/0009-0005-5728-3616}{}
\author{Boning Meng}{Key Laboratory of System Software (Chinese Academy of Sciences), Institute of Software, Chinese Academy of Sciences; University of Chinese Academy of Sciences, Beijing, China}{mengbn@ios.ac.cn}{https://orcid.org/0009-0006-0088-1639}{}

\authorrunning{Jiayi Zheng and Boning Meng}

\Copyright{Jiayi Zheng and Boning Meng}

\keywords{Counting complexity, Dominating set, Holant, Gadget construction}

\funding{Both authors are supported by National Key R\&D Program of China (2023YFA1009500), NSFC 62532014 and NSFC 62272448.}

\acknowledgements{We are deeply grateful to our PhD advisor Mingji Xia for his guidance and support throughout this work. He provided generous direction and advice on refining gadget construction in $\gds$ and on shaping the research trajectory from Theorem \ref{thm:times} to $\hol^k$. Despite his contributions, he graciously declined to be a coauthor. We would like to thank Juqiu Wang for insightful discussions that helped us consolidate the framework of $\gds$. We would also like to thank an anonymous reviewer for pointing out an issue in the proof of Theorem \ref{thm:mainds} in a preliminary version of this paper, as well as other anonymous reviewers for comments.}

\nolinenumbers

\begin{document}

\maketitle

\begin{abstract}
We introduce a new framework of counting problems called \textsf{\#GDS} that encompasses \textsc{\#$(\sigma, \rho)$-Set}, a class of domination-type problems that includes counting dominating sets and counting total dominating sets. We explore the intricate relation between \textsf{\#GDS} and the well-known \textsf{Holant}. We adapt the technique of gadget construction of \textsf{Holant} to the \textsf{\#GDS} framework; using this technique, we prove the \#P-completeness of counting dominating sets for 3-regular planar bipartite simple graphs. Through a generalization of a \textsf{Holant} dichotomy, and a special reduction method via symmetric bipartite graphs, we also prove the \#P-completeness of counting total dominating sets for the same graph class.
\end{abstract}

\section{Introduction}
\subsection{History of Domination-Type Problems}
Counting problems are studied extensively in the field of computational complexity. Whereas decision problems study whether a problem has a solution, counting problems aim to determine how many solutions a problem has. The $\hol$ framework (see Definition \ref{def:Hol}) has been proposed to encompass a wide range of counting problems, and is a generalization of $\ccsp$ (counting constraint satisfaction problems). The $\hol$ framework can express many well-known problems such as counting perfect matchings or vertex covers of a graph. There have been comprehensive studies of $\hol$ in the Boolean domain \cite{cai2023p3em,cai2023hol33,kowalczyk2016hol23,meng2025odd,shao2020real}, and relatively limited results in higher domains \cite{cai2025geometric,liu2025combinatorial}.

There are also many problems studied separately from the frameworks. One such problem concerns the \textit{dominating set}, which is a subset of vertices such that each vertex in the graph is either in the subset or adjacent to a vertex in the subset. This structure is unique in definition due to the close mutual relationship between vertices, while edges are only used to determine which ones are adjacent. There are some books that discuss domination in detail \cite{haynes1998advanced,haynes1998fundamentals}.

The decision problem that asks if a graph $G$ has a dominating set of size $k$ has been studied extensively. As detailed in \cite{johnson1984guide} by Johnson, this problem was proved NP-complete for 3-regular planar graphs by Kikuno, Yoshida and Kakuda \cite{kikuno1980cubic} and for bipartite graphs by Bertossi \cite{bertossi1984bipartite}. Zverovich and Zverovich \cite{zverovich1995perfect} later proved NP-completeness for planar bipartite graphs with maximum degree 3.

The counting problem \prob{\#Dominating Set}, which asks for the exact number of dominating sets, has been proved \#P-hard by Hunt et al. in \cite[Theorem 4.6 (3)]{hunt1998planar} for planar bipartite graphs (in fact with maximum degree 4). Since then, complexity results have been given for some other graph classes such as chordal bipartite graphs and directed path graphs, as seen in the overviews given in \cite{kijima2011classes,lin2022directedpath,lin2022subclasses}.

The problem of dominating set has been generalized to a class of domination-type problems. Defined by Telle \cite{telle1994type}, a \textit{$(\sigma, \rho)$-set} of a graph $G$, where $\sigma$ and $\rho$ are sets of natural numbers, is a set of vertices $S$ such that for every $u\in S$ ($v \notin S$, resp.), the number of vertices in $S$ adjacent to $u$ ($v$, resp.) is an element of $\sigma$ ($\rho$, resp.). A dominating set would therefore be a $(\{0,1,\dots\}, \{1,2,\dots\})$-set. A \textit{total dominating set} is a $(\{1,2,\dots\}, \{1,2,\dots\})$-set, i.e. for every vertex in the graph, it has at least one neighbor in the set regardless of whether it is in the set itself. The analogous decision problem for total dominating set was proved NP-complete for bipartite graphs by Pfaff, Laskar and Hedetniemi \cite{pfaff1983total}. $(\sigma, \rho)$-set can also represent induced bounded-degree or $k$-regular subgraph for certain $\sigma, \rho$. For the counting problems \prob{\#$(\sigma, \rho)$-Set}, there have notably been recent algorithmic results  for bounded-treewidth graphs by Focke et al. \cite{focke2023treewidth}.

\subsection{Overview}
Inspired by both $\hol$ and \prob{\#$(\sigma, \rho)$-Set}, we develop $\gds$, a new framework of counting problems that encompasses all of the latter and is well-suited for rigorous studies of counting domination-type problems.

A $\gds$ problem is similar in form to a $\hol$ problem. In a $\hol$ problem, each vertex is assigned a signature that takes variable assignments on the incident edges as inputs. In a $\gds$ problem, each vertex is also assigned a signature, but the signature takes as inputs the vertex itself and all its neighboring vertices. The output is the sum of the weights of all variable assignments (each on all vertices), with the weight of each assignment defined as the product of the signature outputs from every vertex (see Definition \ref{def:GDS}). 

In other words, $\gds$ can be viewed as a weighted generalized version of \prob{\#$(\sigma, \rho)$-Set}. Given a \prob{\#$(\sigma, \rho)$-Set} problem, it can be expressed as a $\gds$ problem via a simple translation from $\sigma$ and $\rho$ to signatures (see Section \ref{sec:apart}).

We find some equivalences between $\gds$ and $\hol$, but the complexity of general $\gds$ problems is not known to be covered by existing Boolean $\hol$ results.

The study of the classical problem of \prob{\#Dominating Set} can be renewed by our adaptation of gadget construction, which is a reduction technique widely used in $\hol$, but not directly usable in $\gds$ and requires a redesign with a higher level of intricacy. Using this technique to keep track of precise quantitative properties of specific gadgets, and combined with methods in Galois theory, we are able to apply the restriction of 3-regularity to \prob{\#Dominating Set}:

\begin{theorem}\label{thm:mainds}
\prob{\#3RPBS-Dominating Set},\footnote{We use \prob{$k$R-}, \prob{P-} or \prob{Pl-}, \prob{B-}, \prob{S-} to denote the restriction of the problem to $k$-regular, planar, bipartite and simple graphs respectively throughout the paper.} the problem of counting dominating sets for 3-regular planar bipartite simple\footnote{We emphasize "simple", because allowing multiple edges is somewhat inelegant for domination-type problems, as multiple edges are functionally no different than a single edge and may be perceived as a workaround in proofs.} (3RPBS) graphs, is \#P-complete.
\end{theorem}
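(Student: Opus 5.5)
Membership in \#P is immediate, so the work is hardness. The plan is to reduce from a known \#P-hard counting problem on 3-regular planar bipartite graphs. Throughout, \prob{\#3RPBS-Dominating Set} is viewed as the $\symplgds$-type problem in which every vertex carries the signature ``the closed neighbourhood contains at least one $1$'', which is the $\gds$ expression of dominating sets from Section~\ref{sec:apart}.

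First, I would choose a hard source problem that lives naturally on 3-regular planar bipartite graphs. A good candidate is a weighted variant of counting dominating sets in which one side of the bipartition (or a designated set of vertices) carries a weight $\lambda$ per selected vertex. Equivalently, one can take a $\plhol$ problem on $2$-$3$ regular bipartite graphs with a symmetric signature whose hardness is known from the Boolean $\hol$ dichotomies. Translating between the two frameworks uses the equivalences between $\gds$ and $\hol$. The aim is a source problem of the form ``$\gds$ with signature $f_\lambda$ on 3-regular planar bipartite graphs'' that is \#P-hard for all but finitely many $\lambda$.

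Second comes the main step: realize the weights by polynomial interpolation using $\gds$ gadgets. I would replace each vertex, or each edge, of the source graph by a planar bipartite subcubic gadget, and then complete it to 3-regularity by attaching fixed simple 3-regular bipartite pendant structures. Two candidates for these pendants are a cube with one edge subdivided, or a chain of $K_{3,3}$ minus an edge. The dominating-set count then decomposes according to the states of the boundary vertices: whether each is chosen, and whether it is already dominated from inside. This state tracking is the $\gds$-adapted gadget construction. Iterating a gadget $k$ times gives a transfer matrix $M$, so the count equals $\sum_i c_i \alpha_i^k$, where the $\alpha_i$ are eigenvalues of $M$ and the $c_i$ are the unknown source-problem quantities grouped by state counts. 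Computing the counts for $k=1,\dots,\mathrm{poly}(n)$ yields a Vandermonde system. From it we recover the weighted source count at enough distinct values of $\lambda$, and interpolate to a hard point.

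The obstacle I expect is proving that this Vandermonde system is nonsingular. That requires the eigenvalues of the transfer matrix to be pairwise distinct and not to have ratios that are roots of unity, and likewise for the relevant products of powers of eigenvalues. Here I would use Galois theory. I would compute the characteristic polynomial of the gadget matrix, show it is irreducible over $\mathbb{Q}$ (for example by reduction modulo a small prime), and show that its Galois group is the full symmetric group. Then, by the standard lemma, no two roots can have a root-of-unity ratio unless their absolute values coincide; I would rule that out by checking for a dominant real root. If this fails for the first gadget, I would try a different small gadget, searching by computer, until the conditions hold. Simplicity and planarity are maintained throughout, since all gadgets are planar bipartite and glued along single vertices or edges without creating parallel edges.
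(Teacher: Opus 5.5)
There is a genuine gap. Your toolkit is the right one: tracking vertex states in gadgetures, a linear recurrence under iteration, interpolation, and an $S_n$ Galois-group argument are exactly what the paper uses. But the reduction itself is never specified, and the unspecified parts are where the difficulty lies.

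\textbf{The source problem.} It has no hardness you can invoke. A vertex-weighted dominating-set count on 3-regular planar bipartite graphs is not a known \#P-hard problem, and proving it hard for generic $\lambda$ is essentially the theorem itself. The $\gds$--$\hol$ translation does not supply one either. Proposition~\ref{prop:holtogds} produces $\gds(\mathcal{G}\mid[1,1,1,1])$, which has trivial signatures on degree-2 vertices, not domination constraints on a 3-regular graph.

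\textbf{The encoding.} More seriously, you never explain why some part of the interpolated data equals the source count. Writing the count as $\sum_i c_i\alpha_i^k$, with $c_i$ ``the source quantities grouped by state counts'', presupposes a gadget whose boundary states separate cleanly. One specific subfamily of coefficients must encode the source problem exactly. Producing such a gadget is the heart of the proof, and ``search by computer until it works'' does not replace it.

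The paper reduces from \prob{\#3RP-Matching}, i.e.\ $\plhol([0,0,1,1])$ after flipping $0$ and $1$, on a 3-regular planar $G$. Each vertex becomes a hexagon, and the connecting edges become ladder gadgets $H_s$. The 16 gadgeture entries of $H_s$ collapse into 5 classes, and $g_s(0,0,0,0)$ and $g_s(1,1,0,0)$ each form a class of their own. So the sum of the coefficients $c_{i_1,0,0,0,i_5}$ counts exactly the configurations in which both external vertices of every ladder are equal and its bridging vertices are unselected. In those configurations each hexagon vertex can be dominated only by itself or by its one non-ladder neighbour. All six are dominated iff at least two of the three equal pairs are $1$, which is precisely the constraint $[0,0,1,1]$. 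No weights and no second interpolation in $\lambda$ are needed.

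\textbf{Smaller points.}
With many gadget copies, the count is a multivariate polynomial in the gadgeture entries. You therefore need condition~3 of Lemma~\ref{lem:interpolation}: $\prod_j\lambda_j^{d_j}\neq 1$ for every nonzero integer vector $(d_j)$ with $\sum_j d_j=0$. Pairwise ratios that are not roots of unity are not enough. With Galois group $S_n$ this does reduce to the pairwise case, by applying a transposition and dividing as in Lemma~\ref{lem:5times5}, but you must make that step.

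Your candidate pendant, a cube with one edge subdivided, is not bipartite. Every cube edge lies on a $4$-cycle, which becomes a $5$-cycle after subdivision.

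Finally, $\symplgds$ denotes the restriction to symmetric bipartite graphs, which is not the problem in the statement.
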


Restricting vertex degrees has been observed for closely related problems such as counting fixed points of network automata \cite{tosic2010automata}, but has not been directly and formally studied for counting domination-type problems to our knowledge. This may serve as a cornerstone for future dichotomy results.

There also exists a special category of \textit{uniform} $\gds$ signatures (see Section \ref{sec:apart}) which covers the case of $\sigma=\rho$ in \prob{\#$(\sigma, \rho)$-Set}, including \prob{\#Total Dominating Set}. We find a connection between such problems and a "powered" version of $\hol$, and derive a limited dichotomy (see \textbf{Corollary \ref{cor:gdsuni}}). We then take a different route and make use of symmetric bipartite graphs to establish a reduction, leading to the following theorem:

\begin{theorem}\label{thm:maintds}
\prob{\#3RPBS-Total Dominating Set}, the problem of counting total dominating sets for 3-regular planar bipartite simple (3RPBS) graphs, is \#P-complete.
\end{theorem}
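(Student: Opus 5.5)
Membership in \#P is immediate, so the work is hardness. I would not build gadgets for total domination directly. Instead I would exploit the special structure of the uniform case $\sigma=\rho=\{1,2,\dots\}$ together with a reduction through symmetric bipartite graphs, and reduce from \prob{\#3RPBS-Dominating Set}, which is \#P-complete by Theorem~\ref{thm:mainds}.

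The key observation concerns the bipartite double cover. For a graph $G$, let $G\times K_2$ have vertex set $V\times\{0,1\}$, with $(u,0)$ adjacent to $(v,1)$ exactly when $uv\in E$ or $u=v$. A set $S\subseteq V$ totally dominates $G\times K_2$ (with the closed-neighbourhood adjacency) if and only if $S\cap(V\times\{1\})$ dominates $G$ from the side-$0$ perspective, and symmetrically for the other side. Hence total dominating sets of this graph correspond to ordered pairs of dominating sets of $G$, and their number is $\mathrm{DS}(G)^2$. Since $\mathrm{DS}(G)\ge 0$, taking a square root recovers $\mathrm{DS}(G)$.

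The trouble is that this construction raises the degree to $4$ and need not preserve planarity. So I would instead work with bipartite graphs that are symmetric in the sense of the paper's ``symmetric bipartite graphs''. Take a 3RPBS graph $H$ with a planar involution swapping the two colour classes. Total domination on such an $H$ splits into two independent domination-type conditions, one for each side. In $\gds$ language, a uniform signature on a bipartite graph factors into two $\gds$ instances, each with a ``neighbours-only'' signature.

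The concrete plan is therefore:
\begin{enumerate}
\item Show that counting total dominating sets on a 3RPBS graph $H$ with parts $A,B$ equals $N_{A\to B}\cdot N_{B\to A}$. Here $N_{A\to B}$ is the number of $S\subseteq A$ such that every vertex of $B$ has a neighbour in $S$. This is a hypergraph/set-cover-type count, i.e.\ a Holant-like problem in which vertices of $A$ carry the identity signature and vertices of $B$ carry $\mathrm{OR}_3$.
\item Given a 3RPBS instance $H$, build a planar 3-regular bipartite $H'$ whose two sides are isomorphic via a reflection, so that the count for $H'$ equals $N(H)^2$.
\item Prove \#P-hardness of the one-sided counting problem $N_{A\to B}$ on 3-regular planar bipartite graphs. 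This is $\plhol(\mathrm{OR}_3\mid =_3)$-like, i.e.\ counting planar monotone 3-SAT/set covers with each variable in three clauses. It can be obtained from known Holant dichotomies, for instance the planar \#2-3 (or \#3-3) bipartite Holant results (\cite{kowalczyk2016hol23,cai2023hol33}). One checks that $[0,1,1,1]$ against $=_3$ is not holographically tractable, and uses the paper's generalized dichotomy to handle the planar case. Simplicity must be ensured by a gadget that replaces multi-edges.
\end{enumerate}

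The main obstacle is the middle step: realizing a planar 3-regular simple bipartite graph whose total-domination count cleanly factors as a square of a known hard quantity. The step I would try first avoids needing the square at all. Given an instance $(A,B)$ for $N_{A\to B}$, I would glue it to a fixed planar 3-regular bipartite partner instance, mirrored across a line, so that the reverse count $N_{B\to A}$ is a fixed, easily computed nonzero constant. If that fails, I would fall back to the reflection construction and take square roots.
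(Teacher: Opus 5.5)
Your Step~1 is correct (it is Theorem~\ref{thm:times} specialised to the uniform signature $[0,1,1,1]$), and symmetric bipartite graphs are the right class to restrict to. But the proposal has a genuine gap at Step~2, which is where all the difficulty lies, and neither route you suggest gets through it. Hardness of the one-sided count $N_{A\to B}$, i.e.\ $\plhol([0,1,1,1]\mid =_3)$, on general planar graphs (your Step~3) does not transfer to total domination; for that plain Holant problem the original planar dichotomy already suffices. The obvious way to symmetrise an instance $H$ is to add its side-swapped copy $\overline{H}$. Then $N_{A\to B}(H\sqcup\overline{H})=N_{A\to B}(H)\,N_{B\to A}(H)$, which is just the total-domination count of $H$ again, so the reduction is circular. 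What you actually need is hardness of $N_{A\to B}$ \emph{restricted to} symmetric 3-regular planar bipartite simple graphs, and no known dichotomy gives that.

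Your fallback, gluing a fixed mirrored partner so that $N_{B\to A}$ becomes a known constant, cannot work either. $N_{B\to A}$ is multiplicative over disjoint unions, so the unknown factor $N_{B\to A}(H)$ survives. Any non-disjoint gluing changes both counts at once, since they are hitting-set counts of a hypergraph and of its dual, both determined by the same edge set. The $G\times K_2$ observation is correct but, as you note, leaves the target class, so you never actually reduce from \prob{\#Dominating Set}.

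The missing idea is to stay inside the symmetric class and reduce there by interpolation. The paper writes $[0,1,1,1]=(=_3)M^{\otimes 3}$ with $M=\begin{pmatrix}-1&0\\1&1\end{pmatrix}$. It replaces every $M$ in a symmetric tripartite instance by the recursive gadget $H_s$, which keeps the instance symmetric, planar and 3-regular, and simple for $s\ge 1$. Interpolating the coefficients of $\bigl(\sum_{i+j=m}c_{i,j}(\lambda_1^i\lambda_2^j)^s\bigr)^2$ realises the \emph{symmetric} matrix $M'=\begin{pmatrix}11&-5\\-5&-3\end{pmatrix}$, i.e.\ $f'=[603,-340,115,-76]$. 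Because $M'$ is symmetric, $\hol_G(f'\mid =_3)=\hol_{G'}(f'\mid =_3)$. So for $f'$ the doubling $G^2$ genuinely produces a square, and you get $\plhol^4(f'\mid =_3)\le_T\symplhol^2(f'\mid =_3)\le_T\symplhol^2([0,1,1,1]\mid =_3)$. Since $f'$ has negative entries, squaring destroys sign information. That is exactly where the generalized $\hol^k$ dichotomy (Theorem~\ref{thm:holk}, via Lemma~\ref{lem:symholk}) is needed, not for the non-negative one-sided problem as your Step~3 suggests.
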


\subsection{Organization}
In Section \ref{sec:prelim}, we state preliminary knowledge. In Section \ref{sec:GDS}, we formally propose the $\gds$ framework. In Section \ref{sec:gadget}, we formulate gadget construction under $\gds$ and prove Theorem~\ref{thm:mainds}. In Section \ref{sec:uniform}, we consider problems with specific uniform signatures and prove Theorem \ref{thm:maintds}.

\section{Preliminaries}\label{sec:prelim}
We use $\leq_T$ and $\equiv_T$ for polynomial-time Turing reductions and equivalences. 

\subsection{Domains and Signatures}
A \textit{Boolean domain} is the set $\{0,1\}$, or $\mathbb{Z}_2$ for simplicity, and is the default domain throughout the paper unless specified otherwise. A \textit{complex-valued Boolean signature} $f$ with $r$ variables is a function that maps elements from $\{0,1\}^r$ to $\mathbb{C}$. For any binary string $\alpha$ of length $r$, the output of $f$ on input $\alpha$ is written as $f(\alpha)$, or $f_{\alpha}$ if unambiguous. The set of variables associated with $f$ is denoted $\var{f}$, and the number of these variables, referred to as the arity of $f$, is written as $\ari{f}$.

A \textit{domain of size 4} refers to $\{0,1,2,3\}$, or $\mathbb{Z}_4$. Signatures are defined accordingly.

For Boolean signatures, we can also represent a \textit{symmetric signature} $f$ of arity $r$ as $[f_0,f_1,\ldots,f_r]_{r}$, where each $f_i$ corresponds to the value of $f$ on all inputs with Hamming weight $i$. The subscript $r$ may be dropped when the arity is clear from context.

Let $\mathcal{EQ}$ represent the set of \textit{equality signatures}, defined by $\mathcal{EQ}=\{=_1,=_2,\ldots,=_r,\ldots\}$, where $=_r$ refers to the signature $[1,0,\ldots,0,1]_r$ of arity $r$. That is, $=_r$ outputs 1 if all bits in the input are the same \textemdash\ either all 0's or all 1's \textemdash\ and 0 otherwise.

Signatures can be expressed in matrix form. Given a signature $f: \mathbb{Z}_d^r\to\mathbb{C}$ with ordered variables $(x_1,\ldots,x_r)$, its \textit{signature matrix} with parameter $l$ is a matrix $M(f)\in\mathbb{C}^{d^l\times d^{r-l}}$. The rows of this matrix correspond to all possible assignments to the first $l$ variables $(x_1, ..., x_l)$, while the columns correspond to assignments to the remaining $r-l$ variables $(x_{l+1}, ..., x_r)$. Each entry in the matrix gives the value of $f$ on the input formed by combining the assignments that correspond to the row index and the column index. It is typical to choose $l=r/2$ when the arity $r$ is even. This matrix is typically denoted as $M_{x_1\cdots x_l,x_{l+1}\cdots x_r}(f)$ or simply $M_f$ when no confusion arises. For instance, a Boolean binary signature $f$ is usually written as:
\[
M_{f}=\begin{pmatrix}
    f_{00} & f_{01}\\
    f_{10} & f_{11}
\end{pmatrix}.
\]
Throughout this paper, we may also directly use a matrix to refer to a signature for convenience.

\subsection{The \texorpdfstring{$\hol$}{Holant} Framework}\label{sec:holant}
In this subsection, we describe the $\hol$ framework, which is detailed in \cite{cai2017complexity}.

Let $\mathcal{F}$ be a fixed finite set of Boolean signatures. A \textit{signature grid} over $\mathcal{F}$, denoted by $\sg(G,\pi)$, consists of a graph $G=(V,E)$ and a mapping $\pi$. $G$ may include parallel edges and self-loops; $\pi$ assigns to each vertex $v\in V$ a signature $f_v\in\mathcal{F}$, along with a specified ordering of the edges incident to $v$. The arity of the signature $f_v$ equals the degree of vertex $v$, with each incident edge representing one of its input variables. For any binary assignment $\sigma$ to all edges, the evaluation of the signature grid is defined as the product $\prod_{v\in V}f_v(\sigma|_{E(v)})$, where $\sigma|_{E(v)}$ denotes the restriction of $\sigma$ to the edges incident to vertex $v$.

\begin{definition}[$\hol$ problems \cite{cai2017complexity}]\label{def:Hol}
A $\hol$ problem, $\hol(\mathcal{F})$, parameterized by a set $\mathcal{F}$ of Boolean complex-valued signatures, is defined as follows: Given an instance that is a signature grid $\sg(G, \pi)$ over $\mathcal{F}$, the output is the partition function of $\sg$,
\[
\hol_\sg = \sum_{\sigma:E\rightarrow\{0,1\}}\prod_{v \in V}f_v(\sigma|_{E(v)}).
\]
The bipartite $\hol$ problem $\hol(\mathcal{F}\mid\mathcal{G})$ is another form of $\hol$ problem where $G=(U, V, E)$ is a bipartite graph, with vertices in $U$ and $V$ assigned signatures from $\mathcal{F}$ and $\mathcal{G}$ respectively. We denote the left-hand side and right-hand side briefly as the LHS and RHS respectively.

A domain-4 $\hol$ problem, $\hol_4(\mathcal{F})$ is defined over domain-4 signatures, and calculates the sum over all edge assignments: $\sigma: E\rightarrow\{0,1,2,3\}$.
\end{definition}

We can write a single signature directly without using braces. For example, \prob{\#3R-Perfect Matching} and \prob{\#3R-Matching} can be expressed as $\hol([0,1,0,0])$ and $\hol([1,1,0,0])$, while \prob{\#Vertex Cover} can be expressed as $\hol(\mathcal{EQ}\mid [0,1,1])$, where an equality signature is put over every vertex and $[0,1,1]$ is put over a new vertex added on every edge to require that at least one of the edge's incident vertices is assigned 1. For any $\mathcal{F}$, $\hol(\mathcal{F})$ is also equivalent to $\hol(=_2\mid\mathcal{F})$.

\subsection{Gadget Construction in \texorpdfstring{$\hol$}{Holant}}\label{app:holgadget}
In this subsection, we describe the form of gadget construction widely used as a critical reduction technique in the $\hol$ framework.

Let $\mathcal{F}$ denote a set of signatures. Similar to a signature grid, an $\mathcal{F}$-gate can be written as $\sg(G, \pi)$, but here $G=(V, E\cup D)$, where the edges consist of internal edges $E$, each of which is already connected to two vertices in $V$, and dangling edges $D$, each of which is only connected to a vertex on one side and left dangling on the other.

Let $\lvert E\rvert =n$ and $\lvert D\rvert =m$, with internal and dangling edges corresponding to $\{x_1, \dots, x_n\}$ and $\{y_1, \dots, y_m\}$ respectively. The $\mathcal{F}$-gate induces a signature $f:\{0,1\}^m$ defined by:
\[
f(y_1, \dots, y_m)=\sum_{\sigma:E\rightarrow\{0,1\}}\prod_{v\in V}f_v\left(\hat{\sigma}|_{E(v)}\right)
\]
where $\hat{\sigma}:E\cup D\rightarrow\{0,1\}$ extends $\sigma$ by incorporating the assignment to dangling edges.

We say a signature is realizable from $\mathcal{F}$ if it can be represented by an $\mathcal{F}$-gate. For any signature set $\mathcal{F}$, and signature $f$ which is realizable from $\mathcal{F}$, it is shown in \cite[Lemma 1.3]{cai2017complexity} that $\hol(\mathcal{F})\equiv_T\hol(\mathcal{F}\cup\{f\})$, which provides a powerful tool to build reductions.

By expressing signatures in matrix form, we find that certain constructions directly correspond to matrix operations.

Suppose $f$ and $g$ are two signatures, with $\var{f}=\{x_1, \dots, x_n\}$ and $\var{g}=\{y_1, \dots, y_m\}$. If we connect the dangling edges $\{x_{n-l+1}, \dots, x_n\}$ of $f$ to $\{y_1, \dots, y_l\}$ of $g$, we get the signature $h_1$ which satisfies:
\[
M_{h_1}=M_{x_1\dots x_{n-l},y_{l+1}\dots y_m}=M_{x_1\dots x_{n-l},x_{n-l+1}\dots x_n}\cdot M_{y_1\dots y_l,y_{l+1}\dots y_m}=M_f\cdot M_g.
\]
If we don't connect any edges, and just draw $f$ on top of $g$, with $f$ having $p$ edges to the left and $n-p$ edges to the right, and $g$ having $q$ edges to the left and $m-q$ edges to the right, we get the signature $h_2$ which satisfies:
\[
M_{h_2}=M_{x_1\dots x_py_1\dots y_q,x_{p+1}\dots x_ny_{q+1}\dots y_m}=M_{x_1\dots x_p,x_{p+1}\dots x_n}\otimes M_{y_1\dots y_q,y_{q+1}\dots y_m}=M_f'\otimes M_g'
\]
where $\otimes$ is the symbol for tensor product, and $M_f'$ and $M_g'$ are alternative forms of the same signatures $f$ and $g$.

\subsection{Polynomial Interpolation}\label{app:interpol}
The idea of polynomial interpolation is widely used in the field of counting complexity to retrieve the value of coefficients in a polynomial.

A basic form is that if we can obtain the value of the polynomial $\sum_{i=0}^nc_ix^i$ for $n+1$ different assignments to $x$, then we can establish a system of linear equations, where the coefficients are $x^i$ and the variables are $c_i$. Since the coefficient matrix of the system is a full-rank Vandermonde matrix, there is a unique solution for every $c_i$.

Suppose we can now obtain the value of the polynomial
\[
\sum_{i_1+i_2+\dots+i_k=n}c_{i_1,i_2,\dots,i_k}x_{s,1}^{i_1}x_{s,2}^{i_2}\dots x_{s,k}^{i_k}
\]
for different assignments to $(x_{s,1},x_{s,2},\dots,x_{s,k})$ which follow the linear recurrence
\[
\begin{pmatrix}
    x_{s,1}\\
    x_{s,2}\\
    \vdots\\
    x_{s,k}\\
\end{pmatrix}=A\begin{pmatrix}
    x_{s-1,1}\\
    x_{s-1,2}\\
    \vdots\\
    x_{s-1,k}\\
\end{pmatrix}
\]
where $A$ is the same $k\times k$ matrix for all $s$. Then we have the following lemma. It is proved for $k=3$ by Cai, Lu and Xia \cite{cai2012interpolation}, but the proof can be trivially generalized to any $k$.

\begin{lemma}\label{lem:interpolation}
Suppose the linear recurrence satisfies:
\begin{enumerate}
    \item $det(A)\neq 0$,
    \item The vector of initial assignment $(x_{1,1},x_{1,2},\dots,x_{1,k})^\T$ is not orthogonal to any row eigenvector of A, and
    \item For all $(d_1,d_2,\dots,d_k)\in\mathbb{Z}^k-\{0,0,\dots,0\}$ with $d_1+d_2+\dots+d_k=0$, $\lambda_1^{d_1}\lambda_2^{d_2}\dots\lambda_k^{d_k}\neq 1$, where $\lambda_1,\lambda_2,\dots,\lambda_k$ are distinct eigenvalues of $A$.
\end{enumerate}
Then all the coefficients $c_{i_1,i_2,\dots,i_k}$ of the above polynomial can be computed in polynomial time.
\end{lemma}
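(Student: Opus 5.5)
We follow the standard Vandermonde-style argument of Cai, Lu and Xia, now with $k$ variables. The unknowns are the coefficients $c_{i_1,\dots,i_k}$ with $i_1+\dots+i_k=n$, of which there are $N=\binom{n+k-1}{k-1}$, polynomially many for fixed $k$. Using the recurrence, the value available at step $s$ is
\[
p_s=\sum_{i_1+\dots+i_k=n}c_{i_1,\dots,i_k}\,x_{s,1}^{i_1}\cdots x_{s,k}^{i_k},\qquad s=1,\dots,N,
\]
which gives a linear system in the unknowns $c$. We show that this system, or an equivalent one, is nonsingular.

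\textbf{Step 1 (change of basis).} Suppose first that $A$ is diagonalizable, as is implicit in the hypothesis that there are $k$ distinct eigenvalues $\lambda_1,\dots,\lambda_k$. Write $A=T^{-1}\Lambda T$, where the rows $t_j$ of $T$ are row eigenvectors. Put $y_s=Tx_s$. Then $y_{s,j}=\lambda_j^{s-1}\alpha_j$ with $\alpha_j=t_j\cdot x_1$, and condition 2 gives $\alpha_j\neq 0$ for all $j$.

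Since $x_s=T^{-1}y_s$, the polynomial $\sum c_i x^{i}$ is a homogeneous degree-$n$ form in $x$. It can be rewritten as a homogeneous degree-$n$ form $\sum d_i y^{i}$ in $y$. The map $c\mapsto d$ is an invertible linear map on the space of degree-$n$ forms, because $T$ is invertible; it is the $n$-th symmetric power of $T$. We can compute it in polynomial time, so it suffices to recover the $d$'s.

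\textbf{Step 2 (Vandermonde system).} We have
\[
p_s=\sum_i d_i\,\alpha^{i}\,\bigl(\lambda_1^{i_1}\cdots\lambda_k^{i_k}\bigr)^{s-1}.
\]
Set $e_i=d_i\alpha^i$ and $\mu_i=\lambda^{i}$. For $s=1,\dots,N$ this is a Vandermonde system in the nodes $\mu_i$. It is nonsingular exactly when the $\mu_i$ are pairwise distinct.

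Take two distinct exponent vectors $i,i'$, each summing to $n$. Then $\mu_i=\mu_{i'}$ would mean $\lambda^{i-i'}=1$ with $i-i'$ a nonzero vector whose entries sum to $0$, which condition 3 forbids. So we can solve for the $e_i$. Since $\alpha_j\neq 0$, dividing by $\alpha^i$ gives $d_i$, and the inverse of Step 1 gives $c_i$.

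Condition 1 guarantees that every $\lambda_j$ is nonzero. This is what lets us take integer powers in condition 3, and it keeps the recurrence invertible.

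\textbf{Main obstacle.} The delicate step is the case where $A$ is not diagonalizable, together with the exact reading of the "distinct eigenvalues" hypothesis. We resolve it as follows: if some eigenvalue were repeated, condition 3 with $d=e_j-e_l$ would give $\lambda_j/\lambda_l=1$, a contradiction. So the eigenvalues are genuinely distinct, and $A$ is diagonalizable.

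The remaining technicality is exact arithmetic with algebraic numbers. We handle it by working in the splitting field of the characteristic polynomial, or by noting that the final linear system in the original $c$ variables has rational-polynomial-size entries and its nonsingularity has already been established. Polynomial time then follows because $N$ is polynomial in $n$ for fixed $k$.
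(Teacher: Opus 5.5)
Your proposal is correct and is essentially the argument of Cai, Lu and Xia that the paper cites for $k=3$ and says generalizes directly to any $k$. That argument runs as follows: diagonalize $A$, using the fact that condition 3 forces distinct eigenvalues; pass to eigen-coordinates so the oracle values become a Vandermonde system in the nodes $\lambda_1^{i_1}\cdots\lambda_k^{i_k}$, whose pairwise distinctness is exactly condition 3; then undo the invertible change of coordinates using $\alpha_j\neq 0$. Your extra observation handles exact arithmetic with algebraic numbers cleanly: the original system in the $c$'s factors as a Vandermonde matrix times a nonsingular diagonal matrix times an invertible symmetric-power matrix, so it can be solved directly over the rationals.
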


Among the three conditions above, the third one is particularly difficult to verify. The authors used Galois theory to develop a general verification lemma for $3\times 3$ matrices. We can expand on the method used there to help verify our case of higher dimensions.

\section{The \texorpdfstring{$\gds$}{GDS} Framework}\label{sec:GDS}
In this section, we formally propose the \textit{$\gds$ (Counting General Dominating Set) framework}, and then introduce a special type of signatures used in an important connection between $\gds$ and $\hol$.

\subsection{Definition}
We firstly need a renewal of the definition of signature grids. A \textit{vertex-assignment signature grid} over $\mathcal{F}$, denoted as $\sg_V(G, \pi)$, has the same definition as a normal signature grid except for the following aspects: the arity of each signature $f_v$ is the degree of $v$ plus one. The vertex $v$ itself corresponds to the first variable of $f_v$, while the vertices adjacent to $v$ correspond to the rest. A 0-1 assignment $\sigma_V$ is put over all vertices, and the evaluation of the vertex-assignment signature grid is instead $\prod_{v\in V}f_v(\sigma_V|_{\{v\}\cup N(v)})$, where $N(v)$ is the set of neighbors of $v$.

\begin{definition}[$\gds$ problems]\label{def:GDS}
A $\gds$ problem, $\gds(\mathcal{F})$, parameterized by a set $\mathcal{F}$ of Boolean complex-valued signatures of arity at least one, is defined as follows: Given an instance that is a vertex-assignment signature grid $\sg_V(G, \pi)$ over $\mathcal{F}$, the output is the partition function of $\sg_V$,
\[
\gds_{\sg_V} = \sum_{\sigma_V:V\rightarrow\{0,1\}} \prod_{v \in V} f_v(\sigma_V|_{\{v\}\cup N(v)}).
\]
The bipartite $\gds$ problem $\gds(\mathcal{F}\mid\mathcal{G})$ is defined similarly as in $\hol$.
\end{definition}

In this framework, we can also use the prior explained way of denoting symmetric signatures, with each variable having equal status as others regardless of whether it corresponds to the vertex itself or those adjacent to it. For example, $\prob{\#3R-Dominating Set}$ is $\gds([0,1,1,1,1])$.

For the next two subsections, the rest of the paper does not depend on Propositions \ref{prop:holtogds} to \ref{prop:vctogds2}, but the notation defined in Section \ref{sec:apart} and Theorem \ref{thm:times} are vital for Section \ref{sec:uniform}.

\subsection{Equivalences with \texorpdfstring{$\hol$}{Holant}}
We observe that a non-bipartite $\hol$ problem is equivalent to a bipartite $\gds$ problem, while a non-bipartite $\gds$ problem is equivalent to a domain-4 bipartite $\hol$ problem.

\begin{proposition}[$\hol$ to $\gds$]\label{prop:holtogds}
Given a $\hol(\mathcal{F})$, where $\mathcal{F}$ is a set of Boolean signatures, there exists a Boolean signature set $\mathcal{G}$ such that
\[
\hol(\mathcal{F})\equiv_T\gds(\mathcal{G}\mid[1, 1, 1, 1]).
\]
$\mathcal{G}$ can be constructed in the following way. For each signature $f\in\mathcal{F}$, add a signature $g$ to $\mathcal{G}$, where $\ari{g}=\ari{f}+1$, and for each $\alpha$ such that $f(\alpha)=b$, let $g(0\alpha)=b$ and $g(1\alpha)=0$.
\end{proposition}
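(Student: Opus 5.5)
The plan is to give two direct, parsimonious instance translations, one in each direction, and check that the partition functions agree term by term. The guiding idea is to subdivide every edge of a $\hol$ instance. The new middle vertices carry the constant signature $[1,1,1,1]$ and sit on the RHS. Their vertex bits then play the role of the edge variables of the $\hol$ instance. The original vertices sit on the LHS with the signatures $g$, and $g$ is built so that their own bits are forced to be $0$.

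\textbf{From $\hol(\mathcal{F})$ to $\gds(\mathcal{G}\mid[1,1,1,1])$.} Take a signature grid $\sg(G,\pi)$ over $\mathcal{F}$, with $G=(V,E)$.
\begin{enumerate}
\item Build a bipartite graph $G'$ with LHS $V$ and RHS $\{w_e : e\in E\}$. Replace each edge $e=uv$ by the two edges $uw_e$ and $w_ev$.
\item Assign to each $v\in V$ the signature $g$ corresponding to $f_v$. Its first variable is $v$ itself, and its remaining variables are the $w_e$ for $e\in E(v)$, in the order prescribed by $\pi$.
\item Assign $[1,1,1,1]$ to each $w_e$. This is legal because $w_e$ has degree $2$, so its signature has arity $3$.
\end{enumerate}
A self-loop at $v$ becomes a pair of parallel edges $vw_e$. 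As in the definition of signature grids, the edge ordering at each vertex tells us which occurrence of a neighbor corresponds to which variable. So multiplicities are handled exactly as in $\hol$.

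Now fix an assignment $\sigma_V$ on $V\cup\{w_e\}$.
\begin{itemize}
\item The RHS factors are all $1$.
\item If some $v\in V$ has $\sigma_V(v)=1$, the product vanishes, since $g(1\alpha)=0$.
\item The surviving assignments are therefore in bijection with edge assignments $\sigma:E\to\{0,1\}$, via $\sigma(e)=\sigma_V(w_e)$.
\item For such an assignment, $g_v(0,\sigma|_{E(v)})=f_v(\sigma|_{E(v)})$.
\end{itemize}
Hence the $\gds$ partition function of the new instance equals $\hol_\sg$.

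\textbf{From $\gds(\mathcal{G}\mid[1,1,1,1])$ to $\hol(\mathcal{F})$.} Take an instance of $\gds(\mathcal{G}\mid[1,1,1,1])$.
\begin{enumerate}
\item Every RHS vertex $w$ has degree exactly $2$, because its signature has arity $3$. Its two incident edges go to LHS vertices, possibly the same vertex twice.
\item Contract each such $w$ into a single edge, or into a self-loop if both edges go to the same vertex.
\item Give each LHS vertex the signature $f\in\mathcal{F}$ from which its $g$ was built.
\end{enumerate}
The same computation, read backwards, shows the two partition functions are equal. The RHS factors are identically $1$, the LHS bits must be $0$, and the RHS bits are exactly the edge assignments of the contracted graph.

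Both maps are clearly polynomial time, so $\equiv_T$ follows (indeed with equal values). The only point needing care is the bookkeeping for self-loops and parallel edges, so that variable orderings are preserved. I expect this to be the main, though mild, obstacle. I would handle it by working with edge-indexed incidences rather than neighbor sets.
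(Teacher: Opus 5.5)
Your proposal is correct and is essentially the paper's proof: subdivide every edge with a degree-$2$ vertex carrying $[1,1,1,1]$ so that its bit plays the role of the edge variable, and use $g(1\alpha)=0$ to force the original vertices to $0$ so that each $g$ evaluates exactly like the corresponding $f$. The paper treats the converse only as ``the reverse of this process''; your contraction argument and your handling of self-loops and parallel edges simply spell out that same reverse step in more detail.
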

\begin{proof}
Given an instance of $\hol(\mathcal{F})$, we show that it can be converted to an instance of $\gds(\mathcal{G}\mid[1, 1, 1, 1])$ with the same output. The other direction is the reverse of this process.

Let $G=(V,E)$ be the graph of this instance's signature grid. We add a degree-2 vertex to every edge, and let the new vertices be $V'$. We use this new bipartite graph for the $\gds$ instance.

For every $v'\in V'$, we put $[1,1,1,1]$. This means for any degree-2 vertex, no matter how itself and its adjacent vertices are assigned, its own signature output would not affect the grid evaluation, so that it plays the role of a simple edge assignment in $\hol$.

For every $v\in V$ with signature $f$ in $\hol$, we put $g$ as described above. This way, we practically force the original vertices to be assigned 0, so that their signatures in $\gds$ behave the same as in $\hol$ without being affected by their own assignment.
\end{proof}

For the following equivalence, we define $\neq_{(12)}$ to be the domain-4 binary signature that has the following signature matrix:
\[
M_{\neq_{(12)}}=\begin{pmatrix}
    1 & 0 & 0 & 0\\
    0 & 0 & 1 & 0\\
    0 & 1 & 0 & 0\\
    0 & 0 & 0 & 1\\
\end{pmatrix}.
\]

\begin{proposition}[$\gds$ to domain-4 $\hol$]\label{prop:gdstohol4}
Given $\gds(\mathcal{F})$, where $\mathcal{F}$ is a set of Boolean signatures, there exists a domain-4 signature set $\mathcal{G}$ such that
\[
\gds(\mathcal{F})\equiv_T\hol_4(\mathcal{G}\mid\neq_{(12)}).
\]
$\mathcal{G}$ can be constructed in the following way. For each signature $f\in\mathcal{F}$, add a signature $g$ to $\mathcal{G}$, where $\ari{g}=\ari{f}-1$, and $g$ satisfies:
\[
g(\alpha)=\begin{cases}
f(0\alpha) & \alpha\in\{0, 1\}^n\\
f(1(\alpha-\mathbf{2})) & \alpha\in\{2, 3\}^n\\
0 & \text{otherwise}
\end{cases},
\]
where $n=\ari{f}-1$, and $\alpha-\mathbf{2}$ is the string over $\{0, 1\}$ obtained by subtracting $2$ from every digit of $\alpha$.
\end{proposition}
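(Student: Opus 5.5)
We convert instances in both directions while preserving the partition function exactly, in the same style as the proof of Proposition \ref{prop:holtogds}. Start from a vertex-assignment signature grid $\sg_V(G,\pi)$ of $\gds(\mathcal{F})$ with $G=(V,E)$. We keep the vertex set $V$ as the LHS and subdivide every edge $\{u,v\}\in E$ by a new vertex carrying $\neq_{(12)}$; this gives a bipartite grid for $\hol_4(\mathcal{G}\mid\neq_{(12)})$. At $v$ we place the signature $g$ built from $f_v$, ordering the $\deg(v)=\ari{f_v}-1$ incident half-edges to match the order of the neighbours of $v$ in $\pi$.

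The intended meaning of a domain-4 value on the half-edge from $v$ toward $u$ is the pair $(\sigma(v),\sigma(u))$, encoded as $2\sigma(v)+\sigma(u)$. So $0,1$ mean ``$v$ is $0$, and $u$ is $0$ or $1$'', and $2,3$ mean ``$v$ is $1$, and $u$ is $0$ or $1$''. We check that the two gadget signatures enforce this encoding.

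\begin{itemize}
\item The definition of $g$ forces consistency at $v$. All half-edges at $v$ must agree on the high bit, namely $\sigma(v)$: either all values lie in $\{0,1\}$ or all lie in $\{2,3\}$, and otherwise $g$ gives $0$. Given this, $g$ evaluates $f_v$ on $\sigma(v)$ followed by the low bits, which are the neighbours' values.
\item The signature $\neq_{(12)}$ forces consistency across an edge. It is the permutation $0\leftrightarrow0$, $1\leftrightarrow2$, $3\leftrightarrow3$, which swaps the two bits of the pair. So the half-edge at $v$ reads $(a,b)$ exactly when the half-edge at $u$ reads $(b,a)$. Hence the low bit seen at $v$ equals the high bit at $u$, that is, $\sigma(u)$.
\end{itemize}

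These two facts give a bijection between vertex assignments $\sigma_V:V\to\{0,1\}$ and edge assignments with nonzero weight. Under this bijection each weight equals $\prod_v f_v(\sigma_V|_{\{v\}\cup N(v)})$, since $\neq_{(12)}$ contributes a factor of $1$. The two partition functions are therefore equal, which gives $\gds(\mathcal{F})\le_T\hol_4(\mathcal{G}\mid\neq_{(12)})$.

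For the reverse reduction, start from an instance of $\hol_4(\mathcal{G}\mid\neq_{(12)})$. Every RHS vertex has degree 2, so we contract each such vertex into an ordinary edge between its two LHS neighbours. Each LHS vertex then receives the corresponding $f\in\mathcal{F}$, and the same bijection shows the values coincide.

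The main point to handle is that $G$ may have self-loops and parallel edges. I would treat a self-loop at $v$ as two neighbour slots that both refer to $v$. Then $\neq_{(12)}$ on the loop forces the pair $(a,b)$ on one end and $(b,a)$ on the other, and the high-bit agreement at $v$ forces $a=b=\sigma(v)$. This matches the convention that $v$ appears as its own neighbour. Parallel edges simply produce repeated neighbour slots and need no special treatment. One must also check that the variable order of $g$ matches the neighbour ordering in $\pi$, but this is bookkeeping. Both constructions take polynomial time.
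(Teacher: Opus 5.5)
Your proposal is correct and uses the same construction as the paper: subdivide each edge with $\neq_{(12)}$, let the half-edge at $v$ toward $u$ encode $(\sigma(v),\sigma(u))$, and use $\neq_{(12)}$ to swap the two bits so both endpoints agree; you also spell out the bijection, the reverse contraction, and the handling of loops and parallel edges, which the paper leaves implicit. One small caveat, which the paper also ignores: an isolated vertex $v$ gets a nullary $g$ and has no half-edge to record $\sigma(v)$, so your ``bijection'' fails there, but its contribution $f_v(0)+f_v(1)$ is a computable constant that can be split off, so the Turing equivalence is unaffected.
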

\begin{proof}
Basically, given an instance of the $\gds$ problem, we take its graph and assign each edge as if it encodes the assignment to both its incident vertices in $\gds$. From every vertex's perspective, if it is assigned 0 (1, resp.) in $\gds$, any of its incident edges should be assigned either 0 or 1 (2 or 3, resp.) in $\hol_4$, depending on whether the other incident vertex is assigned 0 or 1 in $\gds$.

This creates a problem of asymmetry, since if two adjacent vertices are assigned differently in $\gds$, the edge in $\hol_4$ should be assigned 1 from one vertex's perspective, but 2 from the other's. Therefore, we put an additional $\neq_{(12)}$ on every edge to account for the difference.
\end{proof}

The signature $\neq_{(12)}$ can be converted to the domain-4 binary equality signature through holographic transformation (see \cite[Section 1.3.2]{cai2017complexity} for definition), making any $\gds$ problem also equivalent to a non-bipartite $\hol_4$ problem.

\subsection{Writing It Apart}\label{sec:apart}
We write an $n$-arity signature $f$ as $\begin{pmatrix}
    f_0\\
    f_1
\end{pmatrix}$, where $f_0$ and $f_1$ are both $(n-1)$-arity signatures, if $f$ satisfies $f(0, x_1, \dots, x_{n-1})=f_0(x_1, \dots, x_{n-1})$ and $f(1, x_1, \dots, x_{n-1})=f_1(x_1, \dots, x_{n-1})$. We may call $f_0$ and $f_1$ as \textit{sub-signatures} of $f$.

We derive some immediate facts from the \#P-hardness of \prob{\#3RPB-Vertex Cover} by Xia, Zhang and Zhao \cite{xia2007vcmatching} using this notation.

\begin{proposition}\label{prop:vctogds1}
\prob{\#3RPB-Vertex Cover} can be reduced to $\plgds(\begin{pmatrix}
    [0, 0, 0, 1]\\
    [1, 1, 1, 1]
\end{pmatrix}\mid\begin{pmatrix}
    [0, 0, 0, 1]\\
    [1, 1, 1, 1]
\end{pmatrix})$, which is therefore \#P-hard.
\end{proposition}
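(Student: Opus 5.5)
The plan is a direct, parsimonious reduction: take the instance graph unchanged and put the stated signature on every vertex. Let $G=(U,V,E)$ be a 3-regular planar bipartite graph, viewed as an instance of \prob{\#3RPB-Vertex Cover}. We build the $\gds$ instance on the same graph $G$, with the same bipartition, and place $f=\begin{pmatrix}[0,0,0,1]\\ [1,1,1,1]\end{pmatrix}$ on every vertex of both sides. Planarity and bipartiteness are preserved trivially, so the instance is legal for $\plgds(f\mid f)$. Each vertex has degree 3, so $f$ has arity 4, with the first variable being the vertex itself and the remaining three its neighbours.

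Next I would check what a single vertex contributes under an assignment $\sigma_V$. Write $f(\sigma(v),\sigma(N(v)))$ for the factor at $v$.
\begin{itemize}
\item If $\sigma(v)=1$, the factor is $f_1=[1,1,1,1]$, which equals $1$ regardless of the neighbours.
\item If $\sigma(v)=0$, the factor is $f_0=[0,0,0,1]$, which is $1$ exactly when all three neighbours are assigned $1$, and $0$ otherwise.
\end{itemize}

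Hence the product over all vertices is $1$ precisely when every vertex assigned $0$ has all its neighbours assigned $1$, and $0$ otherwise. This condition says that no edge has both endpoints assigned $0$. Equivalently, the set $\sigma_V^{-1}(1)$ is a vertex cover of $G$. The partition function is therefore exactly the number of vertex covers of $G$, and the reduction is parsimonious.

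Since \prob{\#3RPB-Vertex Cover} is \#P-hard by \cite{xia2007vcmatching}, so is the $\gds$ problem. The only point needing care is the reading of the symmetric notation: $[0,0,0,1]$ indexes the Hamming weight of the three neighbour variables. No further obstacle is expected, since the argument is a one-line verification once the local condition is observed.
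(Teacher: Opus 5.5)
Your proposal is correct and takes the same approach as the paper. Both keep the graph unchanged and observe the local constraint: a vertex assigned $0$ forces all its neighbours to $1$, while a vertex assigned $1$ is unconstrained. Hence the vertices assigned $1$ form exactly a vertex cover, and the partition function counts vertex covers parsimoniously.
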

\begin{proof}
In this $\gds$ problem, if a vertex is assigned 0, all of its adjacent vertices have to be assigned 1; if a vertex is assigned 1, its adjacent vertices can be freely assigned without affecting the evaluation of this vertex --- a clear equivalence to vertex cover.
\end{proof}

\begin{proposition}\label{prop:vctogds2}
\prob{\#3RPB-Vertex Cover} can be reduced to $\plgds(\begin{pmatrix}
    [1, 0, 0, 0]\\
    [1, 0, 0, 0]
\end{pmatrix}\mid\begin{pmatrix}
    [1, 1, 1, 2]\\
    [1, 1, 1, 2]
\end{pmatrix})$, which is therefore \#P-hard.
\end{proposition}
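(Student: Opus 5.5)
The plan is to use the input graph itself, with no gadgets. Given an instance $G=(U,V,E)$ of \prob{\#3RPB-Vertex Cover}, where $G$ is 3-regular, planar and bipartite, I keep $G$ unchanged. Every vertex of $U$ gets the LHS signature $\begin{pmatrix}[1,0,0,0]\\ [1,0,0,0]\end{pmatrix}$, and every vertex of $V$ gets the RHS signature $\begin{pmatrix}[1,1,1,2]\\ [1,1,1,2]\end{pmatrix}$. This is a valid instance of the planar bipartite $\gds$ problem, since planarity and bipartiteness are preserved trivially. Both signatures have identical sub-signatures, so neither depends on the assignment to its own vertex. They depend only on the three neighbours.

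I then evaluate the partition function directly, in three steps.
\begin{enumerate}
\item \emph{Constraint from the LHS.} The LHS signature $[1,0,0,0]$ forces every neighbour of a $U$-vertex to be assigned $0$. Every vertex of $V$ has degree $3\geq 1$, so all of $V$ is forced to $0$. Under this assignment each $U$-vertex contributes exactly $1$, whatever its own value.
\item \emph{Reduction to a sum over $U$.} The assignment $x$ to $U$ is now free, and each $v\in V$ contributes $1+[\,x_u=1 \text{ for all } u\in N(v)\,]$. The partition function is therefore
\[\sum_{x}\prod_{v\in V}\bigl(1+[\,x|_{N(v)}\equiv 1\,]\bigr).\]
\item \emph{Expansion.} Expanding the product over subsets $T\subseteq V$ turns this into a count of pairs $(S,T)$. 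Here $S\subseteq U$ is the set of $U$-vertices assigned $1$, and $T\subseteq V$ is a set with $N(T)\subseteq S$.
\end{enumerate}

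The condition $N(T)\subseteq S$ says exactly that no edge joins $T$ to $U\setminus S$. Because $G$ is bipartite, this is equivalent to $(U\setminus S)\cup T$ being an independent set of $G$. The map $(S,T)\mapsto (U\setminus S)\cup T$ is a bijection onto the independent sets of $G$. So the partition function equals the number of independent sets of $G$, which equals the number of vertex covers of $G$ via complementation.

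This yields a parsimonious, polynomial-time reduction from \prob{\#3RPB-Vertex Cover}, and hardness follows from the result of Xia, Zhang and Zhao. The only step needing care is the bijection in the last paragraph, where bipartiteness is used to ignore edges inside $U$ or inside $V$ (there are none). I expect no real obstacle beyond that check.
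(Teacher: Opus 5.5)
Your proposal is correct and follows the paper's proof: you use the graph unchanged, with the LHS signature on $U$ and the RHS signature on $V$; the LHS signature forces $V$ to $0$, and each $V$-vertex then contributes $1+[N(v)\subseteq S]$. The only difference is cosmetic: you finish via independent sets and complementation, whereas the paper reads this factor directly as the number of cover choices for $v$ (and your bookkeeping, with weight $2$ exactly when all neighbours are assigned $1$, is stated more carefully than the paper's verbal description, which swaps the ``all''/``any'' quantifiers).
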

\begin{proof}
In this $\gds$ problem, due to the signatures on the LHS, all vertices on the RHS have to be assigned 0, while the assignment to every vertex on the LHS still corresponds to whether or not it is in the vertex cover. Then for each vertex on the RHS, if all of its adjacent vertices are assigned 0, the vertex has a single choice of being assigned 1; if any of its adjacent vertices is assigned 1, the vertex has both choices of being assigned 0 and 1. These choices are reflected in the signatures on the RHS rather than the numerical assignment.
\end{proof}

Using this notation, we may express any \prob{\#$(\sigma, \rho)$-Set} as a $\gds$ problem where every signature can be written as $\begin{pmatrix}
    f_\rho\\
    f_\sigma
\end{pmatrix}$, in which $f_\sigma$ and $f_\rho$ are symmetric signatures, and for every $x\in\sigma$ ($x\notin\sigma$, resp.), the value of $f_\sigma$ on inputs with Hamming weight $x$ (if allowed) is 1 (0, resp.), similar for $f_\rho$. For example, $\prob{\#3R-Total Dominating Set}$ is $\gds(\begin{pmatrix}
    [0,1,1,1]\\
    [0,1,1,1]
\end{pmatrix})$.

We may refer to signatures of the form $\begin{pmatrix}
    f_0\\
    f_0
\end{pmatrix}$ as \textit{uniform} signatures. Given a signature set $\mathcal{F}_0$, we define $\begin{pmatrix}
    \mathcal{F}_0\\
    \mathcal{F}_0
\end{pmatrix}$ as $\{f\mid f=\begin{pmatrix}
    f_0\\
    f_0
\end{pmatrix}, f_0\in\mathcal{F}_0\}$. For $\gds$ problems with uniform signatures, we denote the signature of $v$ as $f_v=\begin{pmatrix}
    f_{v0}\\
    f_{v0}
\end{pmatrix}$.

We then state an important equivalent relation between $\gds$ and $\hol$.

\begin{theorem}\label{thm:times}
Given a vertex-assignment signature grid $\sg=(G, \pi)$, where $G=(U, V, E)$ is a bipartite graph, and signature sets $\mathcal{F}=\begin{pmatrix}
    \mathcal{F}_0\\
    \mathcal{F}_0
\end{pmatrix}$, $\mathcal{G}=\begin{pmatrix}
    \mathcal{G}_0\\
    \mathcal{G}_0
\end{pmatrix}$, we have the following: \footnote{We denote the vertex-assignment signature grid as $\sg$ for simplicity. Although slightly repetitive, we use $\gds_\sg(\mathcal{F})$ to denote the value of $\gds(\mathcal{F})$ over the instance $\sg$ for clarity; similar for $\hol$ or the bipartite $(\mathcal{F}\mid\mathcal{G})$.}
\[
\gds_{\sg}(\begin{pmatrix}
    \mathcal{F}_0\\
    \mathcal{F}_0
\end{pmatrix}\mid\begin{pmatrix}
    \mathcal{G}_0\\
    \mathcal{G}_0
\end{pmatrix})=\hol_{\sg_1}(\mathcal{F}_0\mid\mathcal{EQ})\times\hol_{\sg_2}(\mathcal{EQ}\mid\mathcal{G}_0)
\]
where $\sg_1$ and $\sg_2$ are $\hol$ signature grids derived by replacing every $\begin{pmatrix}
    f_0\\
    f_0
\end{pmatrix}$ with $f_0$ on one side, and substituting every signature with an equality signature on the other.
\end{theorem}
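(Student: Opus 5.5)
The plan is to expand the $\gds$ partition function and use uniformity to make each vertex's signature ignore its own variable. For $u\in U$ with $f_u=\begin{pmatrix} f_{u0}\\ f_{u0}\end{pmatrix}$, we have $f_u(\sigma_V|_{\{u\}\cup N(u)})=f_{u0}(\sigma_V|_{N(u)})$. The same holds for $v\in V$ with $g_{v0}$. Since $G$ is bipartite, $N(u)\subseteq V$ for $u\in U$, and $N(v)\subseteq U$ for $v\in V$. Writing an assignment $\sigma_V$ as a pair $(\alpha,\beta)$ with $\alpha:U\to\{0,1\}$ and $\beta:V\to\{0,1\}$, the weight becomes
\[
\prod_{u\in U} f_{u0}(\beta|_{N(u)})\cdot\prod_{v\in V} g_{v0}(\alpha|_{N(v)}).
\]
The first factor depends only on $\beta$ and the second only on $\alpha$. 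Summing over all $(\alpha,\beta)$ therefore factors the partition function as
\[
\Bigl(\sum_{\beta}\prod_{u\in U} f_{u0}(\beta|_{N(u)})\Bigr)\cdot\Bigl(\sum_{\alpha}\prod_{v\in V} g_{v0}(\alpha|_{N(v)})\Bigr).
\]

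Next, I identify each factor with a bipartite $\hol$ value. Take the first factor, and let $\sg_1$ be the grid on the same graph $G$ with $f_{u0}$ on each $u\in U$ and $=_{\deg(v)}$ on each $v\in V$, keeping the edge orderings at $u$ given by $\pi$. An edge assignment $\tau:E\to\{0,1\}$ gives a nonzero contribution only if, at every $v\in V$, all incident edges carry the same value $\beta(v)$. So nonzero edge assignments correspond bijectively to vertex assignments $\beta:V\to\{0,1\}$. Under this correspondence, the edges at $u$ carry exactly $\beta|_{N(u)}$ in the order fixed by $\pi$, and the equality signatures contribute $1$. Hence $\hol_{\sg_1}(\mathcal{F}_0\mid\mathcal{EQ})$ equals the first factor, and the second factor is $\hol_{\sg_2}(\mathcal{EQ}\mid\mathcal{G}_0)$ by the symmetric argument.

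The step that needs the most care is multigraph bookkeeping; the algebra itself is straightforward. If $G$ has parallel edges, a neighbor appears several times among the variables of $f_u$. The $\gds$ input lists neighbors with multiplicity, one per incident edge, and the edge variables of $\hol$ do the same, so the identification goes through. An isolated vertex needs a separate check. In $\gds$ it has arity $1$ and sums over its own bit, contributing a factor $2f_{u0}()$. On the $\hol$ side it becomes an equality of arity $0$ on one side and a constant $f_{u0}()$ on the other, so the conventions have to be matched, e.g.\ by letting $=_0$ take value $2$ (the sum over its one free bit). With that convention in place, the product formula is an exact identity, giving the theorem.
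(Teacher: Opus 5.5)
Your proposal is correct and follows essentially the same route as the paper: you use uniformity to drop each vertex's own variable, factor the sum over $(\alpha,\beta)$ by bipartiteness, and read each factor as a bipartite $\hol$ value with equality signatures on the opposite side. Your explicit bijection between equality-consistent edge assignments and vertex assignments, and your treatment of parallel edges and isolated vertices (taking $=_0$ to have value $2$), fill in details that the paper's proof leaves implicit.
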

\begin{proof}
Let $\sigma_U$ and $\sigma_V$ be 0-1 assignments to $U$ and $V$ respectively. Based on Definition \ref{def:GDS}, we have:
\begin{align}
\gds_{\sg}&=\sum_{\sigma_U:U\rightarrow\{0,1\}}\sum_{\sigma_V:V\rightarrow\{0,1\}}\prod_{u\in U}f_{u0}(\sigma_U|_{N(u)})\prod_{v\in V}f_{v0}(\sigma_V|_{N(v)})\\
&=\sum_{\sigma_U:U\rightarrow\{0,1\}}\left(\left(\prod_{v\in V}f_{v0}(\sigma_V|_{N(v)})\right)\left(\sum_{\sigma_V:V\rightarrow\{0,1\}}\prod_{u\in U}f_{u0}(\sigma_U|_{N(u)})\right)\right)\label{eq:step2}\\
&=\left(\sum_{\sigma_V:V\rightarrow\{0,1\}}\prod_{u\in U}f_{u0}(\sigma_U|_{N(u)})\right)\left(\sum_{\sigma_U:U\rightarrow\{0,1\}}\prod_{v\in V}f_{v0}(\sigma_V|_{N(v)})\right)\label{eq:step3}\\
&=\hol_{\sg_1}\times\hol_{\sg_2}.
\end{align}
Step \eqref{eq:step2} is because $N(v)\cap V=\emptyset$, and therefore $\prod_{v\in V}f_{v0}(\sigma_V|_{N(v)})$ is irrelevant to $\sigma_V$ and can be extracted as a common factor; similar for step \eqref{eq:step3}.
\end{proof}

\section{From Gadget Construction to \prob{\#Dominating Set}}\label{sec:gadget}
In this section, we adapt the technique of gadget construction to the $\gds$ framework. Unlike gadget construction in $\hol$ which cleanly separates external and internal edges, the formulation proves to be more complicated in $\gds$. Combining gadget construction with polynomial interpolation can impose specific restrictions on the assignment of vertices (sometimes similar to a simulation of logic gates), as is demonstrated by a polynomial reduction from \prob{\#3RP-Matching} to \prob{\#3RPBS-Dominating Set}.

\subsection{Gadget Construction in \texorpdfstring{$\gds$}{\#GDS}}
\begin{figure}[t!]
    \centering
    \includegraphics[width=4cm]{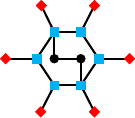}
    \caption{An example gadget. Diamonds are external vertices ($V_E$), squares are bridging vertices ($V_B$), and circles are internal vertices ($V_I$).}
    \label{pic:gadget}
\end{figure}

A gadget consists of external, bridging and internal vertices (e.g. Figure \ref{pic:gadget}). The external vertices $V_E$ do not yet have signatures placed on them, and their edges to outside the gadget are undetermined, but they should be assigned values like other vertices. Other vertices have specific signatures, and their edges are explicitly defined within the gadget. Among these vertices, those that connect to at least one external vertex are the bridging vertices $V_B$. Other vertices not directly connected to external vertices are the internal vertices $V_I$.

Given a gadget, we similarly consider all possible assignments to vertices and calculate a sum-of-product as the gadget's signature. To differentiate from normal signatures, we call it the gadget's \textit{gadgeture} (gadget-signature). (Note that $\hol$ does not need this differentiation, since a gadget's signature also takes external edge assignment as input and outputs a value. A $\gds$ gadget does not only take external vertex assignment as input, as we will soon see.)

As the signatures of external vertices are undetermined, we exclude the evaluation of external vertices in the gadgeture. However, when the gadget is connected with other parts and the evaluation of external vertices is eventually needed, knowledge of the assignment to bridging vertices is also required to calculate the output of signatures placed on the external vertices then. Therefore, it is also imperative that we consider different bridging vertex assignments in the gadgeture.

In summary, let $|V_E|=a$ and $|V_B|=b$, with external vertices being assigned $\{x_1,\dots,x_a\}$ and bridging vertices being assigned $\{y_1,\dots,y_b\}$. The gadgeture of $g: \{0,1\}^{a+b}\rightarrow\mathbb{C}$ should take the assignment to both types of vertices as input, and return the value:
\[
g(x_1,\dots,x_a,y_1,\dots,y_b)=\sum_{\sigma_V:V_I\rightarrow\{0,1\}}\prod_{v\in V_B\cup V_I}f_v\left(\hat{\sigma_V}|_{\{v\}\cup N(v)}\right)
\]
where $\hat{\sigma_V}:V_E\cup V_B\cup V_I\rightarrow\{0,1\}$ extends $\sigma_V$ by incorporating $\{x_1,\dots,x_a,y_1,\dots,y_b\}$.

\begin{figure}[t!]
    \centering
    \includegraphics[width=8cm]{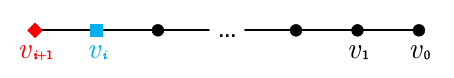}
    \caption{A simple gadget.}
    \label{pic:simple}
\end{figure}

For an example of gadget construction and how a linear recurrence between gadgeture matrices can be deducted, we look at the formulation of the gadget shown in Figure \ref{pic:simple}. At the beginning, we construct $H_1$: $V_E=\{v_1\}$, $V_B=\{v_0\}$, $V_I=\emptyset$, with an edge between $v_1$ and $v_0$. The signature put on $v_0$ is $[0,1,1]$ from \prob{\#Dominating Set}. Suppose $v_1$ and $v_0$ are assigned $z_1$ and $z_0$ respectively. The gadgeture $g_1(z_1,z_0)$ of $H_1$ is:
\[
M_{g_1(z_1,z_0)}=\begin{pmatrix}
    g_1(0,0)\\
    g_1(0,1)\\
    g_1(1,0)\\
    g_1(1,1)
\end{pmatrix}=\begin{pmatrix}
    0\\
    1\\
    1\\
    1
\end{pmatrix}.
\]

Suppose we now have $H_i$. We construct $H_{i+1}$ by adding a new external vertex $v_{i+1}$, and connecting $v_{i+1}$ to $v_i$, making it a bridging vertex, while the rest are internal vertices. We place $[0,1,1,1]$ on $v_i$. Since it is now made a bridging vertex, its signature output needs to be properly included in $H_{i+1}$'s gadgeture. We observe the following linear recurrence:
\[
M_{g_{i+1}(z_{i+1},z_i)}=\begin{pmatrix}
    g_{i+1}(0,0)\\
    g_{i+1}(0,1)\\
    g_{i+1}(1,0)\\
    g_{i+1}(1,1)
\end{pmatrix}=\begin{pmatrix}
    0 & 1 & 0 & 0\\
    0 & 0 & 1 & 1\\
    1 & 1 & 0 & 0\\
    0 & 0 & 1 & 1
\end{pmatrix}\begin{pmatrix}
    g_i(0,0)\\
    g_i(0,1)\\
    g_i(1,0)\\
    g_i(1,1)
\end{pmatrix}.
\]
To demonstrate, we look at how the topmost element, $g_{i+1}(0,0)$, is derived. This asks for both $v_{i+1}$ and $v_i$ to be assigned 0, while $v_i$'s signature output is multiplied into $g_i$ to form the value of $g_{i+1}$. If $v_{i-1}$ is also assigned 0, $v_i$'s signature output would be 0, i.e. there would be no vertex that dominates $v_i$. If $v_{i-1}$ is assigned 1, $v_i$'s signature output would be 1. As $v_i$ cannot be assigned 0 and 1 at the same time, the rest of the cases are not valid.

\subsection{The Ladder Gadget}\label{sec:equalor}
\begin{figure}[t!]
    \centering
    \includegraphics[width=10cm]{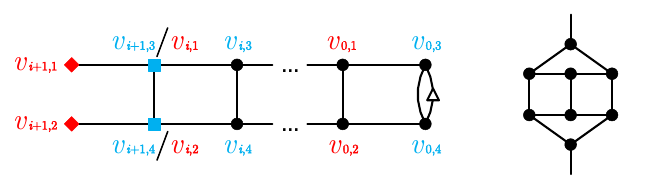}
    \caption{\textit{Left}: The ladder gadget. \textit{Right}: The structure represented by the triangle in the ladder gadget. This is our go-to method of construction for reducing to 3-regular bipartite simple graphs, taking inspiration from \protect\cite{xia2007vcmatching}.}
    \label{pic:ladder}
\end{figure}

We now present a gadget shown in Figure \ref{pic:ladder}, which we call the \textit{ladder gadget}. All vertices have or will have the signature of $[0, 1, 1, 1, 1]$.

The initial gadget $H_0$ has the basic structure of $V_E=\{v_{0, 1}, v_{0, 2}\}$, $V_B=\{v_{0, 3}, v_{0, 4}\}$ and $V_I$ being the vertices in the structure represented by the triangle, and the gadgeture being:
\[
\begin{split}
    &M_{g_0(v_{0, 1}, v_{0, 2}, v_{0, 3}, v_{0, 4})}\\
    ={}&\begin{pmatrix}
        g_0(0, 0, 0, 0) & g_0(0, 0, 0, 1) & g_0(0, 0, 1, 0) & \dots & g_0(1, 1, 1, 1)
    \end{pmatrix}^\T\\
    ={}&\begin{pmatrix}
        57 & 179 & 179 & 194 & 96 & 179 & 179 & 194 & 96 & 179 & 179 & 194 & 165 & 179 & 179 & 194
    \end{pmatrix}^\T,
\end{split}
\]
which can be calculated with Mathematica\texttrademark\  using the code in Appendix \ref{app:code}.

Then, when we derive $H_{i+1}$ from $H_i$, we add $v_{i+1, 1}$ and $v_{i+1, 2}$, while letting $v_{i, 1}$ and $v_{i, 2}$ take the place of $v_{i+1, 3}$ and $v_{i+1, 4}$. The linear recurrence then becomes:
\[
\begin{pmatrix}
    g_{i+1}(0, 0, 0, 0)\\
    g_{i+1}(0, 0, 0, 1)\\
    g_{i+1}(0, 0, 1, 0)\\
    g_{i+1}(0, 0, 1, 1)\\
    g_{i+1}(0, 1, 0, 0)\\
    g_{i+1}(0, 1, 0, 1)\\
    g_{i+1}(0, 1, 1, 0)\\
    g_{i+1}(0, 1, 1, 1)\\
    g_{i+1}(1, 0, 0, 0)\\
    g_{i+1}(1, 0, 0, 1)\\
    g_{i+1}(1, 0, 1, 0)\\
    g_{i+1}(1, 0, 1, 1)\\
    g_{i+1}(1, 1, 0, 0)\\
    g_{i+1}(1, 1, 0, 1)\\
    g_{i+1}(1, 1, 1, 0)\\
    g_{i+1}(1, 1, 1, 1)
\end{pmatrix}
=\begin{pmatrix}
    0 & 0 & 0 & 1 & 0 & 0 & 0 & 0 & 0 & 0 & 0 & 0 & 0 & 0 & 0 & 0\\
    0 & 0 & 0 & 0 & 1 & 1 & 1 & 1 & 0 & 0 & 0 & 0 & 0 & 0 & 0 & 0\\
    0 & 0 & 0 & 0 & 0 & 0 & 0 & 0 & 1 & 1 & 1 & 1 & 0 & 0 & 0 & 0\\
    0 & 0 & 0 & 0 & 0 & 0 & 0 & 0 & 0 & 0 & 0 & 0 & 1 & 1 & 1 & 1\\
    0 & 0 & 1 & 1 & 0 & 0 & 0 & 0 & 0 & 0 & 0 & 0 & 0 & 0 & 0 & 0\\
    0 & 0 & 0 & 0 & 1 & 1 & 1 & 1 & 0 & 0 & 0 & 0 & 0 & 0 & 0 & 0\\
    0 & 0 & 0 & 0 & 0 & 0 & 0 & 0 & 1 & 1 & 1 & 1 & 0 & 0 & 0 & 0\\
    0 & 0 & 0 & 0 & 0 & 0 & 0 & 0 & 0 & 0 & 0 & 0 & 1 & 1 & 1 & 1\\
    0 & 1 & 0 & 1 & 0 & 0 & 0 & 0 & 0 & 0 & 0 & 0 & 0 & 0 & 0 & 0\\
    0 & 0 & 0 & 0 & 1 & 1 & 1 & 1 & 0 & 0 & 0 & 0 & 0 & 0 & 0 & 0\\
    0 & 0 & 0 & 0 & 0 & 0 & 0 & 0 & 1 & 1 & 1 & 1 & 0 & 0 & 0 & 0\\
    0 & 0 & 0 & 0 & 0 & 0 & 0 & 0 & 0 & 0 & 0 & 0 & 1 & 1 & 1 & 1\\
    1 & 1 & 1 & 1 & 0 & 0 & 0 & 0 & 0 & 0 & 0 & 0 & 0 & 0 & 0 & 0\\
    0 & 0 & 0 & 0 & 1 & 1 & 1 & 1 & 0 & 0 & 0 & 0 & 0 & 0 & 0 & 0\\
    0 & 0 & 0 & 0 & 0 & 0 & 0 & 0 & 1 & 1 & 1 & 1 & 0 & 0 & 0 & 0\\
    0 & 0 & 0 & 0 & 0 & 0 & 0 & 0 & 0 & 0 & 0 & 0 & 1 & 1 & 1 & 1\\
\end{pmatrix}\begin{pmatrix}
    g_i(0, 0, 0, 0)\\
    g_i(0, 0, 0, 1)\\
    g_i(0, 0, 1, 0)\\
    g_i(0, 0, 1, 1)\\
    g_i(0, 1, 0, 0)\\
    g_i(0, 1, 0, 1)\\
    g_i(0, 1, 1, 0)\\
    g_i(0, 1, 1, 1)\\
    g_i(1, 0, 0, 0)\\
    g_i(1, 0, 0, 1)\\
    g_i(1, 0, 1, 0)\\
    g_i(1, 0, 1, 1)\\
    g_i(1, 1, 0, 0)\\
    g_i(1, 1, 0, 1)\\
    g_i(1, 1, 1, 0)\\
    g_i(1, 1, 1, 1)
\end{pmatrix}.
\]

Similar to a technique used in \cite{xia2007vcmatching}, we observe that there exist some groups of elements in the gadgeture vector, where the elements within each group always equal one another for every $i$. These groups include:
\begin{align*}
    &g_i(0,0,0,0);\\
\begin{split}
    &g_i(0,0,0,1)=g_i(0,0,1,0)=g_i(0,1,0,1)=g_i(0,1,1,0)\\
    &=g_i(1,0,0,1)=g_i(1,0,1,0)=g_i(1,1,0,1)=g_i(1,1,1,0);
\end{split}\\
    &g_i(0,0,1,1)=g_i(0,1,1,1)=g_i(1,0,1,1)=g_i(1,1,1,1);\\
    &g_i(0,1,0,0)=g_i(1,0,0,0);\\
    &g_i(1,1,0,0).
\end{align*}
This means there are only five distinct values within each gadgeture vector. We can rewrite the initial vector and the recurrence as follows:
\begin{gather}
    \begin{pmatrix}
        g_0(0, 0, 0, 0)\\
        g_0(0, 0, 0, 1)\\
        g_0(0, 0, 1, 1)\\
        g_0(0, 1, 0, 0)\\
        g_0(1, 1, 0, 0)
    \end{pmatrix}=\begin{pmatrix}
        57\\
        179\\
        194\\
        96\\
        165
    \end{pmatrix};\\
    \begin{pmatrix}
        g_{i+1}(0, 0, 0, 0)\\
        g_{i+1}(0, 0, 0, 1)\\
        g_{i+1}(0, 0, 1, 1)\\
        g_{i+1}(0, 1, 0, 0)\\
        g_{i+1}(1, 1, 0, 0)
    \end{pmatrix}=\begin{pmatrix}
        0 & 0 & 1 & 0 & 0\\
        0 & 2 & 1 & 1 & 0\\
        0 & 2 & 1 & 0 & 1\\
        0 & 1 & 1 & 0 & 0\\
        1 & 2 & 1 & 0 & 0
    \end{pmatrix}\begin{pmatrix}
        g_i(0, 0, 0, 0)\\
        g_i(0, 0, 0, 1)\\
        g_i(0, 0, 1, 1)\\
        g_i(0, 1, 0, 0)\\
        g_i(1, 1, 0, 0)
    \end{pmatrix}.\label{eq:dsrecur}
\end{gather}
We denote the $5\times 5$ recurrence matrix above as $A_5$.

Suppose we have a graph which includes $n$ copies of $H_s$. Then the number of dominating sets can be expressed as the polynomial
\begin{equation}\label{eq:dspoly}
\sum_{i_1+i_2+i_3+i_4+i_5=n}c_{i_1,i_2,i_3,i_4,i_5}x_{s,1}^{i_1}x_{s,2}^{i_2}x_{s,3}^{i_3}x_{s,4}^{i_4}x_{s,5}^{i_5}
\end{equation}
with
\begin{equation}\label{eq:simplify}
    \begin{pmatrix}
        x_{s,1}\\
        x_{s,2}\\
        x_{s,3}\\
        x_{s,4}\\
        x_{s,5}\\
    \end{pmatrix}=\begin{pmatrix}
        g_s(0, 0, 0, 0)\\
        g_s(0, 0, 0, 1)\\
        g_s(0, 0, 1, 1)\\
        g_s(0, 1, 0, 0)\\
        g_s(1, 1, 0, 0)
    \end{pmatrix}.
\end{equation}

In order to retrieve the coefficients of \eqref{eq:dspoly}, the recurrence \eqref{eq:dsrecur} needs to satisfy the requirements in Lemma \ref{lem:interpolation}. The first two requirements can be easily verified. We then prove that the third requirement is also fulfilled. The proof uses some basic Galois theory.

\begin{lemma}\label{lem:5times5}
For all $(d_1, d_2, d_3, d_4, d_5)\in\mathbb{Z}^5-\{0, 0, 0, 0, 0\}$ with $d_1+d_2+d_3+d_4+d_5=0$, $\lambda_1^{d_1}\lambda_2^{d_2}\lambda_3^{d_3}\lambda_4^{d_4}\lambda_5^{d_5}\neq 1$, where $\lambda_1, \dots, \lambda_5$ are distinct eigenvalues of $A_5$.
\end{lemma}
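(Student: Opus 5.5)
The plan is the Galois-theoretic route of Cai, Lu and Xia \cite{cai2012interpolation}, extended from $3\times 3$ to $5\times 5$.

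\textbf{Step 1: the characteristic polynomial.} I would compute the characteristic polynomial $p(x)=\det(xI-A_5)$, a monic integer polynomial of degree $5$. This is routine expansion, which Mathematica can double-check.

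\textbf{Step 2: irreducibility and the Galois group.} I would show $p$ is irreducible over $\mathbb{Q}$, for instance by finding a prime $q$ with $p \bmod q$ irreducible over $\mathbb{F}_q$. Irreducibility gives distinct eigenvalues and a transitive Galois group $\mathrm{Gal}$ containing a $5$-cycle. Next I would look for a prime $q'\nmid \mathrm{disc}(p)$ at which $p$ factors as a quadratic times a cubic. Its Frobenius has cycle type $(2,3)$, and the cube of that element is a transposition. A transitive subgroup of $S_5$ containing a transposition is $S_5$, so $\mathrm{Gal}\cong S_5$; in particular it is doubly transitive. Should the $(2,3)$ pattern be hard to find, a fallback is to check that the discriminant is not a square and rule out the remaining transitive subgroups via other factorization patterns. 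The whole lemma only needs double transitivity.

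\textbf{Step 3: reduction to ratios of eigenvalues.} Suppose $\prod_i \lambda_i^{d_i}=1$ with $\sum_i d_i=0$ and not all $d_i$ zero. Since the $d_i$ sum to zero and are not all zero, they are not all equal, so some $d_i\neq d_j$. Apply the transposition $\tau=(i\,j)\in\mathrm{Gal}$ to the relation and divide the result by the original. This yields $(\lambda_i/\lambda_j)^{d_i-d_j}=1$, so $\lambda_i/\lambda_j$ is a root of unity. By double transitivity, for any pair $(k,l)$ some $\pi\in\mathrm{Gal}$ sends $(i,j)$ to $(k,l)$. Hence every ratio $\lambda_k/\lambda_l$ is a root of unity, and in particular $|\lambda_1|=\dots=|\lambda_5|$.

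\textbf{Step 4: contradiction via Perron--Frobenius.} $A_5$ is a nonnegative integer matrix. I would show it is primitive by exhibiting a power $A_5^m$ with all entries positive, checked directly from the transition graph. The graph has $1\to3$, $3\to5$, $5\to1$, self-loops at $2$ and $3$, $2\to4$, $4\to2$, and so on, so it is strongly connected and aperiodic. The Perron--Frobenius theorem then gives a real eigenvalue strictly larger in modulus than every other eigenvalue, which contradicts equal moduli. Alternatively, it suffices to exhibit numerically two eigenvalues of distinct modulus.

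\textbf{Main obstacle.} I expect the main obstacle to be Step 2, namely pinning down the Galois group by finding suitable primes and verifying the factorizations. The rest is a clean generalization of the $3\times 3$ argument.
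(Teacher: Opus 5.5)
Your proposal is correct and follows essentially the paper's argument: irreducibility of $p(x)=x^5-3x^4-2x^3-2x^2+x+1$, Galois group $S_5$, a transposition turning a putative relation into $(\lambda_i/\lambda_j)^{d_i-d_j}=1$, transitivity spreading this to every ratio, and a contradiction. Note that your Step 3 genuinely uses a transposition, so the argument as written relies on $S_5$ rather than mere double transitivity. The differences from the paper are only in sub-steps. The paper gets $S_5$ from the polynomial having exactly two non-real roots, so that complex conjugation supplies the transposition, rather than from a $(2,3)$ factorization modulo a prime. Such a factorization does occur at $q'=3$, where $p\equiv(x^2+2x+2)(x^3+x^2+2)\pmod 3$. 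The paper also merely asserts that the ratios cannot all be roots of unity, whereas your Perron--Frobenius step actually justifies it; with the paper's data, so does the observation that three distinct real roots cannot share a modulus.
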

\begin{proof}
The characteristic polynomial of $A_5$ is $f(x)=-x^5+3x^4+2x^3+2x^2-x-1$, which can be verified to be irreducible over $\mathbb{Q}$, and have exactly two non-real complex roots. It is known that the Galois group $G$ of a degree-$p$ irreducible polynomial (where $p$ is a prime number) which has exactly two non-real complex roots is isomorphic to the symmetric group $S_p$ (see Lemma 9.3.3 in \cite{leinster2024galoistheory}).
    
If there did exist a non-trivial assignment to $(d_1, d_2, d_3, d_4, d_5)$ which lets $\lambda_1^{d_1}\lambda_2^{d_2}\lambda_3^{d_3}\lambda_4^{d_4}\lambda_5^{d_5}=1$, there must be at least two among the five exponents that are different. Due to the Galois group being $S_p$, we could assume with loss of generality that $d_1\neq d_2$, for otherwise if it is another pair that is different, say $d_3\neq d_4$, we can use the $\mathbb{Q}$-automorphism (automorphism of the splitting field of $f$ over $\mathbb{Q}$ which keeps $\mathbb{Q}$ invariant) $g$ out of $G$ which has $g(\lambda_3)=\lambda_1$ and $g(\lambda_4)=\lambda_2$ to permute the roots to achieve the desired form. It is also due to the ability to freely permute the roots that we would have $\lambda_2^{d_1}\lambda_1^{d_2}\lambda_3^{d_3}\lambda_4^{d_4}\lambda_5^{d_5}=1$, which means $\frac{\lambda_1}{\lambda_2}$ would be a root of unity. Similarly, we could get that for every $s, t=1, \dots, 5$ where $s\neq t$, $\frac{\lambda_s}{\lambda_t}$ must be a root of unity, which is false.
\end{proof}

\subsection{Reducing to \prob{\#Dominating Set}}\label{app:mainds}
\begin{figure}[t!]
    \centering
    \includegraphics[height=4cm]{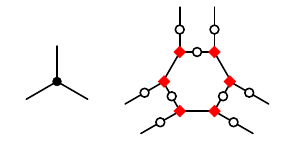}
    \caption{The reduction from counting matchings to counting dominating sets. Hollow circles represent ladder gadgets, with each diamond being an external vertex of two ladder gadgets.}
    \label{pic:reduction}
\end{figure}

Using the ladder gadget, we can prove Theorem \ref{thm:mainds}.

\begin{proof}[Proof of Theorem \ref{thm:mainds}]
Let $G=(V, E)$ be a 3-regular planar graph with $n$ vertices. As shown in Figure \ref{pic:reduction}, we convert every vertex of $G$ into a hexagon, and every edge of $G$ into a pair of edges connecting pairs of adjacent vertices; then, we substitute certain edges with copies of the ladder gadget $H_s$ for the same $s$, using the vertices of the hexagons as external vertices. The resulting graph $G'_s$ is also 3-regular and planar, and does not contain any cycle of odd length or multiple edges. Therefore, it is a 3-regular planar bipartite simple graph.

$G'_s$ now contains $6n$ copies of $H_s$. As expected of the ladder gadgets, the vertex-assignment signature grid $\sg_s$ puts $[0, 1, 1, 1, 1]$ on all vertices, so that the $\gds$ value is exactly the number of dominating sets of $G'_s$. The $\gds$ value of $\sg_s$ can then be written as 
\[
\sum_{i_1+i_2+i_3+i_4+i_5=6n}c_{i_1,i_2,i_3,i_4,i_5}x_{s,1}^{i_1}x_{s,2}^{i_2}x_{s,3}^{i_3}x_{s,4}^{i_4}x_{s,5}^{i_5}.
\]

By calculating the $\gds$ value of enough instances of $\sg_s$ with different $s$, and doing polynomial interpolation (which is feasible due to Lemma \ref{lem:interpolation}), we are able to retrieve the coefficients. Specifically, by retrieving all $c_{i_1, 0, 0, 0, i_5}$ (where $i_1+i_5=6n$), we obtain all the cases where the assignment to both of the external vertices is the same within each ladder gadget. In these cases, the six external vertices in every hexagon consist of three pairs, each equally assigned, and the four external vertices of every pair of ladder gadgets derived from the same edge in $G$ are equally assigned, representing an edge assignment in $G$.

Recall that the output of a gadgeture includes the signature output of bridging and internal vertices, but excludes that of external vertices. For a specific vertex assignment, the evaluation of $\sg_s$ is the product of all gadgeture outputs as well as the signature outputs of all external vertices. Essentially, for the evaluation to be non-zero, all external vertices should be dominated. We observe that the six external vertices in every hexagon are all dominated if and only if at least two of three pairs are assigned 1. This corresponds to selecting edges in the original $G$ such that at least two of three edges incident to every vertex are chosen. Formally, $\sum_{i_1+i_5=6n}c_{i_1, 0, 0, 0, i_5}$ equals the output of $\plhol([0, 0, 1, 1])$ (which is equivalent to $\plhol([1, 1, 0, 0])$ by flipping 0's and 1's in the edge assignment, proved \#P-hard as \prob{\#3RP-Matching} in \cite[Section 5]{xia2007vcmatching}) on $G$. Therefore, a polynomial reduction has been established from \prob{\#3RP-Matching} to \prob{\#3RPBS-Dominating Set}.
\end{proof}

\begin{remark}
If we lift the restriction of simple graph, we can trivially reduce \prob{\#3RPBS-Dominating Set} to \prob{\#$k$RPB-Dominating Set} for every $k>3$ by adding multiple edges.
\end{remark}

\begin{remark}
By retrieving all $c_{i_1, 0, 0, 0, i_5}$, we essentially simulate the logic gate XNOR. The ladder gadget can also simulate other common logic gates such as AND, OR, XAND and XOR. For example, OR can be simulated by retrieving $c_{0, 0, 0, i_4, i_5}$ (where $i_4+i_5=n$). This can be done since $g_s(0, 1, 0, 0)$ equals and only equals $g_s(1, 0, 0, 0)$, making these two together a case separate from others, and together with the case of $g_s(1, 1, 0, 0)$ forms all the cases of OR. However, the ability to use these simulations also depends on the condition of the signatures of the external vertices having non-zero output, like in Theorem \ref{thm:mainds}.
\end{remark}

\section{From Powered \texorpdfstring{$\hol$}{Holant} to \prob{\#Total Dominating Set}}\label{sec:uniform}
In this section, we explore a specific case within uniform signatures that would allow $\hol_{\sg_1}=\hol_{\sg_2}$ in the context of Theorem \ref{thm:times}. The $\gds$ value then equals the square of a $\hol$ value. This leads to the study of a powered version of $\hol$, which naturally generates a dichotomy of $\gds$. However, it cannot apply directly to \prob{\#Total Dominating Set}, so we establish a reduction through problems defined on the class of symmetric bipartite graphs in order to prove its hardness.

\subsection{The Power of \texorpdfstring{$\hol$}{Holant}}\label{sec:holpow}
We introduce a few notations first.

\begin{definition}\label{def:holtri}
$\hol(=_k\mid M\mid =_k)$, where $M_{x_1,x_2}$ is a $2\times 2$ matrix, is a form of $\hol$ problem where $G=(U, V, W, E)$ is a tripartite graph, the vertices in $U\cup W$ are all assigned $=_k$ and disjoint from one another, and the vertices in $V$ are all of degree 2 and assigned $M$, with the edges corresponding to $x_1$ and $x_2$ always connecting to $U$ and $W$ respectively.
\end{definition}

Here $M$ directly refers to a signature. This type of problems is related to the edge-weighted version of counting graph homomorphisms \cite{martin2000homomorphism}, but here $M$ is allowed to be asymmetric, as long as all of it are placed along the same direction in the graph, as explained by the last bit in the definition.

Every $\hol(=_k\mid M\mid =_k)$ is equivalent to a $\hol(f\mid =_k)$, where $f=(=_k)M^{\otimes k}$. The converse however is not true, which is detailed in \cite{cai2017complexity}.

\begin{definition}
$\hol^k(\mathcal{F})$, where $k\in\mathbb{Z}^+$, is a problem that calculates for every instance $\sg$:
\[
(\hol^k)_\sg=(\hol_\sg)^k.
\]
\end{definition}

We now consider the case of $\gds(\begin{pmatrix}
    f_0\\
    f_0
\end{pmatrix}\mid\begin{pmatrix}
    f_0\\
    f_0
\end{pmatrix})$, where $f_0=(=_k)M^{\otimes k}$ is an arity-$k$ signature and $M$ is a $2\times 2$ matrix.

If $M$ is symmetric, i.e. $M=\begin{pmatrix}
    x & y\\
    y & z
\end{pmatrix}$, we find that $\hol(f_0\mid =_k)$ and $\hol(=_k\mid f_0)$ are both precisely $\hol(=_k\mid M\mid =_k)$. If we calculate the values of the problems on the same graph, by converting the problems to $\hol(=_k\mid M\mid =_k)$, i.e. adding a new vertex on every edge and attaching the signature $M$, while changing every $f_0$ to $=_k$, we actually get the exact same signature grid, so the values are the same.

If $M$ is asymmetric but $M(0,0)=M(1,1)$, i.e. $M=\begin{pmatrix}
    x & y\\
    z & x
\end{pmatrix}$, $\hol(=_k\mid f_0)$ is still equivalent to $\hol(=_k\mid M\mid =_k)$ by flipping 0's and 1's in the edge assignment.

From Theorem \ref{thm:times}, we then immediately get the following corollary:

\begin{corollary}\label{cor:gdstohol2}
Given a $k$-regular bipartite graph $G=(U, V, E)$, and signature $f_0=\linebreak (=_k)M^{\otimes k}$, where $M=\begin{pmatrix}
    x & y\\
    y & z
\end{pmatrix}$ or $\begin{pmatrix}
    x & y\\
    z & x
\end{pmatrix}$, we have the following: \footnote{With a slight abuse of notation, if there is only one signature on each side, and the edge ordering does not matter or is already specified within the definition of the problem (e.g. in Definition \ref{def:holtri}), we write $\gds_G$ and $\hol_G$ for the output of a given signature grid $\sg=(G, \pi)$.}
\[
\gds_G(\begin{pmatrix}
    f_0\\
    f_0
\end{pmatrix}\mid\begin{pmatrix}
    f_0\\
    f_0
\end{pmatrix})=(\hol^2)_G(f_0\mid =_k).
\]
\end{corollary}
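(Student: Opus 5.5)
We apply Theorem \ref{thm:times} with $\mathcal{F}_0=\mathcal{G}_0=\{f_0\}$. Since $G$ is $k$-regular, every vertex of $U$ and $V$ carries the arity-$(k+1)$ uniform signature $\begin{pmatrix} f_0\\ f_0\end{pmatrix}$, with $f_0$ of arity $k$, so the theorem applies directly and gives
\[
\gds_G(\begin{pmatrix} f_0\\ f_0\end{pmatrix}\mid\begin{pmatrix} f_0\\ f_0\end{pmatrix})=\hol_G(f_0\mid =_k)\times\hol_G(=_k\mid f_0).
\]
Here $\sg_1$ places $f_0$ on $U$ and $=_k$ on $V$, and $\sg_2$ places $=_k$ on $U$ and $f_0$ on $V$. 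Both grids are on the same underlying graph $G$, with edge orderings inherited from $\pi$. By symmetry of $f_0$ the orderings do not matter: $f_0=(=_k)M^{\otimes k}$ is symmetric because it is an equality composed with the same $M$ on every leg.

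It remains to show $\hol_G(=_k\mid f_0)=\hol_G(f_0\mid =_k)$. We do this by expanding both sides into the tripartite form of Definition \ref{def:holtri}. Write $f_0(y_1,\dots,y_k)=\sum_{b}\prod_j M(b,y_j)$. In $\hol_G(f_0\mid =_k)$, the $U$-vertices carry $f_0$, so we replace each $u\in U$ by an equality vertex and subdivide each edge $uv$ with a degree-2 vertex carrying $M$, oriented with $x_1$ toward $u$. This gives the value of $\hol(=_k\mid M\mid =_k)$ on the subdivided graph, with $M$ pointing from $U$ to $V$. Doing the same for $\hol_G(=_k\mid f_0)$ gives the same subdivided graph, but with $M$ oriented from $V$ to $U$, i.e.\ with $M^{\T}$ pointing from $U$ to $V$.

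We then split into two cases.
\begin{itemize}
\item If $M$ is symmetric, then $M^{\T}=M$ and the two grids coincide.
\item If $M=\begin{pmatrix} x & y\\ z & x\end{pmatrix}$, note that $M^{\T}=XMX$ with $X=\begin{pmatrix}0&1\\1&0\end{pmatrix}$. Since $=_k$ is invariant under $X^{\otimes k}$, the $X$ factors are absorbed into the equality vertices. Concretely, the bijection that flips the assignment to every equality vertex (equivalently, every edge) maps one sum term-by-term onto the other.
\end{itemize}
Hence the two Holant values are equal, and the product becomes $(\hol_G(f_0\mid =_k))^2=(\hol^2)_G(f_0\mid =_k)$.

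The main point needing care is the orientation bookkeeping in the second case. We must check that in $\hol_G(=_k\mid f_0)$ each $M$ is indeed reversed relative to $\hol_G(f_0\mid =_k)$, and that a \emph{global} flip, rather than a one-sided one, is what converts $M^{\T}$ into $M$. The identity $XM^{\T}X=M$ holds exactly when $M(0,0)=M(1,1)$. The remaining steps are direct substitution.
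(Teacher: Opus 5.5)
Your proof is correct and takes essentially the paper's route. You apply Theorem~\ref{thm:times}, then identify both Holant factors with $\hol(=_k\mid M\mid =_k)$ on the subdivided graph: directly when $M$ is symmetric, and via the global 0/1 flip when $M(0,0)=M(1,1)$. Your explicit observation that the second factor produces $M^{\T}$, together with the identity $XM^{\T}X=M$, is a more careful rendering of the paper's one-line remark about flipping the edge assignment.
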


This motivates us to study the complexity of $\hol^k$. If all signatures of $\mathcal{F}$ always output non-negative real numbers, it is clear that $\hol^k(\mathcal{F})\equiv_T\hol(\mathcal{F})$, but otherwise, while $\hol^k(\mathcal{F})\leq_T\hol(\mathcal{F})$ still trivially holds, it is not obvious that $\hol(\mathcal{F})\leq_T\hol^k(\mathcal{F})$. Our approach is to adapt the proofs throughout \cite{kowalczyk2016hol23} (which gives a dichotomy for $\hol([a, 1, b]\mid =_3)$ and $\plhol([a, 1, b]\mid =_3)$ where $a, b\in\mathbb{C}$), \cite{cai2023hol33} and \cite{cai2023p3em} (which give dichotomies respectively for $\hol([q_0, q_1, q_2, q_3]\mid =_3)$ and $\plhol([q_0, q_1, q_2, q_3]\mid =_3)$ where $q_i\in\mathbb{Q}$, $i=0, 1, 2, 3$) to verify that the same categorization still stands for the dichotomy of the $\hol^k$ version, which becomes Theorem \ref{thm:holk}. We defer the full proof to Appendix \ref{app:holk}.

\begin{theorem}[generalized from {\cite[Theorem 8.1]{cai2023hol33}} and {\cite[Theorem 2.1]{cai2023p3em}}]\label{thm:holk}
$\hol^k(f\mid =_3)$, where $f=[q_0, q_1, q_2, q_3]$, $q_i\in\mathbb{Q}$, $i=0, 1, 2, 3$ and $k\in\mathbb{Z}^+$, is \#P-hard except in the following cases, for which the problem is in FP:
\begin{enumerate}
    \item $f$ is degenerate, i.e. there exists a unary signature $u$ such that $f=u^{\otimes 3}$;
    \item $f$ is Gen-Eq, i.e. $f=[a, 0, 0, b]$ for some $a, b$;
    \item $f$ is affine, i.e. $f=[a, 0, \pm a, 0], [0, a, 0, \pm a], [a, -a, -a, a], [a, a, -a, -a]$ for some $a$.
\end{enumerate}
If we restrict the input to planar graphs, there are two additional categories that fall under FP, while all other cases remain \#P-hard:
\begin{enumerate}
    \setcounter{enumi}{3}
    \item $f=[a, b, b, a]$ or $[a, b, -b, -a]$ for some $a, b$;
    \item $f=[3a+b, -a-b, -a+b, 3a-b]$ for some $a, b$.
\end{enumerate}
\end{theorem}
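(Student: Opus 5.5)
Tractability is immediate: in cases 1--5 the cited dichotomies give $\hol(f\mid =_3)\in$ FP (resp.\ $\plhol$), and $\hol^k$ is obtained by computing $\hol_\sg$ and raising it to the $k$-th power. The substance is hardness. The plan is not to find a reduction $\hol\leq_T\hol^k$ for general complex values, which seems hopeless. Instead I would re-run the hardness proofs of \cite{kowalczyk2016hol23,cai2023hol33,cai2023p3em} and check that every reduction step used there survives when the oracle only returns $(\hol_\sg)^k$.

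The reductions in those papers are of three kinds.
\begin{enumerate}
\item Gadget realization. By \cite[Lemma 1.3]{cai2017complexity}, an instance of $\hol(\mathcal{F}\cup\{g\})$ maps to an instance of $\hol(\mathcal{F})$ with the same value. Replacing a value by its $k$-th power preserves equality, so these steps transfer verbatim.
\item Holographic transformations. These also give identical values, possibly up to an explicit nonzero factor $c$, which becomes the known factor $c^k$.
\item Polynomial interpolation. This is the delicate kind: the oracle returns $P(x_s)^k$ rather than $P(x_s)=\sum_i c_i x_s^i$. Here I would interpolate the polynomial $Q=P^k$ directly. Its degree is $kn$, still polynomial, so the same Vandermonde argument, or Lemma \ref{lem:interpolation} in the multivariate case, recovers all coefficients of $Q$. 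The conditions of Lemma \ref{lem:interpolation} concern only the recurrence matrix and initial vector, not the polynomial, so they carry over unchanged. From $Q$ I then want the target value $c_j$ (say). Typically the target is a single coefficient, or the evaluation of $P$ at a special point $x^*$. Evaluating $Q$ at $x^*$ gives $P(x^*)^k$, which is exactly the oracle answer $\hol^k$ of the target problem.
\end{enumerate}
So an interpolation chain $\hol(\mathcal{F}')\leq_T\hol(\mathcal{F})$ becomes $\hol^k(\mathcal{F}')\leq_T\hol^k(\mathcal{F})$. When the target is a single coefficient $c_j$ that the original proof then uses as a value, one replaces it by the corresponding coefficient of the multivariate $Q$. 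This coefficient is a sum of products of $k$ coefficients of $P$, and I would argue case by case that the original proof actually needs only a $\hol$ value of the projected instance.

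With these transfer lemmas, hardness follows by walking through the case analysis of \cite[Theorem 8.1]{cai2023hol33} and \cite[Theorem 2.1]{cai2023p3em}. Every chain ends at a base \#P-hard problem, such as counting perfect matchings or \prob{\#3RP-Vertex Cover}-type problems, or $\hol([a,1,b]\mid=_3)$ from \cite{kowalczyk2016hol23}. So it suffices to check that each base problem $X$ satisfies $X\leq_T X^k$. For non-negative-valued base problems this is immediate by taking the real $k$-th root. For complex-weighted bases such as those in \cite{kowalczyk2016hol23}, I would first push the chain back to a non-negative integer counting problem.

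The main obstacle is twofold: interpolation steps whose goal is an individual coefficient, and steps that use arbitrary complex intermediate values with cancellations, e.g.\ the derivation of unary signatures by interpolation and the "finer" Galois/lattice arguments in \cite{cai2023hol33}. For the first I would try the trick of making the interpolated quantity itself a $\hol$ value via a modified grid, so that only evaluations of $Q$ are needed. If that fails, I would note that $P^k$ determines $P$ up to a $k$-th root of unity and fix the root by continuity of the interpolation family or by rationality ($q_i\in\mathbb{Q}$ keeps values rational, so a real root is determined up to sign, which parity of $k$ or a known positive instance resolves).
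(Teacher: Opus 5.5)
Your overall route is the paper's route. You re-run \cite{kowalczyk2016hol23,cai2023hol33,cai2023p3em}, observe that gadget and holographic steps preserve values, interpolate $P^k$ of degree $kn$, and start every chain at a non-negative problem. However, the two places you flag as obstacles are exactly where the paper needs specific arguments. There your proposal is either unsound or unspecified.

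\textbf{Interpolation that isolates a coefficient.} You write that you ``want the target value $c_j$'' and then try to recover it from $c_j^k$ by continuity or rationality. That is not sound, because the signatures and values along the chains are generally complex or irrational: $a,b\in\mathbb{C}$ in \cite{kowalczyk2016hol23}, and $\Delta=\sqrt{(1-c)^2+4ab}$ in \cite{cai2023p3em}. It is also unnecessary. The target of each reduction is itself a $\hol^k$ problem, so only $c_j^k$ is needed. The missing observation is that every isolated coefficient in these proofs is extremal:
\begin{itemize}
\item $c_{0,n}$ in \cite[Lemma 2.2]{cai2023hol33};
\item $c_n$ in the cross-over interpolation of \cite[Lemma 4.2]{cai2023p3em}.
\end{itemize}
So the corresponding coefficient of $P^k$ is not a sum of products but exactly $c_j^k$, namely $c'_{0,kn}=c_{0,n}^k$ and $c'_{kn}=c_n^k$. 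It is retrieved directly.

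\textbf{Pushing chains back to non-negative bases.} This is precisely what is not automatic, and you do not say where or how it is done. Most chains in \cite{kowalczyk2016hol23} start from \cite[Lemma 3]{kowalczyk2016hol23}, i.e.\ \prob{\#3R-Vertex Cover}, but two do not.
\begin{itemize}
\item The case $(X,Y)=(0,-1)$ has base $\hol([0,1,-1]\mid =_3)$. The paper uses the holographic equivalence with $\hol([-1,0,1]\mid[0,1,1,1])$, then interpolates $[1,-2,3]$ on the LHS with the gadget of \cite{kowalczyk2009minus1}, interpolating $P^l$. It then reduces holographically back to $\hol([0,1,1]\mid =_3)$.
\item \cite[Lemma 15]{kowalczyk2016hol23} cites a known result for $[a,1,a]$ with $a$ real, possibly negative. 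The paper builds a recursive gadget with matrix $\begin{pmatrix}a^3+1 & a^2+a\\ a^2+a & a^3+1\end{pmatrix}$ to interpolate a non-negative \#P-hard $[c,1,c]$.
\end{itemize}

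\textbf{Walking through \cite{cai2023p3em} as written inherits a flaw.} In \cite[Theorem 4.17]{cai2023p3em}, for $[1,-1,1,0]$ the gadget $G_3$ yields only the tractable $[1,0,-1,2]$. The paper must iterate $G_3$, via $[0,1,0,-1]$, to reach $[2,-3,3,-3]$, which is hard by \cite[Theorem 4.16]{cai2023p3em}. The paper also rewrites the loose ``interpolate $[1,x]$'' steps as interpolations of $[1,x]^{\otimes 3}$ and $[y,1]^{\otimes 3}$, invoking the P3EM theorem only at the start of each chain.
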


Then, by letting $k=2$ and applying Corollary \ref{cor:gdstohol2}, we have:

\begin{corollary}\label{cor:gdsuni}
$\gds(\begin{pmatrix}
    f_0\\
    f_0
\end{pmatrix}\mid\begin{pmatrix}
    f_0\\
    f_0
\end{pmatrix})$, where $f_0=[q_0, q_1, q_2, q_3]=(=_3)M^{\otimes 3}$, $q_i\in\mathbb{Q}$, $i=0, 1, 2, 3$ and $M=\begin{pmatrix}
    x & y\\
    y & z
\end{pmatrix}$ or $\begin{pmatrix}
    x & y\\
    z & x
\end{pmatrix}$, is \#P-hard except cases 1 to 3 listed in Theorem \ref{thm:holk}, for which the problem is in FP. If we restrict the input to planar graphs, there are also cases 4 and 5 in Theorem \ref{thm:holk} that fall under FP, while all other cases remain \#P-hard.
\end{corollary}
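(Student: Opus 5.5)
The plan is to derive Corollary \ref{cor:gdsuni} from Corollary \ref{cor:gdstohol2} and Theorem \ref{thm:holk} with $k=2$. The work consists of three checks: that the two problems share the same instance class, that the value identity gives reductions in both directions, and that the tractable cases transfer.

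First, the instances. An instance of $\gds(\begin{pmatrix} f_0\\ f_0\end{pmatrix}\mid\begin{pmatrix} f_0\\ f_0\end{pmatrix})$ with $\ari{f_0}=3$ puts an arity-4 signature $\begin{pmatrix} f_0\\ f_0\end{pmatrix}$ on every vertex. So every vertex has degree $3$, and the input is a $3$-regular bipartite graph $G=(U,V,E)$. Planarity is preserved, since we use the same graph. Conversely, an instance of $\hol(f_0\mid =_3)$ is a bipartite graph with $f_0$ on the left side and $=_3$ on the right side, hence also a $3$-regular bipartite graph. Because $f_0$ is symmetric, edge orderings play no role.

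Next, the values. By Corollary \ref{cor:gdstohol2}, for every such $G$ we have $\gds_G=(\hol_G(f_0\mid =_3))^2=(\hol^2)_G(f_0\mid =_3)$. This identity gives polynomial-time reductions in both directions on identical inputs, so
\[
\gds(\begin{pmatrix} f_0\\ f_0\end{pmatrix}\mid\begin{pmatrix} f_0\\ f_0\end{pmatrix})\equiv_T\hol^2(f_0\mid =_3),
\]
and the same holds with both sides restricted to planar graphs. Here it matters that the hypothesis on $M$ (symmetric, or with $M(0,0)=M(1,1)$) is exactly what Corollary \ref{cor:gdstohol2} requires. That hypothesis is what makes both factors in Theorem \ref{thm:times} equal the same $\hol(=_3\mid M\mid =_3)$ value.

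Finally, apply Theorem \ref{thm:holk} with $k=2$ to $f=f_0=[q_0,q_1,q_2,q_3]$, which has rational entries. In cases 1 to 3, the problem $\hol^2(f_0\mid =_3)$ is in FP, so the $\gds$ problem is too. Otherwise it is \#P-hard, and the equivalence transfers hardness. The planar statement follows the same way using cases 4 and 5. Membership in FP needs no extra argument beyond computing $\hol$ in FP and squaring the result.

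The only subtle point is making sure the instance classes really coincide. In particular, $\hol^2(f_0\mid =_3)$ instances must be exactly the bipartite $3$-regular graphs on which Corollary \ref{cor:gdstohol2} applies. This holds because the left side carries only $f_0$ and the right side only $=_3$, so the degree is $3$ on both sides. The rest is direct substitution.
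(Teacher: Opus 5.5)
Your proposal is correct and follows the paper's route: the paper derives this corollary in one line by combining Corollary~\ref{cor:gdstohol2}, which identifies the $\gds$ value with $(\hol^2)_G(f_0\mid =_3)$ on the same $3$-regular bipartite graph, with Theorem~\ref{thm:holk} at $k=2$. Your extra checks (that the instance classes coincide, that planarity is preserved, and that the value identity gives reductions in both directions) only make explicit what the paper leaves implicit.
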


\subsection{\texorpdfstring{$\hol$}{Holant} for Symmetric Bipartite Graphs}
For \prob{\#3RB-Total Dominating Set}, $f_0=[0, 1, 1, 1]=(=_3)M^{\otimes 3}$, where $M=\begin{pmatrix}
    -1 & 0\\
    1 & 1
\end{pmatrix}$, which does not fall inside the known cases in Corollary \ref{cor:gdsuni}. The issue is that if $M$ is not of either form and the input graph is unrestricted, $\hol_{\sg_1}$ and $\hol_{\sg_2}$ in Theorem \ref{thm:times} is not guaranteed to be the same value, so we cannot convert the problem to another single problem such as $\hol^2$.

This gives us the idea of restricting the input graph to symmetric bipartite graphs, which have been defined in some research such as \cite{cairns2015sym,kakimura2010sym}. We use the definition in \cite{kakimura2010sym}\textemdash a bipartite graph $G=(U=\{u_1, \dots, u_n\}, V=\{v_1, \dots, v_n\}, E)$ is \textit{symmetric} if there exists a labeling of the vertices such that $\forall i, j\in\{1, \dots, n\}$, $(u_i, v_j)\in E\Leftrightarrow (u_j, v_i)\in E$. In other words, we can get the exact same graph by exchanging $U$ and $V$ (changing $u_i$ to $v_i$ and vice versa) and renumbering the vertices within each set.

We write $\symgds(\mathcal{F}\mid\mathcal{G})$ and $\symhol(\mathcal{F}\mid\mathcal{G})$ for the problems of $\gds(\mathcal{F}\mid\mathcal{G})$ and $\hol(\mathcal{F}\mid\mathcal{G})$ on symmetric bipartite graphs. We then have the following lemma:

\begin{lemma}\label{lem:symswap}
For any signature sets $\mathcal{F}, \mathcal{G}$, and signature grids $\sg_1=(G, \pi_1)$ and $\sg_2=(G, \pi_2)$ where $G=(U=\{u_1, \dots, u_n\}, V=\{v_1, \dots, v_n\}, E)$ is a symmetric bipartite graph, and $\forall i\in\{1, \dots, n\}$, $\pi_1$ assigns to $u_i$ ($v_i$, resp.) the same signature with the same edge ordering as $\pi_2$ assigns to $v_i$ ($u_i$, resp.), we have
\[
\symhol_{\sg_1}(\mathcal{F}\mid\mathcal{G})=\symhol_{\sg_2}(\mathcal{G}\mid\mathcal{F}).
\]
\end{lemma}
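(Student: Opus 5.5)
The plan is to exhibit an explicit graph automorphism that swaps the two sides and show that, under it, the two signature grids are the same grid. Since the Holant value depends only on the graph together with the signature and edge ordering at each vertex, this gives equality of the partition functions. By symmetry of $G$, after the appropriate labeling we have $(u_i,v_j)\in E\Leftrightarrow(u_j,v_i)\in E$. Define $\phi:U\cup V\to U\cup V$ by $\phi(u_i)=v_i$ and $\phi(v_i)=u_i$.

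First, check that $\phi$ is an automorphism of $G$ that exchanges the bipartition. An edge $(u_i,v_j)$ is sent to $(v_i,u_j)=(u_j,v_i)$, which lies in $E$ by the symmetry property. Since $\phi$ is an involution and a bijection on vertices, it induces a bijection $\phi_E$ on $E$. If parallel edges are allowed, I would fix a bijection between the multisets of edges for each pair, which the symmetric definition still permits.

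Second, transport assignments. Given $\sigma:E\to\{0,1\}$, define $\sigma'=\sigma\circ\phi_E^{-1}$; this is a bijection on edge assignments. Consider a vertex $u_i$ in $\sg_1$ with signature $f$ and some edge ordering. By hypothesis, $\pi_2$ places the same $f$ at $v_i=\phi(u_i)$, and the ordering there must be understood as the image of the ordering at $u_i$ under $\phi_E$. Then $f(\sigma|_{E(u_i)})$ in $\sg_1$ equals $f(\sigma'|_{E(v_i)})$ in $\sg_2$, and likewise for each $v_i$. Multiplying over all vertices gives equal weights for $\sigma$ and $\sigma'$. Summing over $\sigma$ yields $\symhol_{\sg_1}(\mathcal{F}\mid\mathcal{G})=\symhol_{\sg_2}(\mathcal{G}\mid\mathcal{F})$, because $\sg_2$ has its $\mathcal{G}$-signatures on $U$ and its $\mathcal{F}$-signatures on $V$.

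The main subtlety, and really the only one, is making precise what "the same edge ordering" means when the incident edge sets of $u_i$ and $v_i$ differ. I would interpret it as correspondence via $\phi_E$: the $k$-th edge of $u_i$ in $\pi_1$, namely $(u_i,v_j)$, corresponds to the $k$-th edge of $v_i$ in $\pi_2$, namely $(v_i,u_j)$. I would state this convention explicitly, and note that it is consistent with the renumbering description in the definition of symmetric bipartite graphs.
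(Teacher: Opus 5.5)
Your proposal is correct and follows essentially the same approach as the paper. The paper uses exactly the bijection $\sigma_1((u_i,v_j))=\sigma_2((v_i,u_j))$ on edge assignments, which is your $\sigma'=\sigma\circ\phi_E^{-1}$, and observes that the weights agree term by term; your explicit automorphism and your reading of ``same edge ordering'' via $\phi_E$ only make precise what the paper leaves implicit.
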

\begin{proof}
For any edge assignment $\sigma_1$ to $\sg_1$, there is a one-to-one correspondence to an edge assignment $\sigma_2$ to $\sg_2$ by letting $\sigma_1((u_i, v_j))=\sigma_2((v_i, u_j))$ for any $(u_i, v_j)\in E$. Then, denoting $f_i$ ($g_i$, resp.) as the signature from $\mathcal{F}$ ($\mathcal{G}$, resp.) assigned to $u_i$ in $\pi_1$ and $v_i$ in $\pi_2$ ($v_i$ in $\pi_1$ and $u_i$ in $\pi_2$, resp.), we have:
\[
\prod_{i=1}^nf_i(\sigma_1|_{E(u_i)})\prod_{j=1}^ng_j(\sigma_1|_{E(v_j)})=\prod_{i=1}^nf_i(\sigma_2|_{E(v_i)})\prod_{j=1}^ng_j(\sigma_2|_{E(u_j)}).
\]
By adding $\sum_{\sigma_1:E\rightarrow\{0, 1\}}$ to the front of the LHS of this equation, and $\sum_{\sigma_2:E\rightarrow\{0, 1\}}$ to the front of the RHS, we get the equation in the lemma.
\end{proof}

Then, combining Theorem \ref{thm:times} and Lemma \ref{lem:symswap}, we have the following corollary:

\begin{corollary}\label{cor:symgds}
Given a $k$-regular symmetric bipartite graph $G=(U, V, E)$, and arity-$k$ symmetric signature $f_0$, we have the following:
\[
\symgds_G(\begin{pmatrix}
    f_0\\
    f_0
\end{pmatrix}\mid\begin{pmatrix}
    f_0\\
    f_0
\end{pmatrix})=(\symhol^2)_G(f_0\mid =_k).
\]
\end{corollary}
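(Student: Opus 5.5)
Applying Theorem \ref{thm:times} to $G$ with the uniform signature $\begin{pmatrix} f_0\\ f_0\end{pmatrix}$ on every vertex splits the left-hand side into a product
\[
\symgds_G\Bigl(\begin{pmatrix} f_0\\ f_0\end{pmatrix}\Bigm|\begin{pmatrix} f_0\\ f_0\end{pmatrix}\Bigr)=\hol_{\sg_1}(f_0\mid\mathcal{EQ})\times\hol_{\sg_2}(\mathcal{EQ}\mid f_0).
\]
In $\sg_1$ every $u\in U$ carries $f_0$ and every $v\in V$ carries an equality signature; since $G$ is $k$-regular, that equality signature is $=_k$. So $\sg_1$ is an instance of $\hol(f_0\mid =_k)$ on $G$, and $\sg_2$ is an instance of $\hol(=_k\mid f_0)$ on $G$ with the roles of the sides swapped. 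Because $f_0$ and $=_k$ are symmetric signatures, edge orderings are irrelevant, which justifies the shorthand $\hol_G$ from the footnote to Corollary \ref{cor:gdstohol2}.

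Next I would apply Lemma \ref{lem:symswap} to $\sg_2$. Fix the labeling $u_1,\dots,u_n$, $v_1,\dots,v_n$ witnessing that $G$ is symmetric. $\sg_2$ assigns $=_k$ to each $u_i$ and $f_0$ to each $v_i$. Define $\sg_2'$ on the same $G$ by assigning $f_0$ to each $u_i$ and $=_k$ to each $v_i$. Then $\sg_2$ and $\sg_2'$ are related exactly as in the hypothesis of Lemma \ref{lem:symswap}, with matching edge orderings guaranteed by symmetry of the signatures. The lemma gives $\hol_{\sg_2}(=_k\mid f_0)=\hol_{\sg_2'}(f_0\mid =_k)$. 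But $\sg_2'$ is literally $\sg_1$, since both put $f_0$ on $U$ and $=_k$ on $V$ of the same graph.

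Hence the product equals $(\hol_G(f_0\mid =_k))^2=(\symhol^2)_G(f_0\mid =_k)$, which is the claim.

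Most steps are bookkeeping; the one point needing care is the edge bijection. It must send edge $(u_i,v_j)$ to $(v_i,u_j)$, and this is a genuine edge precisely because $G$ is symmetric. This is already handled inside Lemma \ref{lem:symswap}, so I only need to check that the hypotheses, namely identical signatures at swapped positions and consistent edge orderings, hold. They do because $f_0$ and $=_k$ are symmetric.
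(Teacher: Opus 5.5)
Your proposal is correct and follows the paper's own route. Theorem \ref{thm:times} splits the $\gds$ value into $\hol_{\sg_1}(f_0\mid =_k)\times\hol_{\sg_2}(=_k\mid f_0)$, with $k$-regularity fixing the equalities as $=_k$, and Lemma \ref{lem:symswap} then identifies the two factors on a symmetric bipartite graph; your observation that symmetry of $f_0$ and $=_k$ makes the lemma's edge-ordering hypothesis automatic is a detail the paper leaves implicit.
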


From this equality, we know that:
\begin{equation}
\symplgds(\begin{pmatrix}
    [0, 1, 1, 1]\\
    [0, 1, 1, 1]
\end{pmatrix}\mid\begin{pmatrix}
    [0, 1, 1, 1]\\
    [0, 1, 1, 1]
\end{pmatrix})\equiv_T\symplhol^2([0, 1, 1, 1]\mid =_3).\label{eq:symequiv}
\end{equation}
Therefore, by proving the \#P-hardness of the latter problem on simple graphs, we can prove \prob{\#3RPBS-Total Dominating Set}, in fact with another restriction of being on symmetric bipartite graphs. But it is also because of this restriction that Theorem \ref{thm:holk} does not directly apply, so ideally we need to find a way to reduce a certain $\hol^k(f\mid =_3)$ to the problem.

An important part of the reduction would be to take any bipartite graph and construct a symmetric bipartite graph. We use the simple method of making another copy of the graph, but exchanging the two sides before adding it to the original copy as a separate component. Formally, for $G=(U, V, E)$, we make a copy $G'=(V', U', E')$ where we put $V'$ and $U'$ on the LHS and RHS respectively without changing the edges, and denote $G^2$ as the graph $(U\cup V', V\cup U', E\cup E')$.

To see that $G^2$ is symmetric, we label $U=\{u_1, \dots, u_n\}$, $V=\{v_{n+1}, \dots, v_{n+m}\}$, $U'=\{v_1 \dots, v_n\}$ and $V'=\{u_{n+1}, \dots, u_{n+m}\}$ (notice the unusual labeling of $U'$ and $V'$), where $v_i$ is copied from $u_i$ for $i\in\{1, \dots, n\}$ and $u_j$ is copied from $v_j$ for $j\in\{n+1, \dots, n+m\}$. Then it can be easily seen that $(u_i, v_j)\in E\cup E'\Leftrightarrow (v_i, u_j)\in E\cup E'$ for $i, j\in\{1, \dots, n+m\}$.

We observe that $\hol_{G^2}(f\mid =_3)=\hol_G(f\mid =_3)\times\hol_{G'}(f\mid =_3)$ for any symmetric signature $f$. Per the analysis in Section \ref{sec:holpow}, we know that $\hol_G(f\mid =_3)=\hol_{G'}(f\mid =_3)$ is guaranteed if $f=(=_3)M^{\otimes 3}$, where $M=\begin{pmatrix}
    x & y\\
    y & z
\end{pmatrix}$ or $\begin{pmatrix}
    x & y\\
    z & x
\end{pmatrix}$. Therefore, for any such $f$ and any $k\in\mathbb{Z}^+$, we actually have
\[
\hol^{2k}(f\mid =_3)\leq_T\symhol^k(f\mid =_3),
\]
which also preserves any graph class restrictions. We then have the following lemma.

\begin{lemma}\label{lem:symholk}
$\symhol^k(f\mid =_3)$, where $f=[q_0, q_1, q_2, q_3]=(=_3)M^{\otimes 3}$, $q_i\in\mathbb{Q}$, $i=0, 1, 2, 3$, $M=\begin{pmatrix}
    x & y\\
    y & z
\end{pmatrix}$ or $\begin{pmatrix}
    x & y\\
    z & x
\end{pmatrix}$ and $k\in\mathbb{Z}^+$, is \#P-hard except cases 1 to 3 listed in Theorem \ref{thm:holk}, for which the problem is in FP. If we restrict the input to planar graphs, there are also cases 4 and 5 in Theorem \ref{thm:holk} that fall under FP, while all other cases remain \#P-hard.
\end{lemma}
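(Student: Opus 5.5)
The plan is to sandwich $\symhol^k(f\mid =_3)$ between two problems whose complexity is already settled by Theorem \ref{thm:holk}. I will prove an upper bound for the tractable cases and a lower bound for all remaining cases, checking that each reduction keeps planarity.

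\emph{Tractable cases (upper bound).} A symmetric bipartite graph is in particular a bipartite graph. Hence $\symhol^k(f\mid =_3)$ is a restriction of $\hol^k(f\mid =_3)$, and the same holds for the planar versions. Moreover, $\hol^k(f\mid=_3)\leq_T\hol(f\mid=_3)$ trivially: compute the value once and raise it to the $k$-th power. So whenever $f$ falls into cases 1 to 3, or into cases 4 and 5 in the planar setting, the FP algorithm from Theorem \ref{thm:holk} (equivalently, from the underlying $\hol$ dichotomies of \cite{cai2023hol33,cai2023p3em}) solves the restricted problem directly.

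\emph{Hardness cases (lower bound).} Take an instance $G=(U,V,E)$ of $\hol^{2k}(f\mid =_3)$ and build the disjoint union $G^2$ of $G$ with its side-swapped copy $G'$. The text shows that $G^2$ is symmetric bipartite, and it is planar whenever $G$ is, since it is a disjoint union of two planar graphs. We have $\hol_{G^2}(f\mid=_3)=\hol_G(f\mid=_3)\cdot\hol_{G'}(f\mid=_3)$. By the discussion in Section \ref{sec:holpow}, the assumption on $M$ gives $\hol_{G'}(f\mid=_3)=\hol_G(f\mid=_3)$. The reason is that both are equal to $\hol(=_3\mid M\mid =_3)$ on the same underlying grid: directly when $M$ is symmetric, and after the global $0\leftrightarrow1$ edge flip when $M(0,0)=M(1,1)$. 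Therefore a single oracle call returns $(\hol_{G^2})^k=(\hol_G)^{2k}$. This proves
\[
\hol^{2k}(f\mid =_3)\leq_T\symhol^k(f\mid =_3),
\]
and the same statement with planar on both sides. Theorem \ref{thm:holk} holds for every positive integer exponent, in particular for $2k$, so outside cases 1 to 3 (respectively outside cases 1 to 5 in the planar setting) the left side is \#P-hard, and hence so is $\symhol^k(f\mid=_3)$.

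\emph{Main obstacle.} The delicate step is the identity $\hol_G=\hol_{G'}$. I need to check that swapping sides, so that $f$ sits on the former $=_3$ side, really produces the same value. Writing $f=(=_3)M^{\otimes3}$, the grid for $G$ is the tripartite grid $=_3\mid M\mid =_3$ with $M$ oriented from $U$ to $V$, while the grid for $G'$ has $M$ oriented the other way, that is, it effectively uses $M^{\T}$. For symmetric $M$ this is immediate. For $M=\begin{pmatrix} x&y\\ z&x\end{pmatrix}$ I will verify that conjugating by the flip $\begin{pmatrix}0&1\\1&0\end{pmatrix}$ on every edge turns $M$ into $M^{\T}$ while leaving $=_3$ invariant, so the two partition functions agree. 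A final check is that the reduction is uniform in $k$ and uses only one oracle query, so no root extraction is ever needed, and the sign issues that make $\hol^k$ nontrivial do not arise.
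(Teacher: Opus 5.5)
Your proposal is correct and follows the paper's own argument. You use the same side-swapped disjoint union $G^2$ and the factorization $\hol_{G^2}=\hol_G\cdot\hol_{G'}$, with $\hol_G=\hol_{G'}$ following from the form of $M$, to get $\hol^{2k}(f\mid =_3)\leq_T\symhol^k(f\mid =_3)$ with planarity preserved, and you then invoke Theorem \ref{thm:holk} at exponent $2k$, with tractability following from the trivial restriction. Your explicit check that conjugating by the $0\leftrightarrow 1$ flip sends $M$ to $M^{\T}$ when $M(0,0)=M(1,1)$ just fills in a detail the paper leaves to Section \ref{sec:holpow}.
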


\subsection{Reducing to \prob{\#Total Dominating Set}}
We now complete the reduction that proves Theorem \ref{thm:maintds}.

We use the term \textit{symmetric tripartite graph} here to refer to a tripartite graph $G=(U, V, W, E)$ that satisfies:
\begin{enumerate}
    \item the graph requirements in Definition \ref{def:holtri};
    \item if for every $v\in V$ (suppose $(u, v), (v, w)\in E$, $u\in U$, $w\in W$) we delete $v$ and directly connect $u$ and $w$, we get a symmetric bipartite graph.
\end{enumerate}
We write $\symhol(=_k\mid M\mid =_k)$ for any $\hol$ problem on such symmetric tripartite graphs.

\begin{proof}[Proof of Theorem \ref{thm:maintds}]
Due to Lemma \ref{lem:symholk} and equivalence \eqref{eq:symequiv}, we would only need to reduce from a known \#P-hard case there to $\symplhol^2([0, 1, 1, 1]\mid =_3)$ on simple graphs.

\begin{figure}[t!]
    \centering
    \includegraphics[width=6cm]{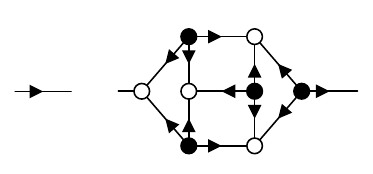}
    \caption{\textit{Left}: The arrow used to represent $M$, highlighting that it is asymmetric and has a sense of direction, with the assignment to the edge connected to the arrow's tail and head corresponding to the row and column index of $M$ respectively. \textit{Right}: The recursive gadget, where solid and hollow circles are $=_3$ on different sides, taking inspiration from \protect\cite{xia2007vcmatching} similar to Figure \ref{pic:ladder}. Note that though we are drawing the recursive gadget with the binary signature $M$, it can also be reinterpreted as being built solely from an arity-3 $f$ and $=_3$, thus being usable in our reductions.}
    \label{pic:minterpol}
\end{figure}

By rewriting the problem as $\symplhol^2(=_3\mid M\mid =_3)$, where $M=\begin{pmatrix}
    -1 & 0\\
    1 & 1
\end{pmatrix}$, we see an opportunity to do interpolation. As depicted in Figure \ref{pic:minterpol}, we define $H_0$ to be a simple $M$, and $H_s$ to be the gadget built by starting from $M$ and connecting the recursive gadget $s$ times to the right. We calculate the signature matrix of $H_s$ to be
\begin{equation}
M\cdot\begin{pmatrix}
    4 & 5\\
    6 & 7
\end{pmatrix}^s=MP\begin{pmatrix}
    \lambda_1^s & 0\\
    0 & \lambda_2^s
\end{pmatrix}P^{-1},\label{eq:hs}
\end{equation}
where $\lambda_1=\frac{1}{2}(11-\sqrt{129})$, $\lambda_2=\frac{1}{2}(11+\sqrt{129})$.

Suppose the graph $G$ of an instance of a \#P-hard $\symplhol^2(=_3\mid M'\mid =_3)$ has $m$ vertices assigned $M'$. By substituting every $M'$ with $H_s$, the resulting graph $G_s$ is still a symmetric tripartite graph. Similar to \cite[Lemma 2.2]{cai2023hol33}, by treating $\lambda_1$ and $\lambda_2$ as variables and imagining $H_s$ is split according to how the matrices are multiplied in \eqref{eq:hs}, we can write
\[
\symplhol^2_{G_s}=\left(\sum_{i+j=m}c_{i,j}(\lambda_1^i\lambda_2^j)^s\right)^2=\sum_{i+j=2m}c_{i,j}'(\lambda_1^i\lambda_2^j)^s.
\]
Since we are reducing between $\hol^k$ with the same $k=2$, as long as $\frac{\lambda_1}{\lambda_2}$ is not a root of unity, by considering the $\hol$ sum over $2m+1$ different values of $s$, we can retrieve all $c_{i,j}'$ and substitute $\lambda_1$ and $\lambda_2$ with our desired values to obtain the $\hol$ sum of the desired problem without ever considering the values of the original coefficients $c_{i,j}$. We have (up to a scalar multiple):
\begin{gather*}
MP\begin{pmatrix}
    -19+\sqrt{129} & 0\\
    0 & -19-\sqrt{129}
\end{pmatrix}P^{-1}=\begin{pmatrix}
    11 & -5\\
    -5 & -3
\end{pmatrix};\\
(=_3)\begin{pmatrix}
    11 & -5\\
    -5 & -3
\end{pmatrix}^{\otimes 3}=[603, -340, 115, -76].
\end{gather*}
We have $\symplhol^2([603, -340, 115, -76]\mid =_3)\leq_T\symplhol^2([0, 1, 1, 1]\mid\allowbreak =_3)$, and that we can apply Lemma \ref{lem:symholk} to the former problem, where it does not fall within the tractable cases. If we do not include $s=0$ in the $2m+1$ different values, we would also have transformed the original graph to simple graphs during the reduction.
\end{proof}

\section{Conclusion and Discussion}
The $\gds$ framework serves as our model for studying the counting versions of domination-type problems (and more), analogous to $\hol$. Section \ref{sec:gadget} presents the theoretical development of gadget construction in $\gds$. Section \ref{sec:uniform} delves into the connection between $\gds$ problems with uniform signatures and the powered versions of $\hol$ problems.

In summary, we are able to apply or adapt techniques that are most well-known in traditional frameworks such as $\hol$ to the study of the complexity of counting domination-type problems. This may represent a key step toward developing more universal techniques within this framework, comparable to the powerful holographic reduction in $\hol$ (see \cite[Section 1.3.2]{cai2017complexity}). To achieve this, the formal gap between gadgetures and proper signatures may need to be bridged.

There also exists the potential of forming complete dichotomies within this framework. There are difficulties, however, such as the intricacy of symbolic computation due to how our reductions are established, and the restrictive conditions for certain lemmas to hold. For example, if an arity-3 signature $f$ cannot equal $(=_3)M^{\otimes 3}$ for any $M$, a lemma as straightforward as Lemma \ref{lem:symholk} would not apply.

\bibliography{ref}

\newpage
\appendix
\renewcommand{\theHsection}{\Alph{section}}
\section{Proof of Theorem \ref{thm:holk}}\label{app:holk}
\subsection{$\hol^k([a, 1, b]\mid =_3)$ and $\plhol^k([a, 1, b]\mid =_3)$}
In order to prove Theorem \ref{thm:holk}, our first main goal is to generalize \cite[Theorem 1]{kowalczyk2016hol23} to a $\hol^k$ version with the same categorization. We review each intermediate result in the paper that needs to be generalized, and derive a $\hol^k$ version, addressing any difference in the proof along the way. Symbols that are not defined here have the same definition as in the original paper. Since some results in the paper also use the letter $k$, we substitute the $k$ in $\hol^k$ with a different letter in the generalization of these results for clarity.

\begin{lemma}[from {\cite[Lemma 2]{kowalczyk2016hol23}}]\label{lem:hol23lem2}
Suppose $\mathscr{G}$ and $\mathscr{R}$ satisfy the requirements in \cite[Lemma~2]{kowalczyk2016hol23}. Then for any $x, y\in\mathbb{C}$ and $l\in\mathbb{Z}^+$, $\hol^l(\mathscr{G}\cup\{[x, y]\}\mid\mathscr{R})\leq_T\hol^l(\mathscr{G}\mid\mathscr{R})$.
\end{lemma}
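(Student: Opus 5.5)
The plan is to redo the interpolation argument behind \cite[Lemma~2]{kowalczyk2016hol23} one level up: instead of interpolating the $\hol$ value, interpolate its $l$-th power, which is a polynomial of the same kind with more coefficients. As I recall it, the hypothesis of \cite[Lemma~2]{kowalczyk2016hol23} says that from $\mathscr{G}\mid\mathscr{R}$ one can build, in time polynomial in $s$, gadgets realizing unary signatures $[x_s,y_s]$ for $s=0,1,2,\dots$ that are pairwise linearly independent, or at least polynomially many of them. The original proof then substitutes these gadgets into the instance and recovers the $\hol$ value with $[x,y]$ by interpolation. I will reuse exactly this hypothesis and these gadgets, and change only the interpolation step.

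Fix an instance $\sg$ of $\hol^l(\mathscr{G}\cup\{[x,y]\}\mid\mathscr{R})$ with $n$ occurrences of $[x,y]$. Group the edge assignments of $\sg$ by how many of the $n$ unary vertices receive input $0$. This gives
\[
\hol_\sg=\sum_{i=0}^{n}c_i\,x^{i}y^{n-i},
\]
where the $c_i$ depend only on the rest of the grid. Let $\sg_s$ be the grid obtained by replacing every $[x,y]$ with the gadget realizing $[x_s,y_s]$; it is a legitimate instance of $\hol^l(\mathscr{G}\mid\mathscr{R})$. Its value is
\[
(\hol_{\sg_s})^l=\Bigl(\sum_{i=0}^{n}c_i\,x_s^{i}y_s^{n-i}\Bigr)^l=\sum_{j=0}^{ln}c'_j\,x_s^{j}y_s^{ln-j},
\]
a homogeneous polynomial of degree $ln$ in $(x_s,y_s)$. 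The key point is that the coefficients $c'_j$ do not depend on $s$, since they are determined by the $c_i$ alone.

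For interpolation I need $ln+1$ values of $s$. I query the oracle on $\sg_s$ for these values, so I need $ln+1$ pairwise linearly independent $[x_s,y_s]$. Then I separate cases on the $y_s$.
\begin{itemize}
\item If all chosen $y_s\neq0$, I divide the $s$-th equation by $y_s^{ln}$. This yields a Vandermonde system in the distinct ratios $x_s/y_s$, which determines every $c'_j$.
\item If some $y_s=0$, at most one such $s$ occurs, by pairwise independence. I handle it by taking one extra point, or by swapping the roles of $x$ and $y$.
\end{itemize}
Once the $c'_j$ are known, I output $\sum_j c'_j x^{j}y^{ln-j}$, which equals $(\hol_\sg)^l$. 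The original $c_i$ are never needed, so the ambiguity in taking $l$-th roots over $\mathbb{C}$ never arises. This is the same trick used later in the proof of Theorem~\ref{thm:maintds}.

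I expect the main obstacle to be bookkeeping rather than mathematics. I have to check that the hypothesis of \cite[Lemma~2]{kowalczyk2016hol23} really supplies enough independent unaries: $ln+1$ of them, not just $n+1$. This is fine because $l$ is a fixed constant, so $ln+1$ is still polynomial in the instance size. I also have to check that the gadget substitution preserves whatever graph restrictions are in force, for instance planarity in the $\plhol^l$ version. That follows because a unary gadget is attached to a single edge, exactly as in the original lemma.
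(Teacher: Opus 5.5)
Your proposal is correct and follows the paper's argument. The paper likewise expands $\bigl(\sum_{i} c_i X_k^i Y_k^{n-i}\bigr)^l$ into a homogeneous polynomial of degree $ln$ whose coefficients do not depend on the gadget, and obtains the needed $ln+1$ pairwise linearly independent unaries by invoking \cite[Lemma 1]{kowalczyk2016hol23} with parameter $ln+2$ instead of $n+2$; that extra point plays the same role as your discarded $y_s=0$ vector, and as you note, only the expanded coefficients are recovered, so no $l$-th roots are ever taken.
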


The key of this lemma's proof lies in the interpolation. Similar to the proof of Theorem~\ref{thm:maintds}, the polynomial used for interpolation changes to:
\[
\left(\sum_{0\leq i\leq n}c_iX_k^iY_k^{n-i}\right)^l=\sum_{0\leq i\leq ln}c_i'X_k^iY_k^{ln-i}.
\]
The interpolation can be done if we get $ln+1$ pairwise linearly independent $[X_k, Y_k]$ vectors (instead of $n+1$ in the original proof), which in turn can be done by using $ln+2$ (instead of $n+2$) as the parameter when invoking \cite[Lemma 1]{kowalczyk2016hol23}.

\begin{lemma}[from {\cite[Lemma 3]{kowalczyk2016hol23}}]\label{lem:hol23lem3}
Suppose that $(a, b)\in\mathbb{C}^2-\{(a, b):ab=1\}-\{(0, 0)\}$ and let $\mathscr{G}$ and $\mathscr{R}$ be finite signature sets where $[a, 1, b]\in\mathscr{G}$ and $=_3\in\mathscr{R}$. Further assume that for a specific $k\in\mathbb{Z}^+$, $\hol^k(\mathscr{G}\cup\{[x_i, y_i]:0\leq i<m\}\mid\mathscr{R})\leq_T\hol^k(\mathscr{G}\mid\mathscr{R})$ for any $x_i, y_i\in\mathbb{C}$ and $m\in\mathbb{Z}^+$. Then $\hol([0, 1, 1]\mid\allowbreak =_3)\leq_T\hol^k([0, 1, 1]\mid =_3)\leq_T\hol^k(\mathscr{G}\cup\{[0, 1, 1]\}\mid\mathscr{R})\leq_T\hol^k(\mathscr{G}\mid\mathscr{R})$, which is \#P-hard.
\end{lemma}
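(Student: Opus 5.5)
The plan is to prove the chain of three reductions one link at a time, following the structure of the proof of \cite[Lemma 3]{kowalczyk2016hol23}. The hardness of the leftmost problem $\hol([0,1,1]\mid =_3)$ is known: it is \prob{\#3R-Vertex Cover}, in bipartite form with $=_3$ on the vertices and $[0,1,1]$ on the edges. The final claim of \#P-hardness then follows from the composition of the links.

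\textbf{First link.} $\hol([0,1,1]\mid =_3)\leq_T\hol^k([0,1,1]\mid =_3)$ holds because every signature involved is non-negative. The value is therefore a non-negative integer, namely the number of vertex covers. From its $k$-th power we recover the value by taking the unique non-negative real $k$-th root, which can be computed exactly in polynomial time by binary search on integers.

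\textbf{Third link.} $\hol^k(\mathscr{G}\cup\{[0,1,1]\}\mid\mathscr{R})\leq_T\hol^k(\mathscr{G}\mid\mathscr{R})$ is the substantive part. I will reuse the construction of \cite[Lemma 3]{kowalczyk2016hol23}. That construction uses $[a,1,b]$, $=_3$ and finitely many unary signatures $[x_i,y_i]$ to build gadgets, and then interpolates so as to realize $[0,1,1]$. The standing hypothesis of the lemma lets us add the unaries in the $\hol^k$ setting at no cost. The gadget realization step carries over verbatim, since a gadget substitution changes the instance but not the fact that we output the $k$-th power of the resulting Holant value.

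The only step needing care is any interpolation inside that proof. If the original value is a polynomial $\sum_{i\le n} c_i \mu_s^i \nu_s^{n-i}$ in gadget eigenvalues, then its $k$-th power is again a polynomial of the same homogeneous form, of degree $kn$, with new coefficients $c_i'$. As in the proof of Theorem \ref{thm:maintds} and Lemma \ref{lem:hol23lem2}, we can interpolate the $c_i'$ using $kn+1$ distinct instances instead of $n+1$, under the same non-root-of-unity condition. We then re-evaluate at the target parameters. This yields the $k$-th power of the target Holant value directly, without ever extracting roots of complex numbers.

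\textbf{Second link.} $\hol^k([0,1,1]\mid=_3)\leq_T\hol^k(\mathscr{G}\cup\{[0,1,1]\}\mid\mathscr{R})$ is trivial, since the instances of the former are instances of the latter, given $=_3\in\mathscr{R}$.

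\textbf{Expected obstacle.} I expect the main obstacle to be checking case by case that the original proof of \cite[Lemma 3]{kowalczyk2016hol23} uses only gadgets and interpolations of this "polynomial in a parameter" type. The concern is any step that divides by an unknown Holant value or uses a nonlinear recovery, such as a square root or the non-vanishing of a particular instance value; such a step would not commute with taking $k$-th powers. If such a step appears, I would first try to replace it by an interpolation on the $k$-th power polynomial. Failing that, I would try to argue that the quantity involved is a fixed constant computable independently of the instance.
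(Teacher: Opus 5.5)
Your proposal is correct and follows the same route as the paper. The first link uses non-negativity (your integer $k$-th root via binary search makes this explicit), the second is inclusion of instances, and the third transfers the cited construction, since gadget substitution leaves the $\hol$ value, and hence its $k$-th power, unchanged. The paper also notes that the proof of \cite[Lemma 3]{kowalczyk2016hol23} uses only gadget construction with the hypothesized unary signatures, so the interpolation case you prepare for in the third link never arises.
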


This lemma is actually the start of the chain of reduction, and is what makes the analysis of the complexity of $\hol^k$ possible. Since $\hol([0, 1, 1]\mid =_3)$ is exactly \prob{\#3R-Vertex Cover} which is \#P-hard \cite{xia2007vcmatching}, and since both of the signatures are non-negative, it can reduce to $\hol^k([0, 1, 1]\mid =_3)$. The correctness of the final reduction in the lemma is propagated from the proof of \cite[Lemma 3]{kowalczyk2016hol23}, since only gadget construction is used there, and it is clear that gadget construction remains applicable when considering $\hol^k$, since the $\hol$ value stays the same after substituting a signature with an equivalent gadget.

\begin{theorem}[from {\cite[Theorem 2]{kowalczyk2016hol23}}]
Suppose $a, b\in\mathbb{C}$ satisfy the requirements in \cite[Theorem 2]{kowalczyk2016hol23}. Then for any $k\in\mathbb{Z}^+$, $\hol^k([a, 1, b]\mid =_3)$ is \#P-hard.
\end{theorem}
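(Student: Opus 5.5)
We follow the proof of \cite[Theorem 2]{kowalczyk2016hol23} step by step and check that each ingredient survives the passage from $\hol$ to $\hol^k$. In the original proof, hardness of $\hol([a, 1, b]\mid =_3)$ for the stated range of $(a, b)$ comes from one chain. First, gadget constructions over $\{[a, 1, b]\}\mid\{=_3\}$ yield auxiliary signatures. Then unary signatures $[x, y]$ are interpolated, via \cite[Lemma 1, Lemma 2]{kowalczyk2016hol23}. Finally \cite[Lemma 3]{kowalczyk2016hol23} is invoked to reach $\hol([0, 1, 1]\mid =_3)$. Our plan is to produce the same chain with $\hol^k$ in place of $\hol$ and end at Lemma \ref{lem:hol23lem3}, whose conclusion already gives \#P-hardness of $\hol^k(\mathscr{G}\mid\mathscr{R})$.

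\textbf{Gadget steps.} Every reduction that only replaces a signature by an equivalent gadget leaves the $\hol$ value unchanged on each instance. Hence it leaves the $k$-th power unchanged and transfers verbatim. The same holds for the case splits on $(a, b)$ in the original theorem, such as $ab=1$ versus not, and $(a,b)=(0,0)$ excluded. These are properties of the signature, not of the counting procedure. So we take the original case analysis as given and only re-examine steps that use interpolation.

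\textbf{Interpolation steps.} Wherever the original proof obtains unary signatures $[x_i, y_i]$, we invoke Lemma \ref{lem:hol23lem2} with $l=k$ in place of \cite[Lemma 2]{kowalczyk2016hol23}. Applying it $m$ times gives
\[
\hol^k(\mathscr{G}\cup\{[x_i, y_i]:0\leq i<m\}\mid\mathscr{R})\leq_T\hol^k(\mathscr{G}\mid\mathscr{R}),
\]
which is exactly the hypothesis of Lemma \ref{lem:hol23lem3}. Since $[a, 1, b]\in\mathscr{G}$ and $=_3\in\mathscr{R}$ with $(a, b)$ in the required range, Lemma \ref{lem:hol23lem3} then gives the conclusion. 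Some branches of the original proof may interpolate binary rather than unary signatures, or use a recursive construction whose eigenvalue ratio must not be a root of unity. In those branches we use the device from the proof of Theorem \ref{thm:maintds}. We expand the $k$-th power into a single polynomial of degree $kn$ in the interpolated quantities, and take $kn+1$ (or correspondingly many) distinct instances. The recovered coefficients $c'$ then determine the $k$-th power of the target $\hol$ value, and we never need the original coefficients $c$ individually.

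\textbf{Main obstacle.} We expect the difficulty to be bookkeeping rather than a new idea. Any step in the original paper that recovers $\hol$ itself, rather than a polynomial in its values, must instead be shown to recover only the $k$-th power, which is all we need. We must also confirm that no step uses a root of the partition function or a sign-sensitive argument; we believe no such step exists, since all of them are Vandermonde-type. The delicate points to verify are the parameter changes in \cite[Lemma 1]{kowalczyk2016hol23}, namely calling it with $kn+2$ in place of $n+2$. This still yields polynomially many pairwise independent vectors, because $k$ is a fixed constant. The chain then closes at $\hol^k([0, 1, 1]\mid =_3)$, which is \#P-hard by the reduction from $\hol([0,1,1]\mid =_3)$ noted after Lemma \ref{lem:hol23lem3}.
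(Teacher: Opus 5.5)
Your proposal is correct and is essentially the paper's argument: the hypotheses of \cite[Lemma 2]{kowalczyk2016hol23} are verified exactly as in the original, so Lemma \ref{lem:hol23lem2} with $l=k$ supplies the unary signatures required by Lemma \ref{lem:hol23lem3}, which then gives \#P-hardness. Your hedging about possible binary-interpolation branches is not needed, since the original proof of \cite[Theorem 2]{kowalczyk2016hol23} consists only of this verification followed by an application of \cite[Lemma 3]{kowalczyk2016hol23}.
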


The proof of \cite[Theorem 2]{kowalczyk2016hol23} simply verifies that the criteria for \cite[Lemma 2]{kowalczyk2016hol23} are met, and then applies \cite[Lemma 3]{kowalczyk2016hol23}; similarly, we use Lemmas \ref{lem:hol23lem2} \& \ref{lem:hol23lem3} here.
\bigbreak
\cite[Lemma 4]{kowalczyk2016hol23} holds for $\hol^k$ with the same statement. The interpolation mentioned there is analogous to the one used in \cite[Lemma 2]{kowalczyk2016hol23}, so for $\hol^k$ it is analogous to Lemma~\ref{lem:hol23lem2}.

\begin{theorem}[from {\cite[Theorem 3]{kowalczyk2016hol23}}]
Suppose $a, b\in\mathbb{C}$ satisfy the requirements in \cite[Theorem 3]{kowalczyk2016hol23}. Then for any $k\in\mathbb{Z}^+$, $\hol^k([a, 1, b]\mid =_3)$ is \#P-hard.
\end{theorem}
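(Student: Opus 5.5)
We will follow the structure of the proof of \cite[Theorem 3]{kowalczyk2016hol23} step by step and check that each ingredient survives the passage from $\hol$ to $\hol^k$, exactly as was done for \cite[Theorem 2]{kowalczyk2016hol23} above. In the original, the proof has three parts. First, gadget constructions realize from $[a,1,b]$ and $=_3$ an auxiliary binary or unary signature. Second, interpolation via \cite[Lemma 4]{kowalczyk2016hol23} obtains enough unary signatures $[x_i,y_i]$. Third, \cite[Lemma 3]{kowalczyk2016hol23} gives \#P-hardness. So the plan is to replace each cited lemma by its $\hol^k$ counterpart: Lemma~\ref{lem:hol23lem2}, the $\hol^k$ version of \cite[Lemma 4]{kowalczyk2016hol23} noted above, and Lemma~\ref{lem:hol23lem3}.

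First we will observe that every gadget used in the original proof is an honest $\{[a,1,b]\}\mid\{=_3\}$-gate. Replacing a signature by an equivalent gadget leaves the $\hol$ value of every instance unchanged, hence also its $k$-th power. Thus all gadget-based reductions $\hol(\mathscr{G}\cup\{g\}\mid\mathscr{R})\leq_T\hol(\mathscr{G}\mid\mathscr{R})$ transfer verbatim to $\hol^k$.

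Second, we handle the interpolation step, which is the only place where $\hol^k$ differs. Suppose an instance uses $n$ copies of the signature being interpolated, realized as a recursive gadget with eigenvalues $\lambda,\mu$ (or as a family of vectors $[X_s,Y_s]$). The $\hol$ value is then $\sum_{i\le n} c_i X_s^iY_s^{n-i}$. Its $k$-th power is again a homogeneous polynomial of degree $kn$, namely $\sum_{i\le kn}c_i'X_s^iY_s^{kn-i}$. As in the proof of Theorem~\ref{thm:maintds}, we never need the original $c_i$. We recover the $c_i'$ from $kn+1$ pairwise linearly independent vectors and evaluate at the target values, which gives the $k$-th power of the target $\hol$ value directly. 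The non-degeneracy hypotheses of \cite[Theorem 3]{kowalczyk2016hol23} (ratios of eigenvalues not roots of unity, nonzero determinant) guarantee infinitely many pairwise independent vectors. We will therefore just invoke \cite[Lemma 1]{kowalczyk2016hol23} with parameter $kn+2$ in place of $n+2$; this only changes a polynomial bound.

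Finally, having obtained $\hol^k(\{[a,1,b]\}\cup\{[x_i,y_i]\}\mid =_3)\leq_T\hol^k([a,1,b]\mid =_3)$ for arbitrary unaries, we apply Lemma~\ref{lem:hol23lem3} to conclude \#P-hardness. The main obstacle we expect is purely bookkeeping: some cases of \cite[Theorem 3]{kowalczyk2016hol23} may rely on reductions that are not pure gadgets or homogeneous interpolation, such as holographic transformations or a reduction through a different problem. For holographic transformations, we will check that the value is preserved exactly, not merely up to a sign or a root of unity. Multiplication by a known nonzero scalar is harmless, since it becomes a known $k$-th power. If a step uses an intermediate problem whose hardness is quoted for $\hol$ rather than $\hol^k$, we would route that hardness through Lemma~\ref{lem:hol23lem3} instead.
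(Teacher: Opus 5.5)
Your proposal is correct and is essentially the paper's argument. The paper notes that the proof of \cite[Theorem 3]{kowalczyk2016hol23} only checks the ($k$-independent) criteria of \cite[Lemma 4]{kowalczyk2016hol23}, whose interpolation lifts to $\hol^k$ exactly as in Lemma~\ref{lem:hol23lem2} by recovering the coefficients of the degree-$kn$ power polynomial, and then applies Lemma~\ref{lem:hol23lem3}. The contingencies you prepare for (holographic steps, hardness quoted only for $\hol$) do not arise here, so only the $\hol^k$ version of \cite[Lemma 4]{kowalczyk2016hol23} and Lemma~\ref{lem:hol23lem3} are actually needed.
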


Again, the proof of \cite[Theorem 3]{kowalczyk2016hol23} simply verifies the criteria for \cite[Lemma 4]{kowalczyk2016hol23}, and then applies \cite[Lemma 3]{kowalczyk2016hol23}; similar here.
\bigbreak
In the following results, $X=ab$ and $Y=a^3+b^3$ as denoted in the original paper. Results listed without explanation (or mentioned but omitted) can be generalized from their counterparts in the original paper either trivially (such as tractable cases), or by invoking already generalized lemmas and theorems combined with techniques that we have established to work for $\hol^k$ such as gadget construction.

\begin{lemma}[from {\cite[Lemma 5]{kowalczyk2016hol23}}]\label{lem:hol23lem5}
Let $G$ be a 3-regular graph. Then for any $k\in\mathbb{Z}^+$, there exists a polynomial $P(\cdot, \cdot)$ with two variables and integer coefficients such that for any signature grid $\sg$ having underlying graph $G$ and every edge labeled $[a, 1, b]$, $(\hol^k)_\sg=P(ab, a^3+b^3)$.
\end{lemma}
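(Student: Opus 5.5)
The plan is to reduce the statement to its $k=1$ case, which is \cite[Lemma 5]{kowalczyk2016hol23}, and then pass to the $k$-th power using closure of integer polynomials under multiplication. Concretely, we first recall (and, if needed, re-derive) that for a fixed 3-regular graph $G$, viewed as $\hol([a,1,b]\mid =_3)$ with $[a,1,b]$ on every edge, there is a polynomial $P_1\in\mathbb{Z}[X,Y]$ with $\hol_\sg=P_1(ab,a^3+b^3)$. Then we set $P=P_1^k$. Since $(\hol^k)_\sg=(\hol_\sg)^k$ by definition, this gives $(\hol^k)_\sg=P(ab,a^3+b^3)$, and $P$ still has integer coefficients. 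So the statement for $\hol^k$ carries over essentially verbatim.

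For completeness, here is how I would sketch the base case. Each edge assignment of $\hol([a,1,b]\mid =_3)$ is forced by the $=_3$ vertices to be a 0-1 assignment $\sigma$ on the vertices of $G$. The resulting weight is $a^{e_{00}}b^{e_{11}}$, where $e_{00}$ and $e_{11}$ count the monochromatic edges of each colour. Hence $\hol_\sg=\sum_\sigma a^{e_{00}(\sigma)}b^{e_{11}(\sigma)}$ lies in $\mathbb{Z}[a,b]$. Complementing $\sigma$ swaps $e_{00}$ and $e_{11}$, so this polynomial is symmetric in $a$ and $b$. By 3-regularity, $3|V_0|=2e_{00}+e_{01}$ and $3|V_1|=2e_{11}+e_{01}$, so $e_{00}-e_{11}\equiv 0\pmod 3$. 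Every monomial therefore has the form $(ab)^{\min}\cdot c^{3t}$ with $c\in\{a,b\}$. After symmetrization, each pair of monomials becomes $(ab)^{m}(a^{3t}+b^{3t})$. Finally, $a^{3t}+b^{3t}$ is an integer polynomial in $X=ab$ and $Y=a^3+b^3$, which follows by the usual Newton/power-sum recurrence: $p_{t}=Yp_{t-1}-X^3p_{t-2}$ for $p_t=a^{3t}+b^{3t}$.

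The only point needing care, and the step I would check most carefully, is the divisibility-by-3 argument. In particular, one must account for parallel edges and self-loops, which $\hol$ grids allow. With a self-loop, a vertex contributes 2 to its own degree count, and the relation $3|V_i|=2e_{ii}+e_{01}$ should still hold when each loop is counted as one edge. I would verify this explicitly before concluding. Beyond that, the $\hol^k$ step is immediate.
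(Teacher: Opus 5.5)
Your proposal is correct and matches the paper, which lists this lemma without proof as a trivial generalization of the $k=1$ case; taking $P=P_1^k$ is exactly that generalization, and your sketch of the base case is sound. The self-loop case you flagged also goes through: a loop at a vertex colored $i$ adds $2$ to the degree sum $3|V_i|$, is counted once in $e_{ii}$, and contributes its edge weight once, so $3 \mid e_{00}-e_{11}$ still holds.
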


\begin{theorem}[from {\cite[Theorem 4]{kowalczyk2016hol23}}]
If any of the following four conditions is true, then for any $k\in\mathbb{Z}^+$, $\hol^k([a, 1, b]\mid =_3)$ is solvable in FP:
\begin{enumerate}
    \item $X=1$;
    \item $X=Y=0$;
    \item $X=-1$ and $Y\in\{0, \pm2\mathfrak{i}\}$
    \item $4X^3=Y^2$ and the input is restricted to planar graphs.
\end{enumerate}
\end{theorem}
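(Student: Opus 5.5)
The plan is to reduce to the original tractability result, \cite[Theorem 4]{kowalczyk2016hol23}, using the trivial direction $\hol^k(\mathcal{F})\leq_T\hol(\mathcal{F})$. For each instance $\sg$ of $\hol^k([a, 1, b]\mid =_3)$, compute $\hol_\sg([a, 1, b]\mid =_3)$ in polynomial time and raise it to the $k$-th power. The reduction keeps the underlying graph, so planarity is preserved in case 4.

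The conditions on $X$ and $Y$ are exactly those of \cite[Theorem 4]{kowalczyk2016hol23}, so in each case the algorithm there computes $\hol_\sg$ in FP. Cases 1 to 3 are handled by the algorithms for general graphs, and case 4 by the algorithm for planar graphs.

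The one point needing care is the model of computation. The output $\hol_\sg$ is a complex algebraic number lying in the field generated by $a$ and $b$. We should check that raising it to a fixed power $k\in\mathbb{Z}^+$ costs only polynomially many field operations, so by repeated squaring the representation size grows only polynomially. Lemma \ref{lem:hol23lem5} confirms consistency: $(\hol^k)_\sg$ is again a polynomial in $ab$ and $a^3+b^3$, namely the $k$-th power of the original one, $P_1(ab, a^3+b^3)^k$.

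There is no real obstacle here. This is the easy direction, in contrast with the hardness results, where the interpolation had to be redone with $ln+1$ points.
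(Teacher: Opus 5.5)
Your proposal is correct and matches the paper. The paper treats the tractable cases as trivial generalizations via $\hol^k(\mathcal{F})\leq_T\hol(\mathcal{F})$: you run the original algorithm of \cite[Theorem 4]{kowalczyk2016hol23} on the same graph (so planarity is kept in case 4) and raise the result to the $k$-th power; your appeal to Lemma \ref{lem:hol23lem5} is harmless but not needed.
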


\begin{theorem}[from {\cite[Theorem 5]{kowalczyk2016hol23}}]
Suppose $a, b\in\mathbb{C}$, $X, Y\in\mathbb{R}$, $X\neq 1$, $4X^3\neq Y^2$, and it is not the case that both $X=-1$ and $Y=0$. Then for any $l\in\mathbb{Z}^+$, $\hol^l([a, 1, b]\mid =_3)$ is \#P-hard.
\end{theorem}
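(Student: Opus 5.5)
The plan is to follow the proof of \cite[Theorem 5]{kowalczyk2016hol23} step by step. Each step is either a gadget construction, a symmetry or holographic rewriting, or an interpolation. For each kind we check that the $\hol^l$ version of \emph{this same step} still gives a valid reduction. Those constructions then lead either to a binary signature $[a',1,b']$ that meets the hypotheses of the generalized \cite[Theorem 2]{kowalczyk2016hol23} or \cite[Theorem 3]{kowalczyk2016hol23}, or directly to the hypotheses of Lemma \ref{lem:hol23lem3}. Either way we get $\hol([0,1,1]\mid =_3)\leq_T\hol^l([a,1,b]\mid =_3)$. Since the first problem is \#P-hard, this proves the statement.

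\textbf{First step: reduce to a convenient representative.} By Lemma \ref{lem:hol23lem5}, for every instance the value $(\hol^l)_\sg$ is a fixed integer polynomial in $X=ab$ and $Y=a^3+b^3$. So $\hol^l([a,1,b]\mid =_3)$ and $\hol^l([a',1,b']\mid =_3)$ are the same problem whenever $(a'b',a'^3+b'^3)=(X,Y)$. As in the original proof, we may therefore replace $(a,b)$ by whichever preimage of the real pair $(X,Y)$ is most convenient. In particular we may take $a,b$ real, or complex conjugates, with the case split according to the sign of $Y^2-4X^3$.

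\textbf{Second step: transfer each construction.} Two kinds of step carry over directly.
\begin{itemize}
\item Gadget constructions, such as recursive chains of $[a,1,b]$ and $=_3$ that realize new binary or unary signatures, transfer verbatim. Replacing a signature by a gadget realizing it leaves $\hol_\sg$ unchanged, hence leaves $(\hol_\sg)^l$ unchanged.
\item Wherever the original proof interpolates a unary signature $[x,y]$, or invokes \cite[Lemma 4]{kowalczyk2016hol23}, we use Lemma \ref{lem:hol23lem2} or the $\hol^l$ version of \cite[Lemma 4]{kowalczyk2016hol23} noted above.
\end{itemize}
In those interpolation steps, the target polynomial $\sum_i c_iX_k^iY_k^{n-i}$ is replaced by its $l$-th power, which is again homogeneous, now of degree $ln$. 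We therefore need $ln+1$ pairwise independent vectors $[X_k,Y_k]$, obtained by invoking \cite[Lemma 1]{kowalczyk2016hol23} with parameter $ln+2$. Any eigenvalue-ratio condition used in the original proof, such as a ratio not being a root of unity, is independent of the degree. It therefore needs no change.

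\textbf{Main obstacle.} The hard part is the subcases of the original proof where interpolation is applied to a \emph{binary} recursive gadget. In those subcases the coefficients $c_{i,j}$ of $\sum_{i+j=m}c_{i,j}(\lambda_1^i\lambda_2^j)^s$ are used to realize a specific target edge signature. For $\hol^l$ we cannot recover the $c_{i,j}$ themselves. We handle this exactly as in the proof of Theorem \ref{thm:maintds}. Expand the $l$-th power to $\sum_{i+j=lm}c'_{i,j}(\lambda_1^i\lambda_2^j)^s$, recover the $c'_{i,j}$ from $lm+1$ values of $s$, and then substitute the desired eigenvalues. This yields the $l$-th power of the target Holant value without ever recovering the $c_{i,j}$.

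To close the proof, we verify case by case that every interpolation in \cite[Theorem 5]{kowalczyk2016hol23} has this "evaluate at chosen eigenvalues" form rather than requiring individual $c_{i,j}$. We also check that the excluded cases $X=1$, $4X^3=Y^2$ and $(X,Y)=(-1,0)$ are exactly where the needed non-degeneracy conditions fail, as in the original proof.
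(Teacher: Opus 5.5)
\textbf{There is a genuine gap.}

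For the generic points your plan matches the paper. There the proof of \cite[Theorem 5]{kowalczyk2016hol23} only checks the hypotheses of \cite[Theorem 2]{kowalczyk2016hol23}, and these transfer via Lemma \ref{lem:hol23lem2} (parameter $ln+2$) and Lemma \ref{lem:hol23lem3}.

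The gap is the single point $(X,Y)=(0,-1)$, which by Lemma \ref{lem:hol23lem5} is the problem $\hol^l([0,1,-1]\mid =_3)$. At that point the original proof does not go through Theorem 2, Theorem 3 or Lemma 3. It simply cites the known \#P-hardness of $\hol([0,1,-1]\mid =_3)$. Your proposal never isolates this case:
\begin{itemize}
\item Your closing case check only inspects interpolation steps, and a citation is not one.
\item Your claim that every branch ends at Theorem 2, Theorem 3 or Lemma \ref{lem:hol23lem3} is false here.
\end{itemize}
The cited result cannot be imported either. Only $\hol^l\leq_T\hol$ is automatic. Since $[0,1,-1]$ has a negative entry, $(\hol_\sg)^l$ need not determine $\hol_\sg$; for even $l$, in particular $l=2$, the sign is lost. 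So hardness of $\hol([0,1,-1]\mid =_3)$ says nothing about $\hol^l([0,1,-1]\mid =_3)$.

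The missing idea is to open this black box and redo its hardness proof at the $\hol^l$ level, as the paper does:
\begin{enumerate}
\item Use the holographic equivalence $\hol^l([0,1,-1]\mid =_3)\equiv_T\hol^l([-1,0,1]\mid[0,1,1,1])$ from \cite{kowalczyk2009minus1}. A holographic transformation preserves each Holant value, hence its $l$-th power.
\item Interpolate arbitrary symmetric binary signatures on the LHS with the recursive gadget of \cite{kowalczyk2009minus1}. The polynomial $\sum_{i+j+k=n}c_{i,j,k}w^ix^jz^k$ is replaced by its $l$-th power, a polynomial of degree $ln$. Its coefficients are still recoverable under the conditions of \cite[Theorem 3.5 \& Lemma 5.2]{cai2012interpolation}.
\item Realize $[1,-2,3]$ and apply $\hol^l([1,-2,3]\mid[0,1,1,1])\equiv_T\hol^l([0,1,1]\mid =_3)$.
\item Conclude with $\hol([0,1,1]\mid =_3)\leq_T\hol^l([0,1,1]\mid =_3)$, which holds because \prob{\#3R-Vertex Cover} has non-negative signatures.
\end{enumerate}

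The binary-signature interpolation you flag as the ``main obstacle'' is misplaced. In the paper's account the generic cases need only the unary interpolation behind Theorem 2. Binary interpolation appears precisely in this exceptional case. There, your ``expand the $l$-th power, recover the new coefficients, evaluate at the desired values'' device is exactly what makes step 2 work. So your machinery can absorb the fix, but as written the proposal leaves $(X,Y)=(0,-1)$ unproved.
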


The proof of \cite[Theorem 5]{kowalczyk2016hol23} is mostly verifying the requirements of \cite[Theorem 2]{kowalczyk2016hol23} are met, however there is one specific case of $(X, Y)=(0, -1)$ mentioned there that does not follow this approach. According to Lemma \ref{lem:hol23lem5}, any $a, b\in\mathbb{C}$ such that $(X, Y)=(0, -1)$ do satisfy $\hol^l([a, 1, b]\mid =_3)\equiv_T\hol^l([0, 1, -1]\mid =_3)$, but since the \#P-hardness of the latter cannot be trivially generalized from that of $\hol([0, 1, -1]\mid =_3)$, we would need to find another way to prove this.

In \cite{kowalczyk2009minus1}, it is established that $\hol([-1, 0, 1]\mid [0, 1, 1, 1])\equiv_T\hol([0, 1, -1]\mid\allowbreak =_3)$ through holographic reduction, which is detailed around \cite[Theorem 1]{kowalczyk2009minus1} and also naturally holds for $\hol^l$, as the $\hol$ value stays the same after using the basis to switch signatures. Through the use of the gadget depicted in \cite[Fig. 1 (a)]{kowalczyk2009minus1} (the choice of which is confirmed by \cite[Appendix]{kowalczyk2009minus1}), $\hol([-1, 0, 1]\mid [0, 1, 1, 1])$ can interpolate arbitrary binary symmetric signatures on the LHS via the method detailed around \cite[Theorem 3]{kowalczyk2009minus1}. This also stands for $\hol^l$, as, similar to the changes made for Lemma \ref{lem:hol23lem2}, the polynomial becomes:
\begin{align*}
&\left(\sum_{i+j+k=n}c_{i, j, k}w^ix^jz^k\right)^l\\
={}&\sum_{i+j+k=ln}c_{i, j, k}'w^ix^jz^k
\end{align*}
where the specific value of $i+j+k$ does not affect the viability of interpolation (for the proof of the conditions, see \cite[Theorem 3.5 \& Lemma 5.2]{cai2012interpolation}, which are the same results the authors cite in their paper). After interpolating $[1, -2, 3]$ on the LHS, the holographic reduction $\hol([0, 1, 1]\mid =_3)\equiv_T\hol([1, -2, 3]\mid [0, 1, 1, 1])$ brings us to \prob{\#3R-Vertex Cover} once more.

To sum up, the following chain of reduction is established:
\begin{align*}
&\hol^l([0, 1, 1]\mid =_3)\\
\equiv_T{}&\hol^l([1, -2, 3]\mid [0, 1, 1, 1])\\
\leq_T{}&\hol^l([-1, 0, 1]\mid [0, 1, 1, 1])\\
\equiv_T{}&\hol^l([0, 1, -1]\mid =_3).
\end{align*}

\begin{theorem}[from {\cite[Theorem 6]{kowalczyk2016hol23}}]
Suppose $a, b\in\mathbb{C}$ such that $X\neq 1$, $4X^3\neq Y^2$ (equivalently, $a^3\neq b^3$), and $(X, Y)\neq (-1, 0)$. Then for any $k\in\mathbb{Z}^+$, $\hol^k([a, 1, b]\mid =_3)$ is \#P-hard.
\end{theorem}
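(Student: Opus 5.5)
The plan is to follow the structure of the proof of \cite[Theorem 6]{kowalczyk2016hol23} and to check that every step there survives the passage to $\hol^k$. That proof handles the complex case where $(X,Y)$ need not be real. It does so by building gadgets from $[a,1,b]$ and $=_3$ whose effective signatures, after normalisation, have new parameters $(X',Y')$. These gadgets either land in a region already shown \#P-hard by the generalized \cite[Theorems 2, 3, 5]{kowalczyk2016hol23}, or they supply a unary or binary signature with an eigenvalue ratio that is not a root of unity, which then feeds the interpolation of \cite[Lemma 4]{kowalczyk2016hol23}. Since gadget construction leaves the $\hol$ value of every instance unchanged, it leaves the $k$-th power unchanged as well, so every gadget step transfers verbatim.

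First I will list the tools from the original proof and sort them by kind.
\begin{itemize}
\item The purely gadget-based and holographic reductions carry over without change, for the reason already noted for Lemma \ref{lem:hol23lem3}.
\item Lemma \ref{lem:hol23lem5} lets us replace $[a,1,b]$ by any other pair with the same $(X,Y)$. Its $\hol^k$ version just raises the polynomial $P$ to the $k$-th power, so the $\hol^k$ value still depends only on $(X,Y)$.
\item The interpolation steps are exactly those of \cite[Lemma 2]{kowalczyk2016hol23} and \cite[Lemma 4]{kowalczyk2016hol23}. For these I will use Lemma \ref{lem:hol23lem2} and the $\hol^k$ form of Lemma 4. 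The only change is that the target polynomial $(\sum_i c_iX^iY^{n-i})^k$ has degree $kn$, so one needs $kn+1$ pairwise independent vectors, which is obtained by calling \cite[Lemma 1]{kowalczyk2016hol23} with parameter $kn+2$.
\end{itemize}

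The argument then runs case by case on $(X,Y)$, mirroring the original. When the original concludes \#P-hardness by invoking Theorems 2, 3 or 5, I instead invoke their $\hol^k$ generalizations stated above. When it invokes \cite[Lemma 3]{kowalczyk2016hol23}, I use Lemma \ref{lem:hol23lem3}, whose chain already begins from $\hol([0,1,1]\mid =_3)$ reducing to $\hol^k([0,1,1]\mid =_3)$ via non-negativity.

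The main obstacle is any step in the original proof that does not reduce to $\hol([0,1,1]\mid=_3)$ through gadgets and interpolation but relies on some other hard problem, or on reasoning that is not preserved under powering. An example would be a reduction that uses the exact value of one instance to cancel a sign, or one that passes to a special point such as the $(X,Y)=(0,-1)$ case. For each such exceptional point I will proceed as in the proof of the generalized Theorem 5. I will find a holographic or gadget equivalence to a problem whose hardness comes from a non-negative problem via interpolation with a polynomial raised to the $k$-th power. I will then check that the Cai--Lu--Xia conditions of Lemma \ref{lem:interpolation} do not depend on the total degree, so they still hold when the degree is $kn$ instead of $n$. Once these exceptional points are handled, the stated hardness of $\hol^k([a,1,b]\mid=_3)$ follows for all $a,b$ with $X\neq1$, $a^3\neq b^3$ and $(X,Y)\neq(-1,0)$.
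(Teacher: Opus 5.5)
Your proposal is correct and takes essentially the same approach as the paper. The paper gives no separate argument for this theorem: it says it follows by running the original proof of \cite[Theorem 6]{kowalczyk2016hol23} with the already generalized Lemmas \ref{lem:hol23lem2}, \ref{lem:hol23lem3}, \ref{lem:hol23lem5} and the $\hol^k$ versions of Theorems 2, 3 and 5 in place of the originals, with the exceptional point $(X,Y)=(0,-1)$ already handled inside the generalized Theorem 5, and with gadget steps carrying over unchanged and interpolation redone with the polynomial raised to the $k$-th power (degree $kn$), as you describe.
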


After the proof of \cite[Theorem 6]{kowalczyk2016hol23}, the authors note how all of the previously obtained hardness results also apply when the input graphs are restricted to planar graphs. All of the changes we have made so far to the proofs also maintain planarity.

The authors then focus on the case that is tractable for planar graphs but \#P-hard for general graphs.

\begin{theorem}[from {\cite[Lemma 15]{kowalczyk2016hol23}}]
For any $k\in\mathbb{Z}^+$, $\hol^k([a, 1, a]\mid =_3)$ is \#P-hard, unless $a\in\{0, \pm 1, \pm\mathfrak{i}\}$, in which case it is in FP.
\end{theorem}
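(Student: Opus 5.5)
We follow the structure of the proof of \cite[Lemma 15]{kowalczyk2016hol23}, checking at each step that it survives the passage from $\hol$ to $\hol^k$. The tractable cases $a\in\{0,\pm1,\pm\mathfrak{i}\}$ are immediate. For $a\in\{0,\pm1\}$ the signature $[a,1,a]$ is degenerate, Gen-Eq-like, or affine, so $\hol([a,1,a]\mid =_3)$ is in FP. For $a=\pm\mathfrak{i}$ we have $X=-1$ and $Y=a^3+a^3=\mp 2\mathfrak{i}$, which is case 3 of the generalized \cite[Theorem 4]{kowalczyk2016hol23}. In each case we compute $\hol_\sg$ in polynomial time and raise it to the $k$-th power, so $\hol^k$ is in FP as well.

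For hardness, first note that the case $a^3\neq b^3$ is already covered by the generalized \cite[Theorem 6]{kowalczyk2016hol23}, and here $b=a$. We therefore only need $[a,1,a]$ with $a\notin\{0,\pm1,\pm\mathfrak{i}\}$, where $X=a^2$ and $Y=2a^3$, so $4X^3=Y^2$. The original proof handles this by a gadget construction on $[a,1,a]\mid =_3$, typically a non-planar one, since the planar case is tractable. The gadget realizes a new symmetric binary signature $[a',1,b']$ with $a'^3\neq b'^3$, $a'b'\neq 1$ and $(a'b',a'^3+b'^3)\neq(-1,0)$. Because gadget construction leaves the $\hol$ value of every instance unchanged, it preserves the $k$-th power too. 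This gives
\[
\hol^k([a',1,b']\mid =_3)\leq_T\hol^k([a,1,a]\mid =_3),
\]
and hardness of the left side follows from the generalized \cite[Theorem 6]{kowalczyk2016hol23}. Normalizing the middle entry of the gadget signature to $1$ only multiplies the value by a known nonzero scalar, which survives taking $k$-th powers. The degenerate case where the middle entry vanishes must be handled as in the original paper.

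The main obstacle is any step of \cite[Lemma 15]{kowalczyk2016hol23} that uses interpolation or a reduction from a problem not already generalized. Examples are interpolating unary signatures, or excluding finitely many exceptional values of $a$ where the gadget signature collapses into a tractable family and a second gadget, or an interpolation, is needed. An interpolation is handled exactly as in Lemma~\ref{lem:hol23lem2}: the value becomes $\bigl(\sum_i c_i X^iY^{n-i}\bigr)^k$, a polynomial of degree $kn$. We then use $kn+1$ pairwise independent points, that is, parameter $kn+2$ in \cite[Lemma 1]{kowalczyk2016hol23}, and recover the coefficients $c_i'$ without ever needing the $c_i$. If some exceptional $a$ is reached only through a reduction that exploits cancellation of signs, for instance by reducing from a real non-negative problem, we route it through Lemma~\ref{lem:hol23lem3}. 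Its chain begins at $\hol([0,1,1]\mid =_3)\leq_T\hol^k([0,1,1]\mid =_3)$, which holds because the problem is non-negative, so hardness is never lost when passing to powers.
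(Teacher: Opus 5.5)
There is a genuine gap: the central reduction you propose cannot exist. Both $[a,1,a]$ and $=_3$ are invariant under flipping every input bit, so every gadget built from them, and every signature interpolated from a family of such gadgets, is flip-invariant too. A symmetric binary signature realized this way always has the form $[x,y,x]$. After normalization this gives $a'=b'$, hence $a'^3=b'^3$, so you can never land in the regime of the generalized \cite[Theorem 6]{kowalczyk2016hol23} starting from $[a,1,a]$. That is exactly why $4X^3=Y^2$ is a separate case in \cite{kowalczyk2016hol23}. The same symmetry defeats your fallback. Lemma~\ref{lem:hol23lem3} requires interpolating arbitrary unary signatures $[x_i,y_i]$, but the only flip-invariant unaries are multiples of $[1,1]$. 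Your treatment of the tractable cases is fine.

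What you are missing is where the original argument actually breaks for $\hol^k$. For real $a$, \cite[Lemma 15]{kowalczyk2016hol23} just cites a known hardness result for $\hol([a,1,a]\mid =_3)$. When $a<0$ this does not transfer to $\hol^k$, because the $k$-th power of a possibly negative Holant value does not determine that value.

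The paper repairs this by staying inside the flip-invariant family. It uses a recursive gadget with transition matrix $\begin{pmatrix} a^3+1 & a^2+a\\ a^2+a & a^3+1\end{pmatrix}$, started from $[a,1,a]$. This matrix has determinant $(a^2-1)^2(a^2+1)$ and eigenvalues $(a-1)^2(a+1)$ and $(a^2+1)(a+1)$, and its eigenvectors are $(1,1)^{\mathrm{T}}$ and $(1,-1)^{\mathrm{T}}$. So for real $a\notin\{0,\pm1\}$ the interpolation conditions hold, and every $[c,1,c]$ can be interpolated on the LHS, using the degree-$kn$ polynomial as in Lemma~\ref{lem:hol23lem2}.

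Now choose $c\ge 0$ such that $\hol([c,1,c]\mid =_3)$ is \#P-hard. Non-negativity then gives
\[
\hol([c,1,c]\mid =_3)\le_T \hol^k([c,1,c]\mid =_3)\le_T \hol^k([a,1,a]\mid =_3),
\]
and the remaining steps of the original proof carry over as stated.
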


Here the only problem we run into is at the very beginning: the proof of \cite[Lemma~15]{kowalczyk2016hol23} cites a known result for when $a\in\mathbb{R}$, but this cannot be trivially generalized to $\hol^k$ due to $a$ being possibly negative. We therefore use interpolation to realize a non-negative signature on the LHS.

\begin{figure}[t!]
    \centering
    \includegraphics[width=6cm]{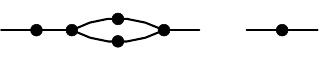}
    \caption{The recursive gadget (left) and the starter gadget (right).}
    \label{pic:a1ainterpol}
\end{figure}

As depicted in Figure \ref{pic:a1ainterpol}, we denote the signature of the starter gadget $H_0$ as $[x_0, y_0, x_0]=[a, 1, a]$, and $H_i$ (signature being $[x_i, y_i, x_i]$) as the gadget built by starting from $H_0$ and connecting the recursive gadget $i$ times to the left. We calculate the signature matrix of the recursive gadget to be $M=\begin{pmatrix}
    a^3+1 & a^2+a\\
    a^2+a & a^3+1
\end{pmatrix}$, and we have $\begin{pmatrix}
    x_{i+1}\\
    y_{i+1}
\end{pmatrix}=M\begin{pmatrix}
    x_i\\
    y_i
\end{pmatrix}$. By an analogous version of \cite[Lemma 4]{kowalczyk2016hol23}, we can interpolate arbitrary $[c, 1, c]$ on the LHS if the following conditions are met:
\begin{itemize}
    \item $\lvert M\rvert =(a^2-1)^2(a^2+1)\neq 0$;
    \item The ratio of the eigenvalues of $M$, $\lambda_1=(a-1)^2(a+1)$ and $\lambda_2=(a^2+1)(a+1)$, is not a root of unity;
    \item $\begin{pmatrix}
        x_0\\
        y_0
    \end{pmatrix}=\begin{pmatrix}
        a\\
        1
    \end{pmatrix}$ is not a column eigenvector of $M$.
\end{itemize}
When $a\in\mathbb{R}$, only $a=0$ or $\pm 1$ does not meet the conditions. For other cases, we can interpolate any non-negative $[c, 1, c]$ such that $\hol([c, 1, c]\mid =_3)$ is \#P-hard.

\begin{lemma}[from {\cite[Lemma 16]{kowalczyk2016hol23}}]
If $4X^3=Y^2$ (equivalently, $a^3=b^3$), then for any $k\in\mathbb{Z}^+$, $\hol^k([a, 1, b]\mid =_3)$ is \#P-hard unless $X\in\{0, \pm 1\}$, in which case it is in FP.
\end{lemma}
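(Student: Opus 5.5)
We follow the proof of \cite[Lemma 16]{kowalczyk2016hol23} and check at each step that it survives the passage to $\hol^k$. The condition $a^3=b^3$ means $b=\omega a$ with $\omega^3=1$.

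\textbf{Step 1: reduce to $b=a$.} We apply the orthogonal holographic transformation $\begin{pmatrix}1&0\\0&\omega\end{pmatrix}$ (with a suitable cube root). It fixes $=_3$ up to the factor $\omega^3=1$ and maps $[a,1,b]$ to a scalar multiple of $[a',1,a']$, where $a'^3=X^{3/2}$. We can equally invoke Lemma \ref{lem:hol23lem5}: the value $(\hol^k)_\sg$ depends only on $(X,Y)=(ab,a^3+b^3)$. Since $4X^3=Y^2$, we can choose $a'$ with $a'^2=X$ and $2a'^3=Y$, and then $[a,1,b]$ and $[a',1,a']$ give the same $\hol^k$ value on every 3-regular signature grid. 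This is an exact identity of the Holant values, so it transfers to $\hol^k$ with no loss, and the exponent $k$ plays no role.

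\textbf{Step 2: invoke the already generalized Lemma 15.} By the preceding theorem (generalized from \cite[Lemma 15]{kowalczyk2016hol23}), $\hol^k([a',1,a']\mid =_3)$ is \#P-hard unless $a'\in\{0,\pm1,\pm\mathfrak{i}\}$. These exceptional values correspond exactly to $X=a'^2\in\{0,\pm1\}$.

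\textbf{Step 3: tractable cases.} When $X\in\{0,\pm1\}$ the problem is in FP, since $\hol^k_\sg=(\hol_\sg)^k$ and $\hol$ itself is in FP there by the original lemma.

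\textbf{Main obstacle.} The only real hazard is the choice of $a'$. We must make sure the choice of square root of $X$ consistent with $Y$ lands us in the range covered by the generalized Lemma 15, which includes its new interpolation step for real $a$. For non-real $a'$, the original argument of \cite[Lemma 15]{kowalczyk2016hol23} uses only gadgets and interpolation, and the interpolation has already been adapted (as in Lemma \ref{lem:hol23lem2}) to handle the $k$-th power polynomial using $kn+1$ points. So I expect this step to go through once we check that no step cites an external hardness result for possibly negative or complex weights.
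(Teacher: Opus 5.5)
Your proof is correct and follows essentially the paper's route. The paper lists this lemma without a separate proof, as a result obtained by invoking already generalized lemmas. Your argument does exactly that: Lemma~\ref{lem:hol23lem5} (or the diagonal holographic transformation fixing $=_3$) replaces $[a,1,b]$ by $[a',1,a']$ with $a'^2=X$ and $2a'^3=Y$, and the generalized version of \cite[Lemma 15]{kowalczyk2016hol23} then gives hardness, while the tractable cases carry over directly. Your ``main obstacle'' is already handled, because that generalized lemma is stated for all $a\in\mathbb{C}$; the paper deals with real, possibly negative, $a$ there by interpolating a non-negative $[c,1,c]$. Calling $\mathrm{diag}(1,\omega)$ ``orthogonal'' is a misnomer, but this is harmless in the bipartite setting.
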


With everything above, we are ready to state the generalized version of \cite[Theorem 1]{kowalczyk2016hol23} (or \cite[Theorem 7]{kowalczyk2016hol23} which is equivalent).

\begin{theorem}[from {\cite[Theorem 1]{kowalczyk2016hol23}}]\label{thm:hol23thm1}
Suppose $a, b\in\mathbb{C}$, and let $X=ab$, $Z=(\frac{a^3+b^3}{2})^2$. Then for any $k\in\mathbb{Z}^+$, $\hol^k([a, 1, b]\mid =_3)$ is \#P-hard except in the following cases, for which the problem is in FP:
\begin{enumerate}
    \item $X=1$;
    \item $X=Z=0$;
    \item $X=-1$ and $Z=0$;
    \item $X=-1$ and $Z=-1$.
\end{enumerate}
For $\plhol^k([a, 1, b]\mid =_3)$, there is another case $X^3=Z$ which is in FP; all other cases remain \#P-hard.
\end{theorem}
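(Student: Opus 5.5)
Theorem \ref{thm:hol23thm1} will be proved by assembling the $\hol^k$ versions of the intermediate results above, following the case analysis of \cite[Theorem 1]{kowalczyk2016hol23} (equivalently \cite[Theorem 7]{kowalczyk2016hol23}). The parameter space $(a,b)\in\mathbb{C}^2$ is split according to $X=ab$ and $Y=a^3+b^3$, using $Z=(Y/2)^2$.

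\textbf{Tractability.} The four listed cases translate back into $X,Y$ as follows. $X=1$ is unchanged. $X=Z=0$ is $X=Y=0$. $X=-1,Z=0$ is $X=-1,Y=0$. $X=-1,Z=-1$ is $X=-1,Y=\pm2\mathfrak{i}$. These are exactly the cases in the generalized \cite[Theorem 4]{kowalczyk2016hol23}. Membership in FP is inherited trivially: if $\hol_\sg$ is computable in polynomial time, so is $(\hol_\sg)^k$. The planar case $X^3=Z$ is $4X^3=Y^2$, which is case 4 there.

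\textbf{Hardness.} I would cover the complement of the tractable set in three pieces.
\begin{itemize}
\item If $a^3\neq b^3$ (that is, $4X^3\neq Y^2$), $X\neq1$ and $(X,Y)\neq(-1,0)$, the generalized \cite[Theorem 6]{kowalczyk2016hol23} gives \#P-hardness for every $k$. This also absorbs the real case handled by the generalized \cite[Theorem 5]{kowalczyk2016hol23}, including the special point $(X,Y)=(0,-1)$ treated via \cite{kowalczyk2009minus1}.
\item If $4X^3=Y^2$ in general (non-planar) graphs, the generalized \cite[Lemma 16]{kowalczyk2016hol23} gives hardness unless $X\in\{0,\pm1\}$. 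Those exceptions land in the tractable list. $X=1$ is immediate. $X=0$ forces $Y=0$. $X=-1$ forces $Y^2=-4$, so $Z=-1$.
\item The remaining point $X=-1$, $4X^3\neq Y^2$ must have $Y=0$ to be excluded above, which is case 3.
\end{itemize}
A short consistency check confirms that these pieces, together with the tractable list, partition $\mathbb{C}^2$. For the planar statement, I would invoke the observation recorded after the generalized \cite[Theorem 6]{kowalczyk2016hol23}: every reduction used (gadgets, interpolation with $ln+1$ points, the holographic step) preserves planarity. The only extra planar-tractable region is $4X^3=Y^2$, and all other cases remain hard.

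\textbf{Main obstacle.} I expect the bookkeeping of the generalized \cite[Lemma 16]{kowalczyk2016hol23} to be the delicate step. Its proof in \cite{kowalczyk2016hol23} reduces via Lemma \ref{lem:hol23lem5}, which makes the value depend only on $(X,Y)$, to the symmetric case $[a',1,a']$. It then uses the generalized \cite[Lemma 15]{kowalczyk2016hol23}. That lemma in turn required the new interpolation to a non-negative $[c,1,c]$, because hardness cannot be transferred from $\hol$ to $\hol^k$ when values may be negative.

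I would check two things for complex $a'$ with $a'^3=b^3$, not just real $a'$:
\begin{itemize}
\item the eigenvalue-ratio and non-eigenvector conditions for $M=\begin{pmatrix}a^3+1&a^2+a\\a^2+a&a^3+1\end{pmatrix}$ fail only at $a\in\{0,\pm1,\pm\mathfrak{i}\}$;
\item otherwise, the target $[c,1,c]$ with $c>0$, $c\neq1$ is reachable, so that hardness of the non-negative problem lifts to $\hol^k$.
\end{itemize}
If the ratio condition fails at further complex points, I would first try composing with another gadget from \cite{kowalczyk2016hol23} to move to a different $a$.
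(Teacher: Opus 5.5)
Your assembly of the theorem is the same as the paper's, which simply states the theorem after collecting the generalized intermediate results. Specifically, the translation via $Z=Y^2/4$ of the four cases into the generalized \cite[Theorem 4]{kowalczyk2016hol23} is correct. So is covering the complement by the generalized \cite[Theorem 6]{kowalczyk2016hol23} and \cite[Lemma 16]{kowalczyk2016hol23}, with the exceptions $X\in\{0,\pm1\}$ on $4X^3=Y^2$ landing in the tractable list. The planarity remark is also what the paper relies on.

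The one thing to correct is the verification you plan in your ``main obstacle'' paragraph. There you expect the conditions for $M$ to fail only at $a\in\{0,\pm1,\pm\mathfrak{i}\}$, and over $\mathbb{C}$ this is false.

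\begin{itemize}
\item The ratio $\lambda_1/\lambda_2=(a-1)^2/(a^2+1)$ equals a root of unity $\omega\neq 1$ exactly when $(1-\omega)a^2-2a+(1-\omega)=0$. This happens at infinitely many non-real $a$, none of them in the tractable set.
\item For instance, at $a=e^{\mathfrak{i}\pi/3}$ one has $a^3+1=0$ and $M=\begin{pmatrix}0&\mathfrak{i}\sqrt3\\ \mathfrak{i}\sqrt3&0\end{pmatrix}$. Since $M^2$ is $-3$ times the identity, the iteration only ever produces $[a,1,a]$ and $[1,a,1]$ up to scalars, even though the problem is \#P-hard there.
\end{itemize}

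The paper never needs this. It uses the new interpolation only for real $a$, where it replaces the single step of \cite[Lemma 15]{kowalczyk2016hol23} that does not lift to $\hol^k$, namely the citation of the real-weight dichotomy. For real $a$ the ratio is a non-negative real equal to $1$ only at $a=0$. Hence the conditions fail only at $a\in\{0,\pm1\}$, and a positive $[c,1,c]$ with $c\neq 1$ can be interpolated. The non-real case is left to the original argument of \cite[Lemma 15]{kowalczyk2016hol23}, which the paper states lifts to $\hol^k$ unchanged.

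So restrict the interpolation to real $a$. Do not try to push this one gadget over all of $\mathbb{C}$ with an unspecified fallback of ``moving to a different $a$''.
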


\subsection{$\hol^k([q_0, q_1, q_2, q_3]\mid =_3)$}
We now work towards generalizing \cite[Theorem 8.1]{cai2023hol33}. In this subsection and the next, the default range of signatures is $\mathbb{Q}$.

\cite[Lemma 2.2]{cai2023hol33} stands for $\hol^k$ with the same statement. Similar to Lemma \ref{lem:hol23lem2}, the new polynomial becomes $\sum_{i+j=kn}c_{i, j}'(\lambda^i\mu^j)^s$, and we only need to retrieve $c_{0, kn}'$.

\begin{lemma}[from {\cite[Lemma 2.3]{cai2023hol33}}]
For any $k\in\mathbb{Z}^+$, for $\hol^k([1, a, b, c]\mid\allowbreak =_3)$, $a, b, c\in\mathbb{Q}$, $a\neq 0$, with the availability of binary degenerate straddled signature $\begin{pmatrix}
    y & xy\\
    1 & x
\end{pmatrix}$ (here $x, y\in\mathbb{C}$ can be arbitrary), in polynomial time
\begin{enumerate}
    \item we can interpolate $[y, 1]$ on the LHS,
    \[
    \hol^k(\{[1, a, b, c], [y, 1]\}\mid =_3)\leq_T\hol^k([1, a, b, c]\mid =_3);
    \]
    \item we can interpolate $[1, x]$ on the RHS,
    \[
    \hol^k([1, a, b, c]\mid\{=_3, [1, x]\})\leq_T\hol^k([1, a, b, c]\mid =_3),
    \]
    except for two cases: $[1, a, a, 1]$, $[1, a, -1-2a, 2+3a]$.
\end{enumerate}
\end{lemma}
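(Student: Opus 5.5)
The plan is to follow the proof of \cite[Lemma 2.3]{cai2023hol33} step by step and change only the interpolation step, exactly as in Lemma \ref{lem:hol23lem2}. The gadget constructions in the original proof stay valid for $\hol^k$, because replacing a signature by an equivalent gadget leaves the $\hol$ value unchanged, and therefore leaves its $k$-th power unchanged as well.

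\textbf{Part 1 (interpolating $[y,1]$ on the LHS).} In the original proof, the straddled signature $\begin{pmatrix} y & xy\\ 1 & x\end{pmatrix}$ is combined with $[1,a,b,c]$ and $=_3$ into a recursive chain. This chain produces unary signatures $[X_s,Y_s]$ on the LHS that satisfy a linear recurrence. Take an instance with $n$ occurrences of $[y,1]$ and replace each by $[X_s,Y_s]$. The value is $\sum_{0\le i\le n} c_i X_s^i Y_s^{n-i}$, so the $\hol^k$ value is $\bigl(\sum_i c_i X_s^iY_s^{n-i}\bigr)^k=\sum_{0\le i\le kn} c'_i X_s^iY_s^{kn-i}$. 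The problem is still homogeneous in $(X_s,Y_s)$, now of degree $kn$. So we need $kn+1$ pairwise linearly independent vectors $[X_s,Y_s]$ in place of $n+1$. The original argument already produces polynomially many such vectors, because it shows that the ratio of eigenvalues is not a root of unity, or that the iterates are otherwise distinct up to scaling. It therefore suffices to run it with parameter $kn+2$, as in Lemma \ref{lem:hol23lem2}. Once the $c'_i$ are recovered, we evaluate $\sum_i c'_i y^i 1^{kn-i}$, and this equals $(\hol_\sg)^k$ for the target instance containing $[y,1]$.

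\textbf{Part 2 (interpolating $[1,x]$ on the RHS).} The same scheme applies with the roles of the two sides exchanged. The original proof builds RHS unary gadgets from $=_3$, the straddled signature and $[1,a,b,c]$. It uses a recurrence whose conditions fail exactly for $[1,a,a,1]$ and $[1,a,-1-2a,2+3a]$, which is where the two exceptional cases come from. Because we reuse that recurrence unchanged, the exceptional set is also unchanged. The only modification is once more that the interpolated polynomial becomes the $k$-th power, a homogeneous polynomial of degree $kn$, which we recover from $kn+1$ independent points.

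The main obstacle is to check that the original proof uses only gadget constructions and homogeneous polynomial interpolation. If it takes some nonlinear step instead, such as extracting a square root or using positivity to decide a sign, that step does not obviously survive taking powers. My first move is to go through each sub-case of \cite[Lemma 2.3]{cai2023hol33}, including the sub-cases where a degenerate gadget lets one invoke \cite[Lemma 2.2]{cai2023hol33}, whose $\hol^k$ version was just noted to hold. In each sub-case I will confirm that the quantity interpolated is a polynomial in the gadget entries whose coefficients do not depend on $s$. If so, raising it to the $k$-th power keeps that form, and the reduction goes through with a polynomial increase in the number of queries.
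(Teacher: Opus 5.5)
There is a genuine gap. Your Parts 1 and 2 adapt an argument that is not the one in \cite[Lemma 2.3]{cai2023hol33}, and the step that actually needs adapting never appears in your proposal. As the paper summarizes it, that proof builds no recursive chain of unary gadgets $[X_s,Y_s]$ and never interpolates from $kn+1$ pairwise independent vectors; that is the mechanism of Lemma \ref{lem:hol23lem2}. The straddled signature is degenerate, $\begin{pmatrix} y & xy\\ 1 & x\end{pmatrix}=\begin{pmatrix} y\\ 1\end{pmatrix}\begin{pmatrix} 1 & x\end{pmatrix}$, so it hands you $[y,1]$ and $[1,x]$ directly. The only problem is what to do with the other half.

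The proof therefore has two steps.
\begin{enumerate}
\item Add to the instance disjoint auxiliary components built from $[1,a,b,c]$, $=_3$ and copies of the complementary unary ($[1,x]$ in part 1, $[y,1]$ in part 2). Each component has nonzero value, so the instance is multiplied by a known nonzero global factor $K$.
\item Pair every unary of the original instance with one complementary unary. Each pair is then exactly a copy of the straddled signature, which is interpolated from $[1,a,b,c]$ and $=_3$ by \cite[Lemma 2.2]{cai2023hol33}.
\end{enumerate}
For $\hol^k$ you then have two things to check. First, the factor becomes $K^k$, which is still known and nonzero, because value is multiplicative over disjoint unions and this commutes with $k$-th powers. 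Second, the straddled-signature interpolation must work in the $k$-th power version of \cite[Lemma 2.2]{cai2023hol33}. There the polynomial becomes $\sum_{i+j=kn}c'_{i,j}(\lambda^i\mu^j)^s$, and you only need $c'_{0,kn}=c_{0,n}^k$. Your checklist of ``gadget constructions and homogeneous interpolation'' does not contain the global-factor step at all.

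For the same reason your explanation of the exceptions is wrong. The straddled signature is a hypothesis of the lemma, so no recurrence can fail inside it. The two exceptional signatures in part 2 are those for which the auxiliary factor can be forced to vanish. For example, take $f=[1,a,-1-2a,2+3a]$ and $y=-1$. Transforming by $Z=\begin{pmatrix}1&1\\1&-1\end{pmatrix}$ gives $fZ^{\otimes 3}=[0,0,4(1+a),-4(1+3a)]$, $[y,1]Z=[0,-2]$ and $(Z^{-1})^{\otimes 3}(=_3)=\frac14[1,0,1,0]$. Suppose a component has $L$ copies of $f$, $m$ copies of $[y,1]$ and $R$ copies of $=_3$, so $3L+m=3R$. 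The number $N$ of edges assigned $1$ in a nonzero term must satisfy $2L+m\le N\le 2R=2L+\frac{2m}{3}$. This forces $m=0$, so every such component containing $[y,1]$ has value $0$.

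So what must survive the $k$-th power is the nonvanishing of the auxiliary factor, not the conditions of a recurrence. Once you replace your Parts 1 and 2 by the two steps above, the adaptation is immediate.
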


The proof essentially consists of two steps: adding new components to the graph which introduce non-zero global factors, and combining pairs of unary signatures into degenerate straddled signatures which can be interpolated using only $[1, a, b, c]$ and $=_3$. For $\hol^k$, the total factor is changed to its $k$th power as well, and the interpolation works due to the generalized version of \cite[Lemma 2.2]{cai2023hol33}.

\begin{lemma}[from {\cite[Lemma 2.4]{cai2023hol33}}]
For any $k\in\mathbb{Z}^+$, $\hol^k([1, a, a, 1]\mid =_3)$ is \#P-hard unless $a\in\{0, \pm 1\}$ in which case it is in FP.
\end{lemma}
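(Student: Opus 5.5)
We must prove that $\hol^k([1,a,a,1]\mid =_3)$ is \#P-hard unless $a\in\{0,\pm1\}$, and in FP otherwise. The tractable direction is immediate. For $a=0$ the signature is Gen-Eq. For $a=\pm1$ it is degenerate ($[1,1,1,1]=[1,1]^{\otimes3}$, $[1,-1,-1,1]$ hmm—rather $[1,-1,1,-1]=[1,-1]^{\otimes 3}$). For $[1,-1,-1,1]$ it is affine. In each case $\hol$ is computable in FP, and raising the value to the $k$th power stays in FP.

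For hardness, I will follow the proof of \cite[Lemma 2.4]{cai2023hol33} and make it reduce to a problem already handled by the generalized Theorem \ref{thm:hol23thm1}. The natural tool is a holographic transformation by $Z=\frac{1}{\sqrt2}\begin{pmatrix}1&1\\1&-1\end{pmatrix}$ (or the Hadamard-type basis used there). Under this basis, $=_3$ becomes a signature of the form $[1,0,1,0]$ up to scaling. The self-complementary signature $[1,a,a,1]$ becomes a signature supported only on even-weight inputs, namely $[1+3a,0,1-a,0]$ up to a constant. Holographic transformations preserve the $\hol$ value exactly, so they preserve the $\hol^k$ value as well; this step transfers verbatim.

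The resulting problem should then be brought, by gadget construction, to the form $\hol^k([a',1,b']\mid =_3)$ with parameters lying outside the tractable list of Theorem \ref{thm:hol23thm1}. The original proof does this by building a binary gadget from $[1,a,a,1]$ and $=_3$ and realizing it on the LHS. As observed after Lemma \ref{lem:hol23lem3}, gadget construction preserves $\hol$ values and therefore $\hol^k$ values, so this reduction also carries over unchanged.

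The main obstacle is any step where \cite{cai2023hol33} invokes a known result for real weights, or uses a hardness source whose $\hol^k$ version is not automatic because weights may be negative. For $a<0$, for example, the realized signature may have negative entries. There I would replace the citation by interpolation, as was done for $[a,1,a]$ in the proof above. Concretely, I would take a recursive binary gadget with a $2\times2$ transition matrix and check three conditions: nonzero determinant, eigenvalue ratio not a root of unity, and initial vector not an eigenvector. Because $a\in\mathbb{Q}$, the bad values should reduce to a finite list that can be checked by hand, hopefully exactly $\{0,\pm1\}$. Any interpolation step then uses polynomials of degree $kn$ instead of $n$, which only requires polynomially many more gadget sizes.

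After that, I would interpolate a non-negative hard signature such as $[c,1,c]$ or $[0,1,1]$ on the LHS, which reduces from \prob{\#3R-Vertex Cover}. Any remaining isolated rational values of $a$ where the ratio is a root of unity would be handled with a different gadget (for instance a longer chain) or by direct reduction to Theorem \ref{thm:hol23thm1}.
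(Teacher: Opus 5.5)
\textbf{Overall route.} Your route is the paper's. You carry the proof of \cite[Lemma 2.4]{cai2023hol33} over step by step, using that holographic transformations and gadget constructions preserve the $\hol$ value (hence its $k$-th power), and you land in Theorem~\ref{thm:hol23thm1}. That theorem is already stated for all complex $a,b$ in the $\hol^k$ setting. So your extra step of interpolating a non-negative $[c,1,c]$ when a realized binary has negative entries is unnecessary: once a binary $[a',1,b']$ outside its tractable cases is realized, you are done.

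\textbf{The gap.} You hope the exceptional values of the generic argument are exactly $\{0,\pm1\}$, and you defer anything else to ``a different gadget.'' They are not. The value $a=-1/3$, i.e.\ the signature $[3,-1,-1,3]$, is handled in \cite{cai2023hol33} by a separate chain of reductions, not by the generic argument. This is exactly the point where the paper has to add something for $\hol^k$.

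\textbf{How to close it.} You were one line away. Under your own Hadamard transformation, $[1,a,a,1]\mapsto[1+3a,0,1-a,0]$, whose weight-$0$ entry vanishes at $a=-1/3$.
\begin{itemize}
\item Up to a nonzero global factor, $\hol([3,-1,-1,3]\mid =_3)$ therefore equals $\hol([0,0,1,0]\mid[1,0,1,0])$.
\item Flipping $0\leftrightarrow 1$ turns this into $\hol([0,1,0,0]\mid[0,1,0,1])$.
\item In a $3$-regular bipartite graph $|U|=|V|$. Exactly $|U|$ edges are assigned $1$, and every RHS vertex needs at least one of them, so each gets exactly one.
\item This is $\hol([0,1,0,0]\mid[0,1,0,0])$, counting perfect matchings of $3$-regular bipartite graphs.
\end{itemize}
The ingredient your proposal lacks is that this endpoint is non-negative. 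Hence $\hol^k([0,1,0,0]\mid[0,1,0,0])\equiv_T\hol([0,1,0,0]\mid[0,1,0,0])$, which is \#P-hard. Each step of the chain changes the value only by a known nonzero factor, which is simply raised to the $k$-th power, so hardness transfers to $\hol^k([3,-1,-1,3]\mid =_3)$. Without isolating this case, your argument does not establish hardness at $a=-1/3$.
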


Aside from using gadget construction and invoking previously established lemmas, the proof of \cite[Lemma 2.4]{cai2023hol33} also uses \cite[Theorem 1]{kowalczyk2016hol23}, which we have generalized as Theorem \ref{thm:hol23thm1}. Every $\hol$ in the chain of reduction for $\hol([3, -1, -1, 3]\mid\allowbreak =_3)$ can also be changed to $\hol^k$, and we add $\hol^k([0, 1, 0, 0]\mid [0, 1, 0, 0])\equiv_T\hol([0, 1, 0, 0]\mid [0, 1, 0, 0])$ at the end due to the signatures being non-negative.

\begin{lemma}[from {\cite[Lemma 2.5]{cai2023hol33}}]
For any $k\in\mathbb{Z}^+$, $\hol^k([1, a, -2a-1, 3a+2]\mid =_3)$ is \#P-hard unless $a=-1$ in which case it is in FP.
\end{lemma}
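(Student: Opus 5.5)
We generalize the proof of \cite[Lemma 2.5]{cai2023hol33} in the same way as the preceding lemmas, checking that every step survives raising the partition function to the $k$th power. The tractable case comes first. For $a=-1$ the signature is $[1,-1,1,-1]$, which is affine (of the form $[a,-a,-a,a]$ up to the ordering convention, equivalently a degenerate/affine signature $[1,-1]^{\otimes 3}$). Hence $\hol([1,-1,1,-1]\mid =_3)$ is in FP, and so is its $k$th power, since we simply compute the $\hol$ value and raise it to the $k$th power.

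For $a\neq -1$, note that $[1,a,-2a-1,3a+2]$ is exactly the second exceptional family in the generalized \cite[Lemma 2.3]{cai2023hol33}. Interpolation of $[1,x]$ on the RHS is therefore not directly available, but interpolation of $[y,1]$ on the LHS still is. We follow the original argument in three steps:
\begin{enumerate}
\item Build, by gadget construction from $[1,a,-2a-1,3a+2]$ and $=_3$, a binary or ternary signature that lands in a family already handled. The natural targets are a symmetric binary $[c,1,d]$ on the LHS, which reduces to Theorem \ref{thm:hol23thm1}, or a $[1,b,b,1]$-type signature, which is handled by the generalized \cite[Lemma 2.4]{cai2023hol33}.
\item Use unary signatures realized on the LHS via part 1 of the generalized \cite[Lemma 2.3]{cai2023hol33} to adjust the gadget.
\item Verify that the resulting parameters avoid the tractable cases for all rational $a\neq -1$.
\end{enumerate}
Gadget construction preserves the $\hol$ value exactly, so it preserves the $\hol^k$ value. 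Global nonzero factors introduced by extra components simply become $k$th powers, which are still nonzero and can be divided out.

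The interpolation steps need separate justification. Whenever the original proof interpolates a signature through a recurrence with eigenvalues $\lambda,\mu$, the $\hol^k$ value is a polynomial of degree $kn$ rather than $n$, of the form $\sum_{i+j=kn}c'_{i,j}(\lambda^i\mu^j)^s$. This is precisely the generalized \cite[Lemma 2.2]{cai2023hol33}. So we only need $kn+1$ distinct values of $s$, under the same non-root-of-unity condition on $\lambda/\mu$ that the original proof already verifies. If the original proof appeals at some point to a known hardness result that is only stated for $\hol$ with possibly negative weights, such as a result over $\mathbb{R}$, we cannot pass to $\hol^k$ for free. In that case we replace the appeal by a reduction to one of our already generalized statements: Theorem \ref{thm:hol23thm1}, the $\hol^k$ versions of \cite[Lemma 2.3]{cai2023hol33} and \cite[Lemma 2.4]{cai2023hol33}, or a non-negative \#P-hard endpoint such as \prob{\#3R-Vertex Cover}, for which $\hol^k\equiv_T\hol$ trivially.

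The main obstacle is exactly this last point: confirming that no step of the original proof relies on a black-box hardness result for $\hol$ with negative values, which would not transfer to $\hol^k$ automatically. This mirrors the issue we met with $(X,Y)=(0,-1)$ and with \cite[Lemma 15]{kowalczyk2016hol23}. If such a step occurs, I would first try an interpolation of a non-negative signature, as done above for $[a,1,a]$. I would pick a recursive gadget whose transition matrix has eigenvalue ratio not a root of unity for rational $a\neq -1$, treating the finitely many exceptional values of $a$ separately by direct gadgets into the $[1,b,b,1]$ family.
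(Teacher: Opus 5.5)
\paragraph{Verdict.} There is a genuine gap. For $a\neq -1$ your proposal is a template for transferring an argument you have not identified; it is not itself an argument. You never name a gadget, the family it lands in, or a parameter check. The targets you guess do not match how this family is handled: an LHS binary $[c,1,d]$ via Theorem~\ref{thm:hol23thm1}, or a $[1,b,b,1]$ signature, assisted by LHS unaries from the generalized \cite[Lemma 2.3]{cai2023hol33}. You also give no reason to believe such a chain exists here, since $[1,a,-2a-1,3a+2]$ is precisely the exceptional family of part 2 of that lemma. Your bookkeeping remarks are correct: value-preserving constructions transfer, global factors become $k$th powers, and non-negative endpoints transfer for free. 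But they have nothing to act on. A minor point: $[1,-1,1,-1]$ is the degenerate signature $[1,-1]^{\otimes 3}$, not of the form $[a,-a,-a,a]$; your FP conclusion for $a=-1$ is still right.

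\paragraph{The missing idea.} This family needs no gadget or interpolation at all.
\begin{enumerate}
\item Since $f_j=(-1)^j\bigl(1-(a+1)j\bigr)$, the holographic transformation by $T=\begin{pmatrix}0&1\\1&-1\end{pmatrix}$ gives $\hol([1,a,-2a-1,3a+2]\mid =_3)=\hol([0,0,a+1,1]\mid[1,-1,1,0])$.
\item On a 3-regular bipartite graph, $|U|=|V|=n$. The LHS signature forces at least two edges assigned $1$ at every LHS vertex, and the RHS signature allows at most two at every RHS vertex. Hence every vertex has exactly two.
\item So the value is $(a+1)^n$ times the number of perfect matchings of $G$ (the $0$-edges form the matching). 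This is the problem $\hol([0,0,a+1,0]\mid[0,0,1,0])$ that the paper's one-line proof refers to.
\end{enumerate}

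The only $\hol^k$-specific issue is the one you anticipated: the weight $a+1$ may be negative. It is resolved by the global-factor observation the paper makes, not by a new interpolation. The factor $(a+1)^{kn}$ is known and nonzero for $a\neq -1$, so
\[
\hol^k([0,0,a+1,0]\mid[0,0,1,0])\equiv_T\hol^k([0,0,1,0]\mid[0,0,1,0])\equiv_T\hol([0,0,1,0]\mid[0,0,1,0]).
\]
The second equivalence holds by non-negativity. The last problem is counting perfect matchings in 3-regular bipartite graphs, which is \#P-hard. For $a=-1$ the value is identically $0$.
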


Here $\hol^k([0, 0, a+1, 0]\mid [0, 0, 1, 0])\equiv_T\hol^k([0, 0, 1, 0]\mid [0, 0, 1, 0])$ due to $a+1$ being a global factor.
\bigbreak
The results from \cite[Lemma 2.9]{cai2023hol33} to \cite[Theorem 7.1]{cai2023hol33} can all be generalized to $\hol^k$ by established methods without much to add, but we remark that the statement in \cite[Lemma 2.12]{cai2023hol33} (as well as \cite[Lemma 4.13]{cai2023p3em} which concerns the next subsection) should have the extra condition of the ratio of the eigenvalues not being a root of unity. The actual scenarios in the original paper where this lemma is used do satisfy the condition.

To complete the generalization of \cite[Theorem 7.1]{cai2023hol33}, we also note that $\hol^k(\allowbreak [0, 1, 0, 0]\mid\allowbreak =_3)\equiv_T\hol([0, 1, 0, 0]\mid =_3)$. We then state the generalized version of \cite[Theorem 8.1]{cai2023hol33}:

\begin{theorem}[from {\cite[Theorem 8.1]{cai2023hol33}}]\label{thm:hol33thm81}
For any $k\in\mathbb{Z}^+$, $\hol^k([q_0, q_1, q_2, q_3]\mid\allowbreak =_3)$ with $q_i\in\mathbb{Q}$ ($i=0, 1, 2, 3$) is \#P-hard unless the signature $[q_0, q_1, q_2, q_3]$ is degenerate, Gen-Eq or belongs to the affine class.
\end{theorem}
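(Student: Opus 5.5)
The plan is to follow the architecture of the proof of \cite[Theorem 8.1]{cai2023hol33} step by step. For each intermediate result used there, we check that its proof survives when every partition function is raised to the $k$th power, exactly as was done above for \cite{kowalczyk2016hol23}. The tractable directions are immediate. If $f$ is degenerate, Gen-Eq or affine, then $\hol([q_0,q_1,q_2,q_3]\mid =_3)$ is in FP, and $\hol^k$ is obtained by a single exponentiation. So all the work is in the hardness direction.

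For hardness, we split into the same cases as \cite{cai2023hol33}. First, if $q_0=q_3=0$ or a similar boundary case occurs, we normalize by flipping $0$ and $1$ and scaling, using the fact that a global nonzero factor $c$ becomes $c^k$, which is harmless. This leaves us with $[1,a,b,c]$. The two exceptional families of the unary-interpolation lemma are settled by the generalized \cite[Lemma 2.4]{cai2023hol33} and \cite[Lemma 2.5]{cai2023hol33} stated above, namely $[1,a,a,1]$ and $[1,a,-1-2a,2+3a]$.

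In the generic case we use the generalized \cite[Lemma 2.3]{cai2023hol33} to interpolate unary signatures $[y,1]$ on the LHS and $[1,x]$ on the RHS. We then realize binary symmetric signatures $[a',1,b']$ on the LHS via gadgets. Gadget constructions are valid for $\hol^k$ because they preserve the $\hol$ value instance-wise, so no change is needed. Finally we land in $\hol^k([a',1,b']\mid =_3)$, which is \#P-hard by Theorem \ref{thm:hol23thm1} whenever $(a',b')$ avoids its tractable list. The chain of lemmas \cite[Lemma 2.9]{cai2023hol33} through \cite[Theorem 7.1]{cai2023hol33} is treated uniformly. Wherever an interpolation appears, the polynomial $\sum_{i+j=n} c_{i,j}(\lambda^i\mu^j)^s$ is replaced by its $k$th power $\sum_{i+j=kn}c'_{i,j}(\lambda^i\mu^j)^s$. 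We then query $kn+1$ values of $s$ (or $kn+2$ where the original used $n+2$) and read off the single coefficient needed, such as $c'_{0,kn}$. This requires only that $\lambda/\mu$ is not a root of unity, the same condition as before, which we add explicitly to \cite[Lemma 2.12]{cai2023hol33}.

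The main obstacle is the places where \cite{cai2023hol33} terminates a reduction chain at a known \#P-hard problem whose hardness was proved for $\hol$ rather than $\hol^k$. Examples are perfect matchings $[0,1,0,0]$, vertex cover $[0,1,1]$, and Eulerian-orientation-type problems. The same difficulty arises wherever the original argument uses positivity or real-valuedness of a signature, since cancellation under $k$th powers could otherwise lose information. The rule we adopt is this: whenever a chain ends at a signature set with nonnegative real values, we append $\hol^k(\mathcal{F})\equiv_T\hol(\mathcal{F})$, since the $k$th root is recoverable. When a chain ends at a signature with negative entries, such as $[3,-1,-1,3]$ or $[0,1,-1]$, we do not rely on the cited hardness. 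Instead we continue the reduction through extra holographic transformations and interpolations until a nonnegative target appears, mirroring the treatment of $(X,Y)=(0,-1)$ and of $[a,1,a]$ above. Identifying every such endpoint in \cite{cai2023hol33} and supplying a nonnegative detour for each is the step I expect to require the most care.
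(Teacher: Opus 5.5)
Your proposal is correct and follows essentially the same route as the paper: a lemma-by-lemma transfer of \cite{cai2023hol33} to $\hol^k$ with the following ingredients.
\begin{itemize}
\item Interpolation uses $k$th-power polynomials, retrieving coefficients such as $c'_{0,kn}$.
\item Gadgets preserve the value instance by instance, and global factors simply become $k$th powers.
\item The extra root-of-unity hypothesis is added to \cite[Lemma 2.12]{cai2023hol33}, and the already generalized Theorem \ref{thm:hol23thm1} is invoked.
\item Reduction chains end at nonnegative problems, where $\hol^k\equiv_T\hol$: namely $\hol([0,1,0,0]\mid[0,1,0,0])$ (for $[3,-1,-1,3]$) and $\hol([0,1,0,0]\mid =_3)$.
\end{itemize}
One small slip: when $q_0=q_3=0$, flipping and scaling cannot bring the signature to the form $[1,a,b,c]$. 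That family is therefore covered by its own chain in \cite{cai2023hol33}, not by your normalization step, though your uniform treatment of the later lemmas would absorb it.
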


\subsection{$\plhol^k([q_0, q_1, q_2, q_3]\mid =_3)$}
We finish the proof by generalizing \cite[Theorem 2.1]{cai2023p3em}. Parts of this paper have a structure similar to \cite{cai2023hol33}, but with more details to be careful about due to the restriction of the input to planar graphs. The P3EM (planar 3-way edge matching) theorem \cite[Theorem 3.2]{cai2023p3em} is purely graph theoretic and can naturally also be used for $\hol^k$. We omit the generalization of \cite[Lemma 4.1]{cai2023p3em} which is trivial.

\begin{lemma}[from {\cite[Lemma 4.3]{cai2023p3em}}]
For any signature sets $\mathcal{F}, \mathcal{G}$, if the cross-over signature $\mathcal{C}$ can be planarly constructed or interpolated, then for any $k\in\mathbb{Z}^+$, $\hol^k(\mathcal{F}\mid\mathcal{G})\leq_T\plhol^k(\mathcal{F}\mid\mathcal{G})$.
\end{lemma}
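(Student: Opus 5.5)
The plan is to follow the standard argument that a planar crossover gadget removes the planarity restriction, and then check that it still works when every partition function is raised to the $k$th power. Start with an arbitrary instance $\sg$ of $\hol^k(\mathcal{F}\mid\mathcal{G})$ with underlying bipartite graph $G$. Fix a drawing of $G$ in the plane with polynomially many crossings, each involving exactly two edges; by subdividing edges with copies of $=_2$-type structure if needed, we can assume each crossing occurs between edges in the configuration the cross-over signature $\mathcal{C}$ expects. The cross-over $\mathcal{C}$ is the arity-4 signature $\mathcal{C}(x_1,x_2,x_3,x_4)=[x_1=x_3][x_2=x_4]$, with the bipartite orientation as in \cite[Lemma 4.3]{cai2023p3em}. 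Replacing each crossing by a vertex carrying $\mathcal{C}$ gives a planar signature grid $\sg'$ over $\mathcal{F}\cup\{\mathcal{C}\}\mid\mathcal{G}$ (or with $\mathcal{C}$ on the appropriate side) such that $\hol_{\sg'}=\hol_\sg$. Hence $(\hol^k)_{\sg'}=(\hol^k)_\sg$.

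Next I would remove $\mathcal{C}$ and treat the two ways it may be available.

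If $\mathcal{C}$ is planarly \emph{constructed} by an $(\mathcal{F}\mid\mathcal{G})$-gate, substitute the gadget for each occurrence. The $\hol$ value is unchanged, so its $k$th power is unchanged too, and the reduction goes through directly. If the gadget yields only a nonzero scalar multiple $c\mathcal{C}$, then a grid with $N$ crossings changes the value by $c^N$, so the power changes by $c^{kN}$, which can be divided out.

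If $\mathcal{C}$ is planarly \emph{interpolated}, there is a family of planar gadgets $H_s$ whose signatures lie in a fixed span, say $\sum_j \lambda_j^s B_j$ with a fixed basis $B_j$. The unknown is then a polynomial in the $\lambda_j^s$ with coefficients counting how many crossings are assigned each basis component. For the $k$th power, I would argue as in Lemma \ref{lem:hol23lem2} and the proof of Theorem \ref{thm:maintds}: the $k$th power of the degree-$N$ homogeneous polynomial is again a polynomial, now homogeneous of degree $kN$. Under the same nondegeneracy conditions (Lemma \ref{lem:interpolation}, or the original conditions from \cite{cai2023p3em}), we can recover all coefficients of this powered polynomial by querying $\binom{kN+d-1}{d-1}$ values of $s$, which is still polynomial for fixed $k$. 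Substituting the target values that make the signature equal to $\mathcal{C}$ gives $(\hol_{\sg'})^k$. This never requires extracting the original coefficients, and it avoids taking $k$th roots, which would lose sign or phase information.

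The main obstacle is this last point. The homogeneity and the independence of the monomials $\prod_j\lambda_j^{s\,i_j}$ over the enlarged index set $\sum i_j=kN$ must still hold. The condition $\prod\lambda_j^{d_j}\neq1$ for nonzero $d$ with $\sum d_j=0$ does not depend on the total degree, so the Vandermonde argument carries over unchanged, and only the number of queries increases.
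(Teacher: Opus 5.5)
Your proposal is correct and matches the paper's reasoning: substituting $\mathcal{C}$ exactly leaves the $\hol$ value, and hence its $k$th power, unchanged, and for interpolation you expand the $k$th power of the interpolation polynomial and recover the needed quantity directly, without ever taking $k$th roots. The paper only sketches this and defers the interpolation check to the concrete use case (\cite[Lemma 4.2]{cai2023p3em}), where the target is the top coefficient $c'_{kn}=c_n^k$; your formal substitution of target values into the degree-$kN$ homogeneous polynomial, e.g.\ $(\lambda_1,\lambda_2)\mapsto(1,0)$, covers exactly that case.
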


The statement in this lemma is itself true, as the method of adding cross-over signatures to the graph remains the same for $\hol^k$, so the reduction naturally holds if the cross-over signature can be constructed or interpolated. But it is also here that the lemma does not provide us with the whole picture, as the ability to interpolate depends on the scenario of the specific use case. As we can see in the next lemma, the interpolation "almost does not work", but it ultimately works and therefore enables us to carry out the reduction.

\begin{lemma}[from {\cite[Lemma 4.2]{cai2023p3em}}]
For any $k\in\mathbb{Z}^+$, $\plhol^k([1, a, 1, a]\mid\allowbreak =_3)$ is \#P-hard unless $a=0$ or $\pm 1$, in which cases it is in FP.
\end{lemma}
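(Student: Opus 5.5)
The plan is to follow the structure of the proof of \cite[Lemma 4.2]{cai2023p3em} and check each step for $\plhol^k$. Tractability is immediate. For $a=0$, $[1,0,1,0]$ is affine; for $a=\pm1$, $[1,\pm1,1,\pm1]$ is degenerate. The original polynomial-time algorithms compute $\plhol_\sg$ exactly, so raising the result to the $k$th power keeps the problem in FP.

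For hardness we go from the non-planar result to the planar one. By Theorem \ref{thm:hol33thm81}, $\hol^k([1,a,1,a]\mid =_3)$ is \#P-hard when $a\notin\{0,\pm1\}$, because $[1,a,1,a]$ is then neither degenerate, Gen-Eq nor affine. By the generalized \cite[Lemma 4.3]{cai2023p3em} stated just above, it remains to planarly construct or interpolate the cross-over signature $\mathcal{C}$ in $\plhol^k([1,a,1,a]\mid =_3)$.

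\begin{itemize}
\item We reuse the planar gadgets from the original proof. These are gadget families whose signature matrices satisfy a linear recurrence, and in the limit they produce the cross-over (possibly together with auxiliary binary or unary signatures that are interpolated first).
\item Gadget construction carries over verbatim, since replacing a signature by an equivalent gadget leaves $\plhol_\sg$, and hence its $k$th power, unchanged.
\item For each interpolation step, we adapt the argument as in Lemma \ref{lem:hol23lem2} and the proof of Theorem \ref{thm:maintds}. If $\mathcal{C}$ occurs $n$ times and the recurrence has eigenvalues $\lambda_1,\dots,\lambda_r$, then $\plhol_{\sg_s}$ is a polynomial $\sum c_{\mathbf{i}}\prod_j(\lambda_j^{i_j})^s$ with total degree $n$. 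Its $k$th power is a polynomial of the same shape with total degree $kn$ and new coefficients $c'_{\mathbf{i}}$.
\item The Vandermonde-type system in the $c'_{\mathbf{i}}$ is solvable from polynomially many values of $s$ (now $O((kn)^{r})$ rather than $O(n^r)$). It needs only the same conditions as in Lemma \ref{lem:interpolation}: nonzero determinant, an initial vector not orthogonal to any eigenvector, and no nontrivial lattice relation among the eigenvalues. None of these depend on the total degree.
\item Substituting the target eigenvalue values then evaluates the $k$th power of the $\hol$ value with cross-overs. At no point do we need to recover $\plhol$ itself, which is important because we cannot take $k$th roots unambiguously.
\end{itemize}

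The main obstacle is the step the text calls ``almost does not work''. In the original proof, for some values of $a$ the natural recursive gadget has an eigenvalue ratio that is a root of unity, or a degenerate starting vector. This is exactly the extra condition we noted must be added to \cite[Lemma 4.13]{cai2023p3em}.

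For those exceptional $a$ (a finite set of rationals, found by solving the algebraic conditions), we would first try an alternative gadget from the original paper, or compose two gadgets so that the eigenvalue ratio stops being a root of unity. If that fails, we would use the fact that for such specific $a$ the $\hol$ value is determined by a small reduction to another already-classified signature, for instance via a holographic transformation that preserves planarity. That reduction carries over to $\hol^k$ unchanged because it preserves the $\hol$ value exactly.

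Finally, we would check that every polynomial interpolation step in the chain is the same-$k$ kind, from $\hol^k$ to $\hol^k$. Any step that goes through a non-negative intermediate problem can instead use $\hol^k\equiv_T\hol$ for non-negative signatures, as was done for \prob{\#3R-Vertex Cover}.
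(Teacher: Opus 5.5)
Your skeleton (tractable cases, non-planar hardness from Theorem \ref{thm:hol33thm81}, and reduction to realizing $\mathcal{C}$ via the generalized \cite[Lemma 4.3]{cai2023p3em}) matches the paper. The gap is in the one step that needs an argument. You assume the crossover interpolation ends by ``substituting the target eigenvalue values'', so that taking $k$th powers commutes with the final evaluation. The interpolation in \cite[Lemma 4.2]{cai2023p3em} is not of that form. Its goal is the coefficient $c_n$ of $\sum_{i=0}^n c_i x_s^i$, which is not the value of the polynomial at any $x_s$.

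In the $\hol^k$ setting you only see $(\sum_{i=0}^n c_i x_s^i)^k=\sum_{i=0}^{kn}c_i' x_s^i$. You therefore cannot recover $c_n$ itself. For a general coefficient $c_j$, no coefficient of the $k$th power equals $c_j^k$, since the middle coefficients are mixed sums of products. What saves the argument, and is the whole content of the paper's proof, is that the needed coefficient is the extremal one: $c_n^k=c'_{kn}$. This is exactly the $k$th power of the $\hol$ value with every gadget replaced by $\mathcal{C}$, and it can be retrieved. Your phrase ``in the limit they produce the cross-over'' points the right way. But the substitution picture only applies if you read $c_n$ as the homogenized polynomial evaluated at the degenerate point $(1,0)$, and you must state and check that.

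The obstacle you single out instead is not one. Passing from $\hol$ to $\hol^k$ changes nothing about eigenvalue ratios, determinants or initial vectors. The conditions on $x_s$ are those of the original proof, and only the number of sample points grows from $n+1$ to $kn+1$. So there is no new set of exceptional rational $a$ to treat. Your fallback plan (alternative gadgets, composed gadgets, an unspecified holographic reduction) is unnecessary and too vague to count as a proof step. The paper's ``almost does not work'' refers to the coefficient-retrieval issue above. The added root-of-unity condition it mentions concerns \cite[Lemma 4.13]{cai2023p3em}, a different lemma.
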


We know according to Theorem \ref{thm:hol33thm81} that $\hol^k([1, a, 1, a]\mid =_3)$ is \#P-hard when $a\neq 0$ and $a\neq\pm 1$, so it indeed suffices to show that the cross-over signature $\mathcal{C}$ can be interpolated. The interpolation used in the proof of \cite[Lemma 4.2]{cai2023p3em} is rather special, as its ultimate goal is the retrieval of the coefficient $c_n$, which is not equivalent to anything calculated by substituting $x_s$ with a certain value in the polynomial $\sum_{i=0}^nx_s^i\cdot c_i$. Unlike the scenario in Lemma \ref{lem:hol23lem2}, we need to consider a coefficient of the original polynomial. Fortunately, all we need to obtain for the $\hol^k$ version of the problem is $c_n^k$, which equals $c_{kn}'$ in $(\sum_{i=0}^nx_s^i\cdot c_i)^k=\sum_{i=0}^{kn}x_s^i\cdot c_i'$. Therefore, the interpolation still works by retrieving $c_{kn}'$.
\bigbreak
\cite[Lemma 4.7]{cai2023p3em} can be generalized to $\plhol^k$, but we provide an alternative version here to facilitate our understanding of when the P3EM theorem is actually used in future proofs.

\begin{lemma}[from {\cite[Lemma 4.7]{cai2023p3em}}]
For any $k\in\mathbb{Z}^+$, for $\plhol^k([1, a, b, c]\mid\allowbreak =_3)$, $a, b, c\in\mathbb{Q}$, $a\neq 0$, with the availability of the binary degenerate straddled signature $\begin{pmatrix}
    y & xy\\
    1 & x
\end{pmatrix}$ where $x=\frac{\mathrm{\Delta}-(1-c)}{2a}$, $y=\frac{\mathrm{\Delta}+(1-c)}{2a}$ and $\mathrm{\Delta}=\sqrt{(1-c)^2+4ab}$, we have the following reductions:
\begin{enumerate}
    \item $\plhol^k([1, a, b, c]\mid\{=_3, [1, x]^{\otimes 3}\})\leq_T\plhol^k([1, a, b, c]\mid =_3)$ except for 2 cases: $[1, a, a, 1]$, $[1, a, -1-2a, 2+3a]$;
    \item $\plhol^k(\{[1, a, b, c], [y, 1]^{\otimes 3}\}\mid =_3)\leq_T\plhol^k([1, a, b, c]\mid =_3)$.
\end{enumerate}
\end{lemma}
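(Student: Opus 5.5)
We follow the proof of \cite[Lemma 4.7]{cai2023p3em} and check that each step survives raising the Holant value to the $k$th power. Following \cite[Lemma 2.3]{cai2023hol33}, the proof has two ingredients. The first is that the binary degenerate straddled signature $\begin{pmatrix} y & xy\\ 1 & x\end{pmatrix}=[y,1]\otimes[1,x]$ is planarly interpolable from $[1,a,b,c]$ and $=_3$. The second is that copies of this straddled signature can be glued together so that only the desired unary pieces remain. The chosen values $x=\frac{\pin-(1-c)}{2a}$ and $y=\frac{\pin+(1-c)}{2a}$ are exactly those making $[1,x]$ and $[y,1]$ eigenvector-related to the binary gadget $[1,a,b]$ combined with $=_3$.

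\textbf{Part 2.} We plan to construct $[y,1]^{\otimes 3}$ on the LHS. Take an $=_3$ vertex on the RHS and attach to each of its three edges a straddled signature, with its $[y,1]$ side facing the instance. The three $[1,x]$ sides are then joined planarly to a single $[1,a,b,c]$ vertex. That closed part contributes the scalar $\sum_w [1,a,b,c]_w x^{w}$, which is a global factor. Because the $[1,x]$ ends are clustered at one vertex, no crossings arise and the P3EM theorem is not needed. The straddled signature itself is interpolated by the $\plhol^k$ analogue of \cite[Lemma 2.2]{cai2023hol33}, which the paper has already established: writing the partition function as $\big(\sum_{i+j=n}c_{i,j}(\lambda^i\mu^j)^s\big)^k=\sum_{i+j=kn}c'_{i,j}(\lambda^i\mu^j)^s$, we retrieve $c'_{0,kn}$. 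This equals the $k$th power of the value of the instance with every binary replaced by the degenerate rank-one projection, and that projection is the straddled signature up to scalar. For $\hol^k$ the global factor $\kappa$ becomes $\kappa^k$, so we must check $\kappa\neq 0$ with the given $x$; this follows from the eigenvalue relation. If $\kappa=0$ for some special parameters, we would instead use a different closing gadget, for example two $[1,a,b,c]$ vertices.

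\textbf{Part 1.} Here we need $[1,x]^{\otimes 3}$ on the RHS, with the $[y,1]$ ends as dangling edges on the LHS side. Absorbing three $[y,1]$ ends into one $=_3$ on the RHS gives the factor $y^3+1$. However, $y^3+1$ or the analogous $[1,a,b,c]$-contraction vanishes precisely in the exceptional cases $[1,a,a,1]$ and $[1,a,-1-2a,2+3a]$, which are therefore excluded, matching \cite[Lemma 2.3]{cai2023hol33}. The main obstacle is planarity. The three $[y,1]$ ends produced around a vertex must be absorbed without crossings, and when they cannot simply be clustered we must use the P3EM theorem \cite[Theorem 3.2]{cai2023p3em}. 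P3EM pairs up the unary ends along a planar 3-way matching so that each $=_3$ closure is planar. Since P3EM is purely graph-theoretic, it transfers verbatim to $\hol^k$. The only change is that the accumulated global factor $\prod(\cdot)$ is raised to the $k$th power. This factor is nonzero outside the two exceptional cases, so dividing by it completes the Turing reduction.
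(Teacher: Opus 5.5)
Your overall plan (a clustered closing gadget for each ternary occurrence, with the global factor raised to the $k$th power) is the right one, and Part 2 follows the paper's reasoning. The proposal has two real problems, both in how the closing gadgets are handled.

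\textbf{Nonvanishing of the closing factor.} This is asserted, not proved, and as stated it is false. Closing three $[y,1]$ ends at one $=_3$ gives $y^3+1$, and you claim this vanishes precisely in the two exceptional cases. It does not.
\begin{itemize}
\item For $f=[1,1,-1,0]$ one has $\Delta^2=-3$, so $y=\frac{1\pm\sqrt{-3}}{2}=e^{\pm i\pi/3}$ and $y^3+1=0$ for either choice of root. Yet $f$ is not exceptional.
\item Conversely, for $[1,a,a,1]$ with $\Delta=2a$ one gets $y=1$ and $y^3+1=2$.
\end{itemize}
So in non-exceptional cases your Part 1 gadget yields the zero signature, and you give no replacement. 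Note that $[1,a,b,c]$ cannot be attached directly to $[y,1]$ ends, since those must go to RHS vertices.

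Part 2 has the same issue. The correct scalar is $1+3ax+3bx^2+cx^3$, and its nonvanishing does not ``follow from the eigenvalue relation''. For $[1,a,a,1]$ with $\Delta=-2a$ one gets $x=-1$ and the scalar is $0$. Your fallback with two $[1,a,b,c]$ vertices is unchecked.

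The paper does not rebuild these closing gadgets. It takes those of \cite[Figures 20 \& 21]{cai2023p3em} as given. It then only observes that, because the unaries now come in triples at a single vertex, those gadgets can be inserted without violating planarity, with any global factor simply raised to the $k$th power.

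\textbf{Misplaced use of P3EM in Part 1.} The point of stating the lemma with $[1,x]^{\otimes 3}$ rather than $[1,x]$ is that each occurrence is a single ternary vertex of the planar instance. Replace it by three straddled signatures whose $[y,1]$ ends meet at a closing gadget drawn where that vertex was. This is automatically planar, exactly as in your Part 2. There is never a situation where the ends ``cannot simply be clustered''.

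The paper explicitly stresses that P3EM is \emph{not} used in this lemma. It enters only at the first step of the reduction chain (cf.\ the generalization of \cite[Lemma 4.12]{cai2023p3em}), to turn pins $\pin_0$ into $\pin_0^{\otimes 3}$. P3EM is a statement about grouping matching edges of the instance graph into face-triples. It is not a device for absorbing the dangling ends of a gadget, so the step as you describe it does not make sense.

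Finally, the availability of the straddled signature is a hypothesis of the lemma. Your opening discussion of how to interpolate it is therefore unnecessary here, and its unconditional planar interpolability is not justified anyway.
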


The reductions work, since $[1, x]^{\otimes 3}$ or $[y, 1]^{\otimes 3}$ can be thought of as a single signature that guarantees $[1, x]$ or $[y, 1]$ always comes in three, and thus the gadgets in \cite[Figures 20 \& 21]{cai2023p3em} can be used to prove the reductions without violating planarity.

Note that per this version of the lemma, this is not where the P3EM theorem is used. We only merge every three ternary signatures with a $[1, x]^{\otimes 3}$ or $[y, 1]^{\otimes 3}$ into three binary signatures when we can use the acquired binary signature to directly invoke Theorem \ref{thm:hol23thm1}. In other words, for every future problem, we only use the P3EM theorem in the last step of the proof, or equivalently, the first step of the chain of reduction. We shall see the full picture later.

The authors mention in \cite[Remark 2]{cai2023p3em} that they say "interpolate" $[1, x]$ or $[y, 1]$ for the sake of simplicity (instead of being totally accurate). Our generalized version of the following lemma, however, is accurate, since we are using 3\textsuperscript{rd} tensor power.

\begin{lemma}[from {\cite[Lemma 4.10]{cai2023p3em}}]
Suppose $a, b, c\in\mathbb{Q}$, $a\neq 0$ and $c\neq ab$ and $a, b, c$ do not satisfy any condition in \cite[(4.2)]{cai2023p3em}. Let $x=\frac{\mathrm{\Delta}-(1-c)}{2a}$, $y=\frac{\mathrm{\Delta}+(1-c)}{2a}$ and $\mathrm{\Delta}=\sqrt{(1-c)^2+4ab}$. Then for any $k\in\mathbb{Z}^+$, for $\plhol^k([1, a, b, c]\mid\allowbreak =_3)$,
\begin{enumerate}
    \item we can interpolate $[y, 1]^{\otimes 3}$ on the LHS;
    \item we can interpolate $[1, x]^{\otimes 3}$ on the RHS except for 2 cases: $[1, a, a, 1]$, $[1, a, -1-2a, 2+3a]$.
\end{enumerate}
\end{lemma}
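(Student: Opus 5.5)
The plan is to follow the structure of the original proof of \cite[Lemma 4.10]{cai2023p3em}, which interpolates a unary $[y,1]$ on the LHS (resp.\ $[1,x]$ on the RHS) from a recursive gadget. We replace the target object by its third tensor power, so that every quantity we interpolate is an honest planar signature. Concretely, we take the planar recursive gadget of the original proof, whose signature sequence satisfies a linear recurrence. Its $s$-th iterate, after the degenerate straddled construction, has eigenvector decomposition $\lambda^s u + \mu^s w$, where $u$ is proportional to the degenerate straddled signature $\begin{pmatrix} y & xy\\ 1 & x\end{pmatrix}$. We then group three copies of this gadget around a single $=_3$ (for $[1,x]^{\otimes 3}$ on the RHS), or around a single $[1,a,b,c]$ (for $[y,1]^{\otimes 3}$ on the LHS), exactly as in \cite[Figures 20 \& 21]{cai2023p3em}. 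This yields a planar gadget whose dangling structure realizes a combination of $[1,x]^{\otimes 3}$ (resp.\ $[y,1]^{\otimes 3}$) plus unwanted terms, weighted by monomials $\lambda^{i}\mu^{3-i}$ raised to the $s$.

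Given an instance of $\plhol^k([1,a,b,c]\mid\{=_3,[1,x]^{\otimes 3}\})$ with $n$ occurrences of $[1,x]^{\otimes 3}$, we replace each occurrence by the $s$-th gadget. The Holant value becomes $\sum_{i+j=3n} c_{i,j}(\lambda^i\mu^j)^s$, where the desired value is (up to a nonzero known factor) the coefficient $c_{3n,0}$. We take the $k$-th power, as in the generalized \cite[Lemma 2.2]{cai2023hol33} and Lemma~\ref{lem:hol23lem2}:
\[
\Bigl(\sum_{i+j=3n} c_{i,j}(\lambda^i\mu^j)^s\Bigr)^k=\sum_{i+j=3kn} c'_{i,j}(\lambda^i\mu^j)^s .
\]
The coefficient $c'_{3kn,0}$ equals $c_{3n,0}^k$, which is exactly the $\hol^k$ value of the target instance. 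We query $3kn+1$ values of $s$ and solve the Vandermonde system, which is nonsingular provided $\lambda\mu\neq 0$ and $\lambda/\mu$ is not a root of unity. The hypotheses $a\neq 0$, $c\neq ab$, and the exclusion of the conditions in \cite[(4.2)]{cai2023p3em} are precisely what the original paper uses to guarantee nonvanishing and non-root-of-unity eigenvalue ratios. For the RHS part, the two exceptional signatures $[1,a,a,1]$ and $[1,a,-1-2a,2+3a]$ are where the relevant eigenvector degenerates, so we inherit these exceptions unchanged.

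The main obstacle will be the middle step: we must verify that the tripled gadget is planar, and that the unwanted terms separate cleanly by the exponent of $\lambda$. The issue is that in the original argument the unary was "interpolated" only loosely (\cite[Remark 2]{cai2023p3em}), with the three copies sitting at one vertex. We will check that the coefficient of $\lambda^{3n s}$ collects only those assignments in which every one of the $3n$ gadget copies contributes its $u$-component. This holds because each copy independently contributes either $\lambda^s u$ or $\mu^s w$. The sum of $u$-components around a single vertex then reassembles into $[1,x]^{\otimes 3}$ (resp.\ $[y,1]^{\otimes 3}$) with the straddled factors $y$ and $1$ absorbed into a global constant. That constant is nonzero (we need $y\neq 0$, which follows from $c\neq ab$), and its $k$-th power can be divided out.
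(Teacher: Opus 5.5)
Your $\hol^k$-specific step is sound, and it is what the paper relies on. After the replacement, you raise $\sum_{i+j=3n}c_{i,j}(\lambda^i\mu^j)^s$ to the $k$-th power and extract the extreme coefficient $c'_{3kn,0}=c_{3n,0}^k$. The tensor cube keeps each replacement local and planar.

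The gap is the absorption constant, which you never compute correctly. The $\lambda$-component of the straddled gadget is proportional to $(y,1)^{\T}(1,x)$.
\begin{itemize}
\item If you hang three copies on a single $=_3$, the three unused ends give $[y,1]^{\otimes 3}$ contracted with $=_3$. That is the constant $y^3+1$.
\item If you hang them on a single $[1,a,b,c]$, the constant is $1+3ax+3bx^2+cx^3$.
\end{itemize}
The hypotheses $a\neq 0$, $c\neq ab$ and the exclusion of (4.2) do not make either constant nonzero. Your side claim is also false: $c\neq ab$ does not force $y\neq 0$, since $[1,1,0,2]$ gives $y=0$. In any case $y\neq 0$ is not the relevant condition.

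Concretely, for $a\neq 0$ one has $y=-1$ exactly when $c=1+a-b$ and $\Delta=-(a+b)$. For example, $[1,1,-2,4]$ has $\Delta=1$ and $y=-1$, so $y^3+1=0$. Yet its straddled gadget $\begin{pmatrix}1&-2\\1&4\end{pmatrix}$ has eigenvalues $3$ and $2$, and it is not one of the two listed exceptions. Similarly, $[1,3,-3,-2]$ gives $y=e^{\pm\pi\mathfrak{i}/3}$ for either sign of $\Delta$. On such inputs your $c_{3n,0}$ is identically $0$, and the interpolation recovers nothing. So part 2 is not established as stated. Part 1 has the same unaddressed issue with $1+3ax+3bx^2+cx^3$.

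The missing piece is the content of the generalized Lemma 4.7 from \cite{cai2023p3em}, which the paper combines with the straddled interpolation. There, $[1,x]^{\otimes 3}$ and $[y,1]^{\otimes 3}$ are treated as single signatures. This lets several different planar absorbing gadgets (those of \cite[Figures 20 \& 21]{cai2023p3em}) close off the unused ends locally. A case analysis then shows that at least one of these gadgets has nonzero value.

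That case analysis is also where the exceptions come from. Both $[1,a,a,1]$ and $[1,a,-1-2a,2+3a]$ lie inside the family $c=1+a-b$, where $y$ can equal $-1$. They are not points where an eigenvector degenerates, as you suggest. For instance, for $[1,a,a,1]$ the gadget $\begin{pmatrix}1&a\\a&1\end{pmatrix}$ has distinct eigenvalues $1\pm a$ with independent eigenvectors $(1,\pm 1)^{\T}$.

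If you replace your single-vertex grouping and the ``$y\neq 0$'' argument with that absorption step and its exceptions, the rest of your interpolation argument goes through unchanged.
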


\begin{lemma}[from {\cite[Lemma 4.12]{cai2023p3em}}]
For any $k\in\mathbb{Z}^+$, if $\pin_0^{\otimes 3}$, $\pin_1^{\otimes 3}$ and $\pin_2^{\otimes 3}$ can be interpolated on the RHS in $\plhol^k([1, a, b, c]\mid =_3)$, where $a, b, c\in\mathbb{Q}$, $ab\neq 0$, then the problem is \#P-hard unless $[1, a, b, c]$ is affine or degenerate, in which case it is in FP.
\end{lemma}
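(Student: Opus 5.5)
The plan is to follow the proof of \cite[Lemma 4.12]{cai2023p3em} step by step, checking at each step that it survives the passage from $\plhol$ to $\plhol^k$. In the original argument, the three interpolated ternary degenerate signatures $\pin_0^{\otimes 3}$, $\pin_1^{\otimes 3}$, $\pin_2^{\otimes 3}$ on the RHS are first used to produce, by planar gadgets, a binary symmetric signature of the form $[a',1,b']$ on the LHS (or a reduction to it). One then invokes either the planar Holant dichotomy for $[a,1,b]\mid =_3$ or the planar $[1,a,1,a]$ / $[1,a,a,1]$ cases. I would redo exactly these constructions. Each gadget merges a $\pin_i^{\otimes 3}$ with three copies of $[1,a,b,c]$ into three binary signatures sitting together in a planar neighbourhood. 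This is pure gadget construction, so the $\hol$ value is preserved and hence so is its $k$-th power.

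The hardness then comes from our already generalized statements. These are Theorem~\ref{thm:hol23thm1} in its planar form, the $\plhol^k$ version of \cite[Lemma 4.2]{cai2023p3em}, and the $\hol^k$ lemmas derived from \cite[Lemmas 2.4, 2.5]{cai2023hol33}. Any auxiliary interpolations used inside the original proof (for instance, to realize a unary or binary signature on the LHS from a recurrence) are handled as in Lemma~\ref{lem:hol23lem2}. Raising the interpolation polynomial to the $k$-th power only multiplies the degree by $k$. It is therefore enough to use $kn+1$ (or $kn+2$) distinct points instead of $n+1$. The eigenvalue-ratio-not-a-root-of-unity condition from \cite[Lemma 4.13]{cai2023p3em} is unchanged, since it does not depend on the degree. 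In cases where only a single coefficient is needed, such as the top coefficient, I would retrieve $c_{kn}'=c_n^k$ as in the $\plhol^k$ version of \cite[Lemma 4.2]{cai2023p3em}.

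The tractable side is routine. If $[1,a,b,c]$ is affine or degenerate, $\plhol([1,a,b,c]\mid =_3)$ is in FP by \cite{cai2023p3em}, and raising the answer to the $k$-th power is trivial.

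The main obstacle I expect is the case analysis inside the original proof. Some branches there may end in a reduction from a problem whose hardness was justified through non-negativity, or through a result that is not literally a gadget or an interpolation (e.g.\ a cited real-valued dichotomy). For $\hol^k$ such a step does not automatically transfer, as we already saw with $[0,1,-1]$ and $[a,1,a]$. For each such branch I would first check whether the target signature is non-negative, in which case $\hol^k\equiv_T\hol$. Otherwise I would redirect the chain, via one more interpolation of a non-negative binary signature, to a hard non-negative problem such as \prob{\#3R-Vertex Cover}, mirroring the treatment of $[a,1,a]$ above. Throughout I would keep track that every modification preserves planarity, so that the P3EM theorem is only used at the start of the chain.
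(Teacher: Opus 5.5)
Your proposal has a genuine gap: the step you call \emph{pure gadget construction} is where this lemma's only real difficulty lies, and your argument does not handle it. Attaching three copies of $[1,a,b,c]$ to one $\pin_0^{\otimes 3}$ vertex does not give you the binary signature $[1,a,b]$ as a freely usable gadget. It gives the arity-$6$ signature $[1,a,b]^{\otimes 3}$, with its six dangling edges in a fixed planar arrangement. The hypothesis only provides $\pin_i^{\otimes 3}$, never a single $\pin_i$. So to simulate an arbitrary planar instance of $\plhol^k([1,a,b]\mid =_3)$, you must partition all of its $[1,a,b]$-edges into triples. Each triple must be joinable to a common new $\pin_0^{\otimes 3}$ vertex without destroying planarity. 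That this is always possible is exactly the P3EM theorem \cite[Theorem 3.2]{cai2023p3em}. It applies after removing incompatible components, whose values you compute directly, raise to the $k$-th power, and multiply back in. This is the one nontrivial ingredient of the paper's proof, which says explicitly that this lemma is where P3EM is used. Your closing remark that P3EM is used only \emph{at the start of the chain} misplaces it: the start of the chain is this very reduction. Without this step, your construction does not show that $\plhol^k([1,a,b]\mid =_3)$ reduces to $\plhol^k([1,a,b,c]\mid\{=_3,\pin_0^{\otimes 3}\})$.

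Once that step is supplied, the rest of your outline works. You get $\plhol^k([1,a,b]\mid =_3)\leq_T\plhol^k([1,a,b,c]\mid\{=_3,\pin_0^{\otimes 3}\})\leq_T\plhol^k([1,a,b,c]\mid =_3)$, and the analogues with $\pin_1,\pin_2$. These let you invoke the planar part of Theorem~\ref{thm:hol23thm1}. The remaining case analysis, which identifies affine or degenerate $[1,a,b,c]$ as the only tractable outcomes, is unaffected by passing to $k$-th powers. The concerns in your last paragraph about non-negativity, extra interpolations and root-of-unity conditions are not where the difficulty lies here.
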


This is where we can use the P3EM theorem. We assume $\plhol^k([1, a, b]\mid\allowbreak =_3)$ is \#P-hard as an example. Given any instance of this problem, due to the P3EM theorem, the graph admits a P3EM after incompatible components are removed (the $\hol$ value of those can be calculated separately in polynomial time). We can then substitute every $[1, a, b]$ with a $[1, a, b, c]$ connected with a $\pin_0$ ($[1, 0]$), and join every three $\pin_0$ that are matched together as $\pin_0^{\otimes 3}$. Then by interpolation, the problem is reduced to $\plhol^k([1, a, b, c]\mid =_3)$.
\bigbreak
The results from \cite[Lemma 4.14]{cai2023p3em} to \cite[Theorem 5.1]{cai2023p3em} can all be generalized to $\hol^k$ if we use the same approach of interpolating the 3\textsuperscript{rd} tensor power of unary signatures first and realizing binary signatures through the P3EM theorem at the end. We show a chain of reduction for the case where $\pin_0$, $\pin_1$ and $\pin_2$ can be interpolated on the RHS in \cite[Lemma~4.14]{cai2023p3em} as an example:
\begin{align*}
&\plhol^k([1, a, b, c]\mid\{=_3, \pin_0^{\otimes 3}, \pin_1^{\otimes 3}, \pin_2^{\otimes 3}\})\\
\leq_T{}&\plhol^k([1, a, b, c]\mid\{=_3, [y^2+yb, ya+c]^{\otimes 3}\})\\
\leq_T{}&\plhol^k(\{[1, a, b, c], [y, 1]^{\otimes 3}\}\mid =_3)\\
\leq_T{}&\plhol^k([1, a, b, c]\mid =_3).
\end{align*}
We also notice that in the proof of \cite[Theorem 4.17]{cai2023p3em}, for the case of $k=1$ and $a=-1$, i.e. the signature being $[1, -1, 1, 0]$, the gadget $G_3$ produces $[1, 0, -1, 2]$, which is in FP, but this is not enough to derive the complexity of $[1, -1, 1, 0]$. We now use $G_3$ again, substituting the central vertex with $[1, 0, -1, 2]$ but keeping other signatures invariant ($[1, -1, 1, 0]$ and $=_3$). We get $[0, 1, 0, -1]$, but this is again in FP. We then repeat this process, substituting the central vertex of $G_3$ with $[0, 1, 0, -1]$ and keeping other signatures invariant. We now get $[2, -3, 3, -3]$, which, according to \cite[Theorem 4.16]{cai2023p3em}, is \#P-hard, therefore $[1, -1, 1, 0]$ is \#P-hard.

\cite[Theorem 5.2]{cai2023p3em} can be trivially generalized, as the signatures are non-negative. Therefore, we state the final result:

\begin{theorem}[from {\cite[Theorem 2.1]{cai2023p3em}}]\label{thm:p3emthm21}
$\plhol^k(f\mid =_3)$, where $f=[q_0, q_1, q_2, q_3]$, $q_i\in\mathbb{Q}$, $i=0, 1, 2, 3$ and $k\in\mathbb{Z}^+$, is \#P-hard except in the following cases, for which the problem is in FP:
\begin{enumerate}
    \item $f$ is degenerate, i.e. there exists a unary signature $u$ such that $f=u^{\otimes 3}$;
    \item $f$ is Gen-Eq, i.e. $f=[a, 0, 0, b]$ for some $a, b$;
    \item $f$ is affine, i.e. $f=[a, 0, \pm a, 0], [0, a, 0, \pm a], [a, -a, -a, a], [a, a, -a, -a]$ for some $a$;
    \item $f=[a, b, b, a]$ or $[a, b, -b, -a]$ for some $a, b$;
    \item $f=[3a+b, -a-b, -a+b, 3a-b]$ for some $a, b$.
\end{enumerate}
\end{theorem}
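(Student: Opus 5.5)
The statement to prove is Theorem \ref{thm:p3emthm21}, the planar $\hol^k$ dichotomy. The plan is to combine the generalized lemmas of this subsection with the general-graph dichotomy Theorem \ref{thm:hol33thm81}, following the case structure of \cite[Theorem 2.1]{cai2023p3em}.

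\textbf{Tractability.} Each of cases 1--5 is in FP for $\plhol(f\mid =_3)$ by \cite[Theorem 2.1]{cai2023p3em}. Since $(\hol_\sg)^k$ is obtained from $\hol_\sg$ by one exponentiation, $\plhol^k\leq_T\plhol$, so these cases stay in FP for every $k$.

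\textbf{Normalization.} For hardness, assume $f$ is in none of the five classes. If $q_0=q_3=0$ or similar boundary situations occur, handle them directly: $[0,1,0,0]$ is non-negative, so $\plhol^k\equiv_T\plhol$ for it. Otherwise, flip $0\leftrightarrow 1$ if needed and scale (a global factor raised to the $k$th power) to write $f=[1,a,b,c]$. Then follow the case split of \cite{cai2023p3em}.

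\begin{itemize}
\item \textbf{$a=0$ or $ab=0$ subcases.} Reduce via the generalized \cite[Lemma 4.14]{cai2023p3em} through \cite[Theorem 5.1]{cai2023p3em}, or to Theorem \ref{thm:hol23thm1}, possibly after a gadget.
\item \textbf{$c=ab$.} Here $f$ is degenerate or reduces to a binary signature. Use Theorem \ref{thm:hol23thm1}.
\item \textbf{Special families.} For $[1,a,a,1]$, $[1,a,-1-2a,2+3a]$ and the exceptional rational conditions \cite[(4.2)]{cai2023p3em}, use the generalized \cite[Lemma 4.2]{cai2023p3em}. That lemma covers $[1,a,1,a]$ via crossover interpolation, retrieving $c'_{kn}$. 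Pair it with the planar part of Theorem \ref{thm:hol23thm1}, which leaves exactly classes 4 and 5 tractable.
\item \textbf{Generic case.} Use the generalized \cite[Lemma 4.10]{cai2023p3em} to interpolate $[y,1]^{\otimes 3}$ on the LHS and, where possible, $[1,x]^{\otimes 3}$ on the RHS. Then build $\pin_0^{\otimes 3},\pin_1^{\otimes 3},\pin_2^{\otimes 3}$ and apply the generalized \cite[Lemma 4.12]{cai2023p3em}. This gives \#P-hardness unless $f$ is affine or degenerate.
\end{itemize}

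\textbf{Role of the P3EM theorem.} In every chain, the P3EM theorem is invoked only at the start. It turns a hard binary problem $\plhol^k([1,a,b]\mid =_3)$ into one where unaries come in matched triples $\pin^{\otimes 3}$. This keeps every interpolation a statement about the $k$th power of a single Holant polynomial.

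\textbf{Main obstacle.} The hard step is checking that every interpolation in \cite{cai2023p3em} survives raising to the $k$th power. The generic Vandermonde-type interpolations are fine: the polynomial $(\sum c_i x^i)^k$ has degree $kn$, so we query $kn+1$ points and need the eigenvalue ratio not to be a root of unity. That condition must be added to \cite[Lemma 4.13]{cai2023p3em}.

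The delicate points are where the original proof retrieves a specific coefficient rather than a value; there we recover $c_{kn}'=c_n^k$ instead. Also delicate are places where the original argument appeals to positivity or to a known real-valued result. In those places we substitute an explicit gadget chain ending in a non-negative hard signature, as done above for $[1,-1,1,0]$ via repeated use of $G_3$ yielding $[2,-3,3,-3]$.

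I expect the case $[1,-1,1,0]$ and similar sign-sensitive corner cases to require the most care.
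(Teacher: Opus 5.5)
This is essentially the paper's argument: you lift \cite{cai2023p3em} lemma by lemma with the same adjustments, namely degree-$kn$ interpolation with the root-of-unity hypothesis added to \cite[Lemma 4.13]{cai2023p3em}, recovering $c'_{kn}=c_n^k$ in the crossover interpolation for $[1,a,1,a]$, using only cubed unaries so that the P3EM theorem enters only at the head of each chain, treating non-negative signatures directly, and patching $[1,-1,1,0]$ by iterating $G_3$. Your case bookkeeping has slips you should fix, though: $[1,a,a,1]$ and $[1,a,-1-2a,2+3a]$ lie in the planar-tractable classes 4 and 5 for every $a$ and need no hardness argument (\cite[Lemma 4.2]{cai2023p3em} treats only $[1,a,1,a]$); $c=ab$ does not make $f$ degenerate (e.g.\ $[1,a,1,a]$) but only makes the straddled gadget $\begin{pmatrix}1&b\\a&c\end{pmatrix}$ degenerate; and $[2,-3,3,-3]$ is not non-negative, since its hardness comes from the generalized \cite[Theorem 4.16]{cai2023p3em} and the $G_3$ iteration repairs an inconclusive step of the original proof rather than a sign issue.
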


Theorem \ref{thm:hol33thm81} and Theorem \ref{thm:p3emthm21} make up Theorem \ref{thm:holk}.

\section{Code for the Ladder Gadget}\label{app:code}
This is a piece of Mathematica\texttrademark\ code for calculating the initial gadgeture of the ladder gadget. We use \texttt{v1} to \texttt{v4} to refer to the vertices $v_{0, 1}$ to $v_{0, 4}$, and \texttt{v5} to \texttt{v12} to refer to the vertices in $V_I$ in a left-to-right, top-to-bottom order.

\begin{verbatim}
f[x_, y_, z_, w_] := 1 - (1 - x) (1 - y) (1 - z) (1 - w)
gds[v1_, v2_, v3_, v4_, v5_, v6_, v7_, v8_, v9_, v10_, v11_, v12_] := 
    f[v1, v3, v4, v7] f[v2, v3, v4, v10] f[v5, v6, v7, v8] f[v5, v6,
    v9, v10] f[v3, v5, v7, v11] f[v5, v8, v9, v11] f[v6, v8, v9, v12] 
    f[v4, v6, v10, v12] f[v7, v8, v11, v12] f[v9, v10, v11, v12]
g0000 = Sum[gds[0, 0, 0, 0, v5, v6, v7, v8, v9, v10, v11, v12],
    {v5, 0, 1}, {v6, 0, 1}, {v7, 0, 1}, {v8, 0, 1}, {v9, 0, 1},
    {v10, 0, 1}, {v11, 0, 1}, {v12, 0, 1}]
\end{verbatim}

Other values such as \texttt{g0001} can be obtained by changing the assignment to \texttt{v1} through \texttt{v4} accordingly.

\end{document}